\newtheorem{proposition}{Proposition}
\newtheorem{definition}{Definition}
\newtheorem{example}{Example}
\newtheorem{theorem}{Theorem}
\newcommand{\titlepa}[1]{{\fontsize{#1}{#1}\selectfont \text{$(\mathbf{p},\alpha)$}}}
\newcommand{\mc}[1]{#1} 
\newcommand{\todo}[1]{[\textcolor{red}{{\sc Todo:} #1}]} 
\newcommand{\nop}[1]{}
\DeclarePairedDelimiter\floor{\lfloor}{\rfloor}
\begin{document}
\sloppy

\title{Finding Theme Communities from Database Networks}



%
%
%
%

\numberofauthors{3} 

\author{
%
%
\alignauthor
Lingyang Chu\\
       \affaddr{Simon Fraser University}\\
       \affaddr{Burnaby, Canada}\\
       \email{lca117@sfu.ca}
\alignauthor
Zhefeng Wang\\
       \affaddr{Huawei Technologies}\\
       \affaddr{China}
       \email{wangzhefeng@huawei.com}
\alignauthor
Jian Pei\\
       \affaddr{Simon Fraser University}\\
       \affaddr{Burnaby, Canada}\\
       \email{jpei@cs.sfu.ca}
\and  
\alignauthor 
Yanyan Zhang\\
       \affaddr{Simon Fraser University}\\
       \affaddr{Burnaby, Canada}\\
       \email{yanyanz@sfu.ca}
\alignauthor 
Yu Yang\\
       \affaddr{Simon Fraser University}\\
       \affaddr{Burnaby, Canada}\\
       \email{yya119@sfu.ca}
\alignauthor 
Enhong Chen\\
       \affaddr{Univ. of Science and Tech. of China (Hefei, China)}\\
       \email{cheneh@ustc.edu.cn}
}

\maketitle

\begin{abstract}
Given a database network where each vertex is  associated with a transaction database, we are interested in finding theme communities. 
Here, a theme community is a cohesive subgraph such that a common pattern is frequent in all transaction databases associated with the vertices in the subgraph.
Finding all theme communities from a database network enjoys many novel applications.  However, it is challenging since even counting the number of all theme communities in a database network is \#P-hard.
Inspired by the observation that a theme community shrinks when the length of the pattern increases, we investigate several properties of theme communities and develop TCFI, a scalable algorithm that uses these properties to effectively prune the patterns that cannot form any theme community.
We also design TC-Tree, a scalable algorithm that decomposes and indexes theme communities efficiently.  
Retrieving a ranked list of theme communities from a TC-Tree of hundreds of millions of theme communities takes less than 1 second.
Extensive experiments and a case study demonstrate the effectiveness and scalability of TCFI and TC-Tree in discovering and querying meaningful theme communities from large database networks.
\end{abstract}

\maketitle

\section{Introduction}
\label{sec:intro}

Finding communities from large networks is a fundamental data mining problem that enjoys various applications, such as targeted advertisement in e-commerce networks~\cite{assal2010multi,huang2013commerce}, friend recommendation in social networks~\cite{huang2015method,mislove2010you} and research group discovery in co-author networks~\cite{yang2015defining}.

\nop{
By modelling real world networks as \emph{simple networks} that contain only vertices and edges, conventional methods, such as graph partitioning~\cite{shi2000normalized,ng2002spectral}, dense subgraph mining~\cite{chu2015alid}, and truss detection~\cite{cohen2008trusses,wang2012truss}, detect communities by finding sets of vertices that have dense edge connections~\cite{fortunato2010community}.
}

Conventional community detection methods, such as graph partitioning~\cite{shi2000normalized,ng2002spectral}, dense subgraph mining~\cite{chu2015alid,pavan2007dominant} and truss detection~\cite{cohen2008trusses,wang2012truss}, model real world networks as \emph{simple networks} that contain only graph structures.
To enhance the model of simple network, \emph{vertex attributed networks} further profile the attributes of each vertex by a set of items~\cite{steinhaeuser2008community,zhou2009graph}.
Community detection in vertex attributed networks aims to find communities such that all vertices in the same community contain the same set of items and are densely connected~\cite{berlingerio2013abacus,prado2013mining}.
Due to the homogeneity of vertex attributes, these communities are usually more meaningful and accurate~\cite{hu2013utilizing,moser2009mining}.

\nop{
These methods detect communities by finding sets of vertices that are densely connected~\cite{fortunato2010community}.
}

\nop{To characterize vertex properties, \emph{vertex attributed network} models network structure and further profiles each vertex by a set of attributes~\cite{gunnemann2011db,steinhaeuser2008community,zhou2009graph}.}

However, in most real world networks, the items of a vertex are not equally important and often do not co-occur all together. 
The valuable vertex information, such as item co-occurrence and the frequency of co-occurring items, is much beyond the limited descriptive power of the single set of items associated with the vertex.
To tackle this problem, in this paper, we propose to model real world networks as \emph{DataBase Networks} (DBN), where every vertex is associated with a transaction database named \emph{vertex database}. 

DBN is a natural descriptive model for many real world networks.
For example, in e-commerce networks where each vertex represents a customer, every set of items purchased together by the customer is recorded as a transaction, and a vertex database stores all transactions of the customer.
In location-based social networks where each vertex represents a user, the set of locations that the user checks in during a period (e.g., a day, a week or a month) can be recorded as a transaction, and the user's all transactions form a vertex database.
The co-author network can also be enhanced by associating with each author a vertex database, where each transaction stores the keywords in an article published by the author.

The vertex databases of DBN accurately describe the co-occurrences and frequencies of different sets of items. A set of co-occurring items is called a \emph{pattern}~\cite{agrawal1994fast,han2000mining}, and a pattern with high frequency in a vertex database dominates the properties of the vertex.
Therefore, in DBNs, it is more interesting to consider item co-occurrences and pattern frequencies, and find communities such that all vertices in the same community share the same dominant pattern and are densely connected. We call such communities \emph{theme communities}, where each \emph{theme} is the dominant pattern of a community.

For example, in a co-author network, authors are vertices and two authors are linked if they collaborated before.  Each author is associated with a transaction database where each transaction is the set of keywords in an article published by the author. In this DBN, a pattern is a set of keywords that describes a research topic, and a theme community represents a group of closely collaborating authors who frequently publish papers in the same research topic.

\nop{
\todo{Instead of using two examples, you may want to clearly explain one.  What are the edges and what are the transactions in a node? What is a pattern? What is a theme community?}\mc{For example, in a location-based social network, a pattern is a set of locations, and a theme community represents a group of best friends who frequently visit the same set of locations.
In co-author networks, a pattern is a set of keywords that specifies a research topic, and a theme community represents a group of closely collaborating authors who frequently publish papers in the same research topic.
}
}
\nop{
a theme community in this database network represent a group of friends who frequently visit the same set of locations.

\todo{It may help an author to understand the idea of theme communities better if we can give a simple yet concrete example here.  The above description is a little bit too abstract and dry.}
}

Can we adapt existing methods straightforwardly to find theme communities in DBNs? Unfortunately, the answer is no due to the following challenges.

First, a vertex database may contain an exponential number of patterns.
Since the conventional methods that work on simple networks can only detect communities of one pattern at a time, it is computationally intractable to call those methods for each of an exponential number of patterns.

Second, the existing methods that work on vertex attributed networks only consider the case where each vertex is associated with a single set of items. In such a case, all items of a vertex occur together and thus the pattern frequencies for all patterns of the same vertex are trivially the same.
Therefore, these methods cannot distinguish the different frequencies of different patterns in DBNs. 

\nop{
Therefore, it is computationally intractable to enumerate theme communities by conventional community detection methods that work in simple networks, because these methods only detect the theme communities of one pattern at a time.
}
\nop{
In a brute force manner, we can apply conventional community detection methods that work in simple networks to detect the theme communities of one pattern at a time. However, since 
Since the community detection methods that work in simple networks only detect the theme communities of one pattern at a time, we have to perform community detection for each of the exponential number of patterns, which is computationally intractable.
}

\nop{\todo{The motivation is unclear.  Can you come up with a concrete application scenario where a data warehouse of theme communities is useful?} }

Last but not least, theme community finding can be a fast query answering service -- different users can easily use the service to efficiently explore a DBN and quickly retrieve theme communities of their own interest in real time. 
Providing this service requires enumerating and indexing all theme communities in a DBN, which is challenging because a large DBN usually contains a huge number of arbitrarily overlapping theme communities, and even counting the number of theme communities is \#P-hard.

\nop{
Efficiently enumerating and indexing all theme communities are essential for this service.
However, this is a challenging task because a large database network usually contains a huge number of arbitrarily overlapping theme communities, and even counting the number of theme communities is \#P-hard.
}

\nop{
Last but not least, efficiently enumerating and indexing all theme communities are essential for fast query answering. However, this is a challenging task because a large database network usually contains a huge number of arbitrarily overlapping theme communities, and even counting the number of theme communities is \#P-hard.
}
\nop{
Efficiently enumerating all overlapping theme communities and indexing them for fast query answering are challenging. 
}

\nop{
However, a user usually does not have the exact theme or community in mind when he starts to find theme communities of his interest. 

However, 

Therefore, e}

\nop{
Simple networks that consist of only vertices and edges are widely adopted to represent various real life networks.

Large networks are one of the most important form of data in the era of big data.

Simple networks that consist of only vertices and edges are widely used to represent various real life networks, such as e-commerce networks, social networks and collaboration networks.
In simple networks, a community naturally exists as a set of vertices that have strong and dense edge connections between each other~\cite{huang2017attribute}.
Community detection in simple networks is a fundamental problem that has been extensively studied for decades.

}

In this paper, we tackle the problem of finding theme communities from DBNs and make the following contributions.

First, we introduce the notion of DBN to model real networks in a natural and expressive manner. A DBN contains rich information about item co-occurrencess, pattern frequencies and graph/subgraph structures.

\nop{This presents novel opportunities to find meaningful theme communities.}

Second, we motivate the novel problem of finding theme communities from DBNs and prove that even counting the number of theme communities in a DBN is \#P-hard.

Third, we propose to find theme communities by enumerating \emph{maximal $(\mathbf{p},\alpha)$-trusses} in DBNs.
A $(\mathbf{p},\alpha)$-truss is a subgraph of a DBN such that the vertex databases of all vertices in the subgraph contain pattern $\mathbf{p}$, and the cohesion of every edge in the subgraph passes a non-negative threshold $\alpha$. Here, the cohesion of an edge incorporates the rich information about item co-occurrencess and pattern frequencies comprehensively, and is closely related to the structure of the $(\mathbf{p},\alpha)$-truss.

We propose a greedy algorithm and two effective pruning methods to enumerate maximal $(\mathbf{p},\alpha)$-trusses.
In our experiments, the pruning methods reduce the time cost of the greedy algorithm by over two orders of magnitude without any sacrifice in detection accuracy. \nop{Compared with what?}

\nop{
We investigate and apply several useful properties of maximal $(\mathbf{p},\alpha)$-truss to design two effective detection methods.
 In experiments, the detection methods reduce the time cost by more than two orders of magnitudes without losing the detection accuracy of theme communities.
}
Fourth, we advocate the construction of a data warehouse of maximal $(\mathbf{p},\alpha)$-trusses. 
To facilitate indexing and query answering in the data warehouse, we show that a maximal $(\mathbf{p},\alpha)$-truss can be efficiently decomposed and stored in a linked list. 
We use the decomposition to design an efficient indexing tree, and develop a query answering method that takes less than 1 second to retrieve a ranked list of theme communities from the indexing tree storing hundreds of millions of theme communities. Moreover, to ensure usability and avoid user disappointment due to queries that retrieve no valid theme community, we develop a query recommendation method that efficiently explores the indexing tree to recommend to the user a ranked list of new queries, which are highly similar to the user query, and can retrieve meaningful theme communities with large cohesiveness.
\nop{and can also recommend to a user a ranked list of meaningful new queries. \todo{What are those new queries about?  These are not mentioned before.}}

\nop{
More often than not, the user-provided query pattern $\mathbf{q}$ could be inappropriate, that is, the cohesiveness and size of the retrieved theme communities may be small or the answer to the query could be $\mathcal{R}_\mathbf{q}=\emptyset$.
In this case, we explore the TC-Tree to recommend the user a rank list of new query patterns, denoted by $\mathcal{U}=\mathbf{q}_1, \ldots, \mathbf{q}_k$. 
}

Last, we report extensive experimental results that demonstrate the accuracy and efficiency of the proposed methods in the enumeration, indexing and query answering of theme communities.
A case study shows that, by exploring the indexing tree that stores hundreds of millions of theme communities in a large co-author network, 
the query recommendation method quickly discovers meaningful research topics from an exponential number of combinations of user-interested keywords, and the query answering method efficiently retrieves the communities of closely collaborating scholars who frequently publish papers in those research topics discovered.

\nop{
 finding theme communities discovers meaningful hot research topics, as well as the corresponding groups of closely collaborating scholars who frequently publish papers in the same research topics.
}

\nop{
all scholars in the group closely collaborate with each other and every scholar in the group frequently publish papers in the same topic.}
\nop{
who share the same research interest. \todo{This claim is a little bit weak.  If one is only interested in finding common research interest, she does not have to find theme communities.}
}

The rest of the paper is organized as follows. We review related works in Section~\ref{Sec:rw} and formulate the theme community finding problem in Section~\ref{sec:prob}. We present a baseline method and a maximal $(\mathbf{p},\alpha)$-truss detection method in Section~\ref{sec:baseline}. We develop our major theme community finding algorithms in Section~\ref{sec:algo}, and the indexing and querying answering algorithms in Section~\ref{sec:index}. We report a systematic empirical study in Section~\ref{sec:exp} and conclude the paper in Section~\ref{sec:con}.

\nop{
Third, we formulate the task of finding theme communities in database network as finding maximal $(\mathbf{p},\alpha)$-trusses.
A \emph{maximal $(\mathbf{p},\alpha)$-truss} is a cohesive subgraph in a \emph{theme network}, which is induced from a database network by retaining all vertices containing the same pattern. 

Since each pattern corresponds to a unique theme network, $(\mathbf{p},\alpha)$-truss naturally allows arbitrary overlaps between theme communities with different themes.

Second, $(\mathbf{p},\alpha)$-truss smoothly incorporates item co-occurrence, pattern frequency and graph cohesiveness by the proposed edge \emph{cohesion}. This enables us to simultaneously discover both theme and community without knowing any of them in advance.
}

\section{Related Works}
\label{Sec:rw}
To the best of our knowledge, systematically finding theme communities from DBNs is novel and has not been formulated or tackled in literature.
Broadly, it is related to frequent pattern mining, truss detection and vertex attributed network clustering.


Frequent pattern mining is to find frequent patterns from a transaction database.  Some typical methods include
Apriori~\cite{agrawal1994fast} and 
FP-Growth~\cite{han2000mining}. 
Since frequent pattern mining methods do not handle densely connected graph structures, they cannot find communities in DBNs.

\nop{
\subsection{Frequent Pattern Mining}
Frequent pattern mining aims to find frequent patterns from a transactional database.
To list a few well known methods, Agrawal~\emph{et~al.}~\cite{agrawal1994fast} proposed Apriori to find frequent patterns by candidate generation;
Han~\emph{et~al.}~\cite{han2000mining} proposed the FP-Growth algorithm that avoids candidate generation by FP-Tree;
Yan~\emph{et~al.}~\cite{yan2002gspan} proposed gSpan to find the subgraphs that are frequently contained by a graph database.
The graph database is a set of graphs, which is substantially different from the DBN.

Most frequent pattern mining methods do not take a network as input. Therefore, they cannot find theme communities from a network of vertex databases.
}

Truss detection aims to detect $k$-trusses from unweighted simple networks.
Cohen~\cite{cohen2008trusses} defined $k$-truss as a subgraph $S$ where each edge is contained in at least $k-2$ triangles.
As demonstrated by many studies~\cite{cohen2008barycentric,cohen2008trusses,cohen2009graph}, $k$-truss naturally models cohesive communities in social networks and is elegantly related to some other graph structures, such as $k$-core~\cite{seidman1983network} and $k$-clique~\cite{luce1950connectivity}.

\nop{ and $k$-plex~\cite{seidman1978graph}}

The elegance of $k$-truss attracts much research attention. 
Wang~\emph{et~al.}~\cite{wang2012truss} proposed two memory efficient methods to find $k$-trusses for all possible values of $k$ in unweighted simple networks. 
Huang~\emph{et~al.}~\cite{huang2016truss} proposed $(k,\gamma)$-truss to extend the concept of $k$-truss from deterministic networks to probabilistic networks.
Huang~\emph{et~al.}~\cite{huang2014querying} designed an online community search method to query $k$-trusses by vertices. 

The above methods do not consider attributes of vertices, thus the detected communities may be hard to interpret due to the heterogeneity of vertex attributes~\cite{huang2017attribute}.
One may also wonder whether we can enumerate theme communities by finding $k$-trusses in each of the simple networks induced by a pattern. However, this is impractical because a DBN may contain an exponential number of patterns.

\newcommand{\parawidth}{56mm}
\begin{figure*}[t]
\centering
\subfigure[Database network]{\includegraphics[width=\parawidth]{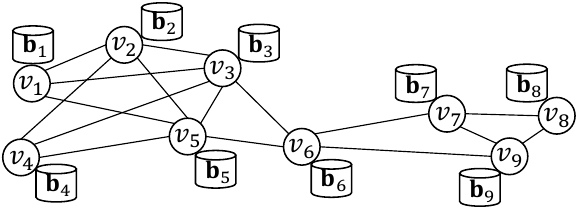}}
\subfigure[Theme network $G_{\mathbf{p}}$]{\includegraphics[width=\parawidth]{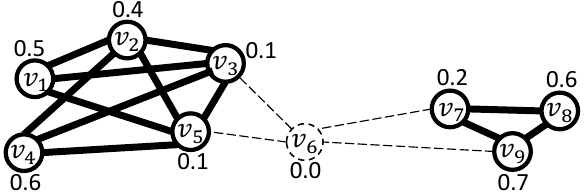}}
\subfigure[Theme network $G_{\mathbf{q}}$]{\includegraphics[width=\parawidth]{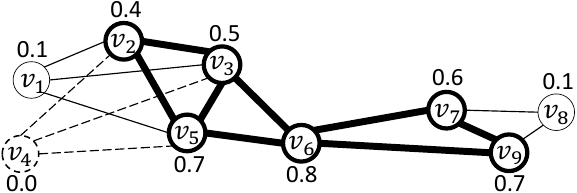}}

\vspace{-2mm}
\caption{A toy example of DBN, theme network and theme community. The pattern frequencies are labeled beside each vertex. The theme communities marked bold in (b) are valid when $\alpha\in [0, 0.2)$. The theme community marked in bold in (c) is valid when $\alpha\in[0.2, 0.4)$.}
\label{Fig:toy_example}
\end{figure*}

\nop{
A vertex attributed network is a special database network such that each vertex database contains only one transaction.
Many methods were proposed to find communities such that all vertices in the same community contain the same pattern.
}

Vertex attributed network clustering methods aim to find communities such that all vertices in the same community contain the same pattern and are densely connected.
ABACUS~\cite{berlingerio2013abacus} finds multi-dimensional communities by mining frequent itemsets.
CoPaM~\cite{moser2009mining} uses pruning methods to find maximal cohesive communities.
Prado~\emph{et~al.}~\cite{prado2013mining} designed interestingness measures to find cohesive communities.
Moosavi~\emph{et~al.}~\cite{moosavi2016community} used frequent pattern mining to find cohesive groups of users sharing similar features.
Huang~\emph{et~al.}~\cite{huang2017attribute} formulated the ATC problem, which finds a community containing a query vertex.
The ATC problem is NP-hard~\cite{huang2017attribute} and is substantially different from our problem, because our goal is to enumerate all theme communities, and even counting the number of theme communities in a DBN is \#P-hard.
There are also effective methods \nop{\todo{For what?} based on }that detect communities in vertex attributed networks by
graph weighting~\cite{steinhaeuser2008community, cruz2011semantic}, 
structural embedding~\cite{combe2012combining, dang2012community}, 
statistical inference~\cite{balasubramanyan2011block,yang2013community} and
subspace clustering~\cite{gunnemann2011db,wang2016semantic}.

\nop{Comparing with a vertex attributed network, the major strength of a database network is the pattern frequency that accurately measures the significance of a pattern.}

Since the above methods cannot distinguish different item co-occurrences and the corresponding pattern frequencies in DBNs, they cannot be directly applied to enumerate theme communities in DBNs.
One may also wonder whether we can transform a DBN into a vertex attributed network by treating every transaction as a set-valued item or by taking the union of all transactions per vertex database. Unfortunately, these transformations are ineffective because they lose the valuable information about item co-occurrencess and pattern frequencies.

\nop{
Besides, we cannot simply transform a vertex database into a single set of items by taking the union of all transactions or regarding each transaction as a set-valued item, because these waste the valuable information about item co-occurrences and pattern frequencies.
}

\nop{
since pattern frequency is not well defined in the set of items associated with each vertex in a vertex attributed network, 

Comparing with a VAN, the major strength of a database network is the rich and useful information about pattern frequencies.
Since the VAN clustering methods ignore pattern frequency, 
Since the database network contains an exponential number of patterns and we don't know which pattern forms a theme community, enumerating all theme communities in the database network requires to perform community detection for each of the exponential number of patterns, which is computationally intractable.
}

\section{Problem Definition}
\label{sec:prob}

In this section, we first introduce the notions of DBN, theme network and theme community, and then formalize the theme community finding problem.

\subsection{Database Network and Theme Network}
Let $S=\{s_1, \ldots, s_m\}$ be a set of items. An \emph{itemset} $\mathbf{x}$ is a subset of $S$.
A \emph{transaction} $\mathbf{t}$ is an itemset. Transaction $\mathbf{t}$ is said to \emph{contain} itemset $\mathbf{x}$ if $\mathbf{x}\subseteq \mathbf{t}$. The \emph{length} of transaction $\mathbf{t}$, denoted by $|\mathbf{t}|$, is the number of items in $\mathbf{t}$.
A \emph{vertex database} $\mathbf{b}=\{\mathbf{t}_1, \ldots, \mathbf{t}_h\}$ ($h\geq 1$) is a multi-set of transactions, that is, an itemset may appear multiple times as transactions in a vertex database.

A \emph{database network} (DBN) is an undirected graph denoted by $G=(V,E,B,S)$, where each vertex is associated with a vertex database.
Specifically, $V=\{v_1, \ldots, v_n\}$ is a set of vertices; $E=\{e_{ij}=(v_i, v_j) \mid v_i,v_j \in V\}$ is a set of edges; $B=\{\mathbf{b}_1, \ldots, \mathbf{b}_n\}$ is a set of vertex databases, where $\mathbf{b}_i$ is the vertex database associated with vertex $v_i$; and $S=\{s_1, \ldots, s_m\}$ is the set of items that constitute all vertex databases in $B$. That is, $S=\cup_{\mathbf{b}_i\in B} \cup_{\mathbf{t}\in\mathbf{b}_i} \mathbf{t}$.

Figure~\ref{Fig:toy_example}(a) gives a toy DBN, where the details of vertex databases are omitted due to the limit of space.


A \emph{pattern} is an itemset $\mathbf{p}\subseteq S$~\cite{agrawal1994fast,han2000mining}.
The \emph{length} of $\mathbf{p}$, denoted by $|\mathbf{p}|$, is the number of items in $\mathbf{p}$.
The \emph{absolute frequency} of $\mathbf{p}$ in vertex database $\mathbf{b}_i$ is the number of transactions in $\mathbf{b}_i$ that contain $\mathbf{p}$. 
The \emph{relative frequency} of $\mathbf{p}$ in $\mathbf{b}_i$ is the proportion of such transactions in $\mathbf{b}_i$~\cite{agrawal1994fast,han2000mining}.
Our method works well for both absolute frequency and relative frequency. 
For the sake of clarity, we use relative frequency by default in the rest of the paper, and write the relative frequency of $\mathbf{p}$ in $\mathbf{b}_i$ as $f_i(\mathbf{p})$.

\nop{
Since our method works with both absolute frequency and relative frequency, we indistinguishably denote them by $f_i(\mathbf{p})$ and call $f_i(\mathbf{p})$ the frequency of $\mathbf{p}$ on vertex $v_i$. 
For clearance, 
}

\nop{$f_i(\mathbf{p})$ is also called the frequency of $\mathbf{p}$ on vertex $v_i$.}

Given a pattern $\mathbf{p}$, the \emph{theme network} $G_\mathbf{p}$ is a subgraph induced from $G$ by the set of vertices satisfying $f_i(\mathbf{p}) > 0$, denoted by $G_\mathbf{p}=(V_\mathbf{p}, E_\mathbf{p})$, where 
$V_\mathbf{p}=\{v_i \in V  \mid f_i(\mathbf{p}) > 0\}$ is the set of vertices and 
$E_\mathbf{p}=\{e_{ij}\in E \mid v_i, v_j \in V_\mathbf{p}\}$ is the set of edges.
A subgraph $C_\mathbf{p}$ of $G_\mathbf{p}$ is written as $C_\mathbf{p}\subseteq G_\mathbf{p}$.

\nop{
A \emph{pattern} $p$ is an itemset that is also regarded as the \emph{theme} of the theme network. .
The \emph{frequency} of pattern $p$ on vertex $v_i$ is defined as the proportion of transactions in $d_i$ that contains $p$~\cite{agrawal1994fast,han2000mining}. .
}

\nop{The theme network $G_p$ is induced from database network $G$ by the set of vertices satisfying $f_i(p) > 0$.}

\nop{
\item $F_p=\{f_i(p) \mid v_i \in V_p\}$ is the set of vertex weights, where $f_i(p)$ is the weight of $v_i$.
A higher weight $f_i(p)$ indicates that pattern $p$ is more frequent on $v_i$.
A higher weight  indicates that pattern $p$ is more significant on $v_i$, thus $v_i$ gets a larger weight in the theme network $G_p$.
Later, when applying standard graph operations (e.g., union, intersection, etc) on $G_p$, we treat $G_p$ as a standard unweighted graph and ignore $F_p$.}

Figures~\ref{Fig:toy_example}(b)-(c) show two theme networks induced by different patterns $\mathbf{p}$ and $\mathbf{q}$, respectively. The edges and vertices in dashed lines are not contained in the theme networks.
\nop{\mcout{What are $\mathbf{p}_1$ and $\mathbf{p}_2$, respectively, in this example?}}
\nop{
since $f_6(p_1)=0$, $v_6$ and the edges connected to $v_6$ are removed when inducing $G_{p_1}$. The rest of the vertices and edges consist the theme network induced by $p_1$. For the theme network induced by $p_2$ (see Figure~\ref{Fig:toy_example}(c)), we remove $v_4$ and its connected edges, since $f_4(p_2)=0$.
}

We can induce a theme network by each pattern $\mathbf{p} \subseteq S$. A DBN $G$ can induce at most $2^{|S|}$ theme networks, where $G$ itself is the theme network of $\mathbf{p}=\emptyset$.

\nop{
\newcommand{\NotationTWidth}{65mm}
\begin{table}[t]
\centering\small
\caption{Frequently used notations.}
\label{Table:notations}
\begin{tabular}{|c|p{\NotationTWidth}|}
\hline
\makecell[c]{Notation}       &      \makecell[c]{Description} \\ \hline

$G$
&   The database network. \\ \hline

$G_\mathbf{p}$
&   The theme network induced by pattern $\mathbf{p}$. \\ \hline

$S$
&   The complete set of items in $G$. \\ \hline

$|\cdot|$
&   The set volume operator. \\ \hline

$C^*_{\mathbf{p},\alpha}$
&   The maximal $(\mathbf{p},\alpha)$-truss in theme network $G_\mathbf{p}$ with respect to threshold $\alpha$. \\ \hline

$\mathbb{T}_{\mathbf{p}, \alpha}$
&   The set of theme communities in $C^*_{\mathbf{p},\alpha}$. \\ \hline

$\mathcal{L}_\mathbf{p}$
&   The linked list storing the decomposed results of maximal $(\mathbf{p},\alpha)$-truss $C^*_{\mathbf{p},\alpha=0}$. \\ \hline

\nop{
$f_i(\mathbf{p})$
&   The frequency of pattern $\mathbf{p}$ on vertex $v_i$. \\ \hline
}

\nop{
$eco_{ij}(C_\mathbf{p})$
&   See Definition~\ref{Def:edge_cohesion}. \\ \hline
}

$\epsilon$
&   The threshold of pattern frequency for TCS. \\ \hline

\nop{
$s_{n_i}$
&   The item stored in node $n_i$ in the TC-Tree. \\ \hline
}

\nop{
$\mathcal{L}_{\mathbf{p}_i}$
&   The linked list stored in node $n_i$ in the TC-Tree. \\ \hline
}

\end{tabular}
\end{table}
}

\subsection{Theme Community}
\label{Sec:theme_community}

\nop{\todo{A reader has to have quite good patience to read and understand this subsection to appreciate the solid notion of theme community.  Is it possible that we can deliver the ideas and advantages quickly at the beginning of this subsection?}}

A \emph{theme community} is a subgraph of a theme network such that the vertices form a cohesively connected subgraph. 
We define a theme community by extending the well-defined $k$-truss~\cite{cohen2008trusses} from simple networks to DBN, \mc{such that it naturally models communities that are highly cohesive in both graph structure and pattern frequency}, and is elegantly related to some well-established graph structures such as $k$-core~\cite{seidman1983network} and $k$-clique~\cite{luce1950connectivity}. Moreover, theme communities in different theme networks may arbitrarily overlap with each other, which reflects the application scenarios where a vertex may participate in communities of different themes.

\nop{In this subsection, we first revisit the definition of $k$-truss~\cite{cohen2008trusses} in simple networks, then we extend the basic concepts of $k$-truss to define edge cohesion, $(\mathbf{p}, \alpha)$-truss, maximal $(\mathbf{p}, \alpha)$-truss and theme community in DBNs.
}

\nop{
In this subsection, we define theme community by smoothly extending the well-defined $k$-truss~\cite{cohen2008trusses} in simple networks to database networks.
}

The intuition of $k$-truss is that, if every edge of a community is contained in more triangles, then the vertices in the community are more densely connected~\cite{cohen2008trusses}.
Here, a \emph{triangle}, denoted by $\triangle_{ijk}= \{v_i, v_j, v_k\}$, is a clique containing vertices $v_i$, $v_j$ and $v_k$. 
If a triangle $\triangle_{ijk}$ is a subgraph of a graph $T$, we say $\triangle_{ijk}$ is in $T$.

\nop{
A triangle $\triangle_{ijk}$ is said to be in a subgraph $T$ if $\triangle_{ijk}$ is a subgraph of $T$.
}

In simple networks, Cohen~\cite{cohen2008trusses} first measured the cohesion of an edge $e_{ij}$ in a subgraph $T$ by the number of triangles in $T$ that contain $e_{ij}$; then he defines $k$-truss as a subgraph $T$ such that the cohesion of every edge in $T$ is at least $k-2$. 
Essentially, every triangle $\triangle_{ijk}$ in a subgraph $T$ contributes a weight of $1.0$ to the cohesion of each edge in $\triangle_{ijk}$, and the cohesion of an edge $e_{ij}$ in $T$ is the sum of the weights contributed by all triangles in $T$ that contain $e_{ij}$.

Next, we illustrate how to integrate the rich information about item co-occurrencess and pattern frequencies with the edge cohesion in DBNs.

In a DBN, to incorporate item co-occurrences and the corresponding pattern frequencies, the contribution of a triangle $\triangle_{ijk}$ to the cohesion of each edge in $\triangle_{ijk}$ should be relevant to two factors: 1) a pattern $\mathbf{p}\subseteq S$, that is, a set of co-occurring items; and 2) the frequencies of $\mathbf{p}$ in the vertex databases $\mathbf{b}_i$, $\mathbf{b}_j$ and $\mathbf{b}_k$.

\nop{
the frequencies of a pattern $\mathbf{p}$ are different in different vertex databases. 
Therefore, for any pattern $\mathbf{p}$, the contribution of triangle $\triangle_{ijk}$ in a subgraph $T$ to the cohesion of edge $e_{ij}$ in $T$ should be relevant to the frequencies of $\mathbf{p}$ in the vertex databases $\mathbf{b}_i$, $\mathbf{b}_j$ and $\mathbf{b}_k$.
}

Intuitively, if $\mathbf{p}$ has higher frequencies in $\mathbf{b}_i$, $\mathbf{b}_j$ and $\mathbf{b}_k$, then all vertices in the triangle $\triangle_{ijk}$ are dominated by the pattern $\mathbf{p}$, thus $\triangle_{ijk}$ should contribute a heavier weight to the cohesion of each edge in $\triangle_{ijk}$.
If the frequency of $\mathbf{p}$ is zero in any of $\mathbf{b}_i$, $\mathbf{b}_j$ and $\mathbf{b}_k$, then at least one vertex in $\triangle_{ijk}$ is not associated with $\mathbf{p}$, therefore $\triangle_{ijk}$ should not contribute any weight to the cohesion of any edge of $\triangle_{ijk}$.

Following the above intuition, for a pattern $\mathbf{p}$, we define the contribution of a triangle $\triangle_{ijk}$ to the cohesion of each edge in $\triangle_{ijk}$ as
\begin{equation}\nonumber
\label{Eqn:contribute}
\min{(f_i(\mathbf{p}), f_j(\mathbf{p}), f_k(\mathbf{p}))}.
\end{equation}

\nop{the minimum of $f_i(\mathbf{p}), f_j(\mathbf{p})$ and $f_k(\mathbf{p})$, that is,}

For any pattern $\mathbf{p}$, a triangle $\triangle_{ijk}$ contributes a none-zero weight to the cohesion of its edges if and only if $\triangle_{ijk}$ is in the theme network $G_\mathbf{p}$.
The reason is that if $\triangle_{ijk}$ is not in $G_\mathbf{p}$, then the frequency of $\mathbf{p}$ is zero in at least one of the vertex databases $\mathbf{b}_i$, $\mathbf{b}_j$ and $\mathbf{b}_k$.
In this case, $\triangle_{ijk}$ does not contribute any weight to the cohesion of its edges.
Accordingly, we define the \emph{edge cohesion} in DBNs as follows.

\nop{
only a triangle $\triangle_{ijk}$ in the theme network $G_\mathbf{p}$ can contribute to the cohesion of the edges in $\triangle_{ijk}$.
}

\nop{
of an edge $e_{ij}$ in a subgraph $T$ of a database network as the sum of weights contributed by the triangles in $T$ that contain $e_{ij}$.
}

\nop{
then the triangle $\triangle_{ijk}$ does not contribute any weight to the cohesion of each edge in $\triangle_{ijk}$. This is because the frequency of $\mathbf{p}$ is zero in at least one of the vertex databases $\mathbf{b}_i$, $\mathbf{b}_j$ and $\mathbf{b}_k$

, if a triangle $\triangle_{ijk}$ is not a subgraph of a theme network $G_\mathbf{p}$, then the frequency of $\mathbf{p}$ is zero in at least one of the vertex databases $\mathbf{b}_i$, $\mathbf{b}_j$ and $\mathbf{b}_k$.
In this case, the triangle $\triangle_{ijk}$ does not contribute any weight to the cohesion of each edge in $\triangle_{ijk}$ with respect to the pattern $\mathbf{p}$.
}

\nop{
and define the \emph{edge cohesion} of an edge $e_{ij}$ in a subgraph $T$ of a database network as the sum of weights contributed by the triangles in $T$ that contain $e_{ij}$.
}

\nop{
Therefore, for any pattern $\mathbf{p}$, every triangle $\triangle_{ijk}$ that is not a subgraph of $G_\mathbf{p}$ does not contribute any weight to the cohesion of each edge in $\triangle_{ijk}$.
}
\nop{
Since any triangle that is not contained in a theme network $G_\mathbf{p}$ contains at least one vertex such that the corresponding frequency of the pattern $\mathbf{p}$ is zero.

Obviously, for a triangle $\triangle_{ijk}$, its contribution to the cohesion of each of its edges is zero if the frequency of a pattern $\mathbf{p}$ is zero in at least one of $\mathbf{b}_i$, $\mathbf{b}_j$ and $\mathbf{b}_k$.

 does not contribute any weight to the cohesion of its edges with respect to the pattern $\mathbf{p}$.
}

\nop{a $k$-truss is a subgraph $T$ where each edge in $T$ is contained by at least $k-2$ triangles in $T$~\cite{cohen2008trusses}.}

\nop{
Following the above intuition, we can measure the cohesion of an edge $e_{ij}$ in a subgraph $T$ by the number of triangles in $T$ that contain $e_{ij}$, and define $k$-truss as a subgraph $T$ such that the cohesion of every edge is $T$ is at least $k-2$. 
}
\nop{
 the number of triangles in a subgraph $T$ that contain an edge $e_{ij}$ in $T$ measures the cohesiveness of $e_{ij}$, and a $k$-truss can be defined as a subgraph $T$ such that the cohesiveness of every edge is $T$ is at least $k-2$.
}

\nop{
Alternatively, by viewing the number of triangles in $T$ that contain an edge $e_{ij}$ in $T$ as the cohesiveness of $e_{ij}$, a $k$-truss can be defined as a subgraph $T$ such that the cohesiveness of every edge is $T$ is at least $k-2$. 
}

\nop{

}

\nop{
In simple networks, every triangle $\triangle_{ijk}$ in $T$ contributes a weight of $1.0$ to the cohesion of edge $e_{ij}$ in $T$. 
}

\nop{
As demonstrated by many works~\cite{cohen2008barycentric,cohen2008trusses,cohen2009graph}, $k$-truss is a natural, elegant and useful model for cohesive communities in simple networks.
However, to incorporate the rich information contained in database networks, we have to extend the concept of edge cohesion as follows.
}

\nop{
Therefore, the number of triangles is a natural measure for the cohesiveness of communities.
Inspired by the elegant definition of $k$-truss~\cite{cohen2008trusses}, we use triangle as the basic component of theme community.
}

\nop{
Let $v_k$ be a common neighbour of two connected vertices $v_i$ and $v_j$, then $v_i, v_j$ and $v_k$ form a triangle, denoted by $\triangle_{ijk}= \{v_i, v_j, v_k\}$.
}

\nop{
In database networks, since the frequencies of pattern $\mathbf{p}$ are different on different vertices, the cohesiveness of different triangles with respect to $\mathbf{p}$ should be different.
Therefore, we compute the cohesiveness of triangle $\triangle_{ijk}$ by
\begin{equation}\nonumber
\min{(f_i(\mathbf{p}), f_j(\mathbf{p}), f_k(\mathbf{p}))}
\end{equation}
}

\nop{
In database networks, since the frequencies of pattern $\mathbf{p}$ differ a lot on different vertices, the cohesiveness of different triangles with respect to $\mathbf{p}$ should be different.
}

\nop{
the contribution of of each triangle $\triangle_{ijk}$ in a subgraph $T$ to the edge $e_{ij}$ in $T$ should be different with respect to 
}

\nop{
Cohen~\cite{cohen2008trusses} defined $k$-truss as a subgraph $T$ where each edge in $T$ is contained by at least $k-2$ triangles in $T$. Although $k$-truss naturally models cohesive communities, we cannot define theme communities as $k$-trusses in a theme network $G_\mathbf{p}$, because this does not consider the pattern frequency.

Since every common neighbour of $v_i$ and $v_j$ corresponds to a unique triangle that contains edge $e_{ij}=(v_i, v_j)$, the number of common neighbours of $v_i$ and $v_j$ is exactly the number of triangles that contain $e_{ij}=(v_i, v_j)$.
}

\nop{
A \emph{theme community} is a subgraph of a database network, such that the vertices of the subgraph share the same dominant pattern $\mathbf{p}$ and are densely connected.
Before introducing the formal definition of theme community, we will first present some basic concepts as follows.
}

\nop{
Intuitively, in a good theme community with pattern $\mathbf{p}$, every vertex is expected to satisfy at least one of the following \emph{criteria}:
\begin{enumerate}
\item It has a high frequency of $\mathbf{p}$ in its vertex database.
\item It is connected to a large proportion of vertices in the theme community.
\nop{\item The frequency of $\mathbf{p}$ on every vertex of the theme community is high.}
\end{enumerate}
}

\nop{
The rationale is that, a vertex with a high frequency of $\mathbf{p}$ is a \emph{pattern leader} that strengthens the pattern coherence of the theme community. 
A vertex connecting to many vertices in the community is a \emph{social leader} that strengthens the edge connection in the theme community.
Both pattern leaders and social leaders are important members of the theme community.
Take Figure~\ref{Fig:toy_example}(b) as an example, $v_1, v_2, v_3, v_4$, and $v_5$ form a theme community, where $v_1$ and $v_4$ are pattern leaders, $v_3$ and $v_5$ are social leaders, and $v_2$ is both a pattern leader and a social leader.
}

\nop{
According to the above criterion, in a good theme community with theme $\mathbf{p}$, two cohesively connected vertices $v_i, v_j$ should either have high frequencies of $\mathbf{p}$ or have a large number of common neighbouring vertices in the theme community. 
}

\nop{
The above criteria inspire us to measure the cohesion of edge $e_{ij}=(v_i, v_j)$ by considering the number of common neighbour vertices between $v_i$ and $v_j$, as well as the frequencies of $\mathbf{p}$ on $v_i, v_j$ and their common neighbor vertices. 
The rationale is that, in a good theme community with theme $\mathbf{p}$, two connected vertices $v_i$ and $v_j$ should have a large edge cohesion, if they each has a high frequency of $\mathbf{p}$, or have many common neighbours in the theme community. 
}

\nop{
This inspires us to measure the cohesion of edge $e_{ij}$ by considering both the number of common neighbour vertices of $v_i, v_j$ and the frequencies of $\mathbf{p}$ on these vertices. 
\nop{The rationale is that, if $v_i, v_j$ belong to the same theme community, the edge $(v_i, v_j)$ should have a larger edge cohesion as the two vertices have high frequencies of $\mathbf{p}$ and have many common neighbors with higher frequencies of $\mathbf{p}$.}
}

\nop{
The above criteria inspire us to measure the cohesion of connection between two connected vertices $v_i, v_j$ by considering both the frequencies of $\mathbf{p}$ on these vertices and the number of their common neighbour vertices in the theme community. 
The intuition is that $v_i, v_j$ should have a large edge cohesion if they have high frequencies of $\mathbf{p}$ or have a large number of common neighbours in the theme community.
}

\nop{
According to the above criterion, in a good theme community with theme $\mathbf{p}$, two vertices $v_i, v_j$ connected by edge $e_{ij}=(v_i, v_j)$ should have high frequencies of $\mathbf{p}$ and have a large number of common neighbouring vertices with high frequencies of $\mathbf{p}$. 
}

\nop{
The relationship between $v_i$ and $v_j$ is more cohesive if they have more common neighbouring vertices with larger frequency of $\mathbf{p}$.
Based on this insight, we design \emph{edge cohesion} (see Definition~\ref{Def:edge_cohesion}) that measures the cohesiveness between $v_i$ and $v_j$ by comprehensively considering the number of common neighbouring vertices and the frequency of $\mathbf{p}$ on these neighbouring vertices.
}

\nop{Denote by $v_i$ and $v_j$ two connected vertices in a theme community with theme $\mathbf{p}$, let $CN(v_i, v_j)$ denote the set of common neighbour vertices of $v_i$ and $v_j$ in the theme community, }

\nop{According to the above criterion, for any two connected vertices $v_i$ and $v_j$ in the theme community, $f_i(\mathbf{p})$ and $f_j(\mathbf{p})$ should be large.}

\nop{
We model theme community by maximal $(\mathbf{p},\alpha)$-truss.
In the following, we first define the fundamental concepts such as edge cohesion, $(\mathbf{p},\alpha)$-truss, maximal $(\mathbf{p},\alpha)$-truss. Then, we give the formal definition of theme community.
}

\nop{, which extends the concept of $k$-truss~\cite{cohen2008trusses} from unweighted network to the vertex weighted theme network. }
\nop{
Let $v_k$ be a common neighbor of two connected vertices $v_i$ and $v_j$. Then, $v_i, v_j$ and $v_k$ form a \emph{triangle}, denoted by $\triangle_{ijk}= \{v_i, v_j, v_k\}$.
Since every common neighbor of $v_i$ and $v_j$ corresponds to a unique triangle that contains edge $e_{ij}=(v_i, v_j)$, the number of common neighbors of $v_i$ and $v_j$ is exactly the number of triangles that contain $e_{ij}=(v_i, v_j)$.
}

\nop{
Now we define the \emph{edge cohesion} of $e_{ij}$ in database networks.
}

\nop{
the number of common neighbouring vertices is exactly the same as 
we obtain the number of common neighbouring vertices by counting the number of triangles that contains $e_{ij}$.
A \emph{triangle} $\triangle_{ijk}= (v_i, v_j, v_k)$ is a clique of 3 vertices $v_i, v_j, v_k$.

In a graph $G=(V, E)$, three vertices $v_i, v_j, v_k \in V$ forms a \emph{triangle} $\triangle_{ijk}= (v_i, v_j, v_k)$ if $(v_i, v_j), (v_j, v_k), (v_i, v_k) \in E$, denoted by $\triangle_{ijk} \subseteq G$.
\mc{The number of triangles $\triangle_{ijk}$ containing edge $e_{ij}=(v_i, v_j)$ is the number of the common neighbouring vertices shared by $v_i$ and $v_j$.}
}

\nop{
According to White~\emph{et al.}~\cite{white2001cohesiveness}, for two users in a social network, the number of their common friends is a natural measurement for the cohesiveness of their social tie. 
As demonstrated by Cohen~\cite{cohen2008barycentric,cohen2008trusses,cohen2009graph} and Huang~\emph{et al.}~\cite{huang2014querying,huang2016truss}, such measurement is robust and effective in detecting cohesive communities.
Inspired by such observation, for theme network $G_\mathbf{p}$, we model the \emph{edge cohesion} between two vertices by considering the number of their common neighbours and the frequency of $\mathbf{p}$ on all neighbouring vertices.
}

\nop{In a graph $G=(V, E)$, three vertices $v_i, v_j, v_k \in V$ forms a \emph{triangle} $\triangle_{ijk}= (v_i, v_j, v_k)$ if $(v_i, v_j), (v_j, v_k), (v_i, v_k) \in E$, denoted by $\triangle_{ijk} \subseteq G$.}

\nop{
As demonstrated by Cohen~\cite{cohen2008barycentric,cohen2008trusses,cohen2009graph} and Huang~\emph{et al.}~\cite{huang2014querying,huang2016truss}, the $k$-truss structure 

According to Cohen~\cite{cohen2008barycentric,cohen2008trusses,cohen2009graph} and Huang~\emph{et al.}~\cite{huang2014querying,huang2016truss}, for two users in a social network, measuring their connection strength by the number of their common friends is a natural and effective way in detecting cohesive social communities.

the number of their common friends naturally measures the strength of their social tie.
Inspired by such observation,  proposed the community structure $k$-truss, which measures the cohesion of an edge by the number of triangles containing it.
}

\begin{definition}[Edge Cohesion]
\label{Def:edge_cohesion}
Consider a pattern $\mathbf{p}$ and the theme network $G_\mathbf{p}$, for a subgraph $C_\mathbf{p}\subseteq G_\mathbf{p}$ and an edge $e_{ij}$ in $C_\mathbf{p}$, the \textbf{edge cohesion} of $e_{ij}$ in $C_\mathbf{p}$ is 
\begin{equation}
\label{Eqn:cohesion}
eco_{ij}(C_\mathbf{p})=\sum\limits_{\triangle_{ijk}\subseteq C_\mathbf{p}} \min{(f_i(\mathbf{p}), f_j(\mathbf{p}), f_k(\mathbf{p}))} \nonumber
\end{equation}
\end{definition}

\begin{example}
In Figure~\ref{Fig:toy_example}(b), for subgraph $C_{\mathbf{p}}$ induced by the set of vertices $\{v_1, v_2, v_3, v_4, v_5\}$, edge $e_{12}$ is contained in $\triangle_{123}$ and $\triangle_{125}$, thus the edge cohesion of $e_{12}$ is $eco_{12}(C_{\mathbf{p}})=\min(f_1(\mathbf{p}),$ $ f_2(\mathbf{p}), f_3(\mathbf{p})) + \min(f_1(\mathbf{p}), f_2(\mathbf{p}), $ $f_5(\mathbf{p}))=0.2$.
\end{example}

The edge cohesion in simple networks~\cite{cohen2008trusses} is a special case of the edge cohesion in Definition~\ref{Eqn:cohesion}.
Because, when $f_i(\mathbf{p}) = 1$ for the vertex database of every vertex $v_i$ in $C_\mathbf{p}$, $eco_{ij}(C_\mathbf{p})$ is exactly the number of triangles in $C_\mathbf{p}$ that contain $e_{ij}$.

\nop{
If $f_i(\mathbf{p}) = 1$ for every vertex $v_i$ in the subgraph $C_\mathbf{p}$, then $eco_{ij}(C_\mathbf{p})$ is exactly the number of triangles in $C_\mathbf{p}$ that contain $e_{ij}$.
}

\nop{
Apparently, a large edge cohesion $eco_{ij}(C_\mathbf{p})$ indicates that vertices $v_i, v_j$ either have high frequencies of $\mathbf{p}$ or have a large number of common neighbouring vertices whose vertex databases contain theme $\mathbf{p}$.
According to the criteria of good theme community, if every edge in $C_\mathbf{p}$ has a large edge cohesion, $C_\mathbf{p}$ will be a good theme community. Based on this insight, 
}

\nop{
Now, we propose \emph{$(\mathbf{p},\alpha)$-truss}, a subgraph such that the cohesion of every edge in the subgraph is larger than a threshold.
}

Based on the edge cohesion in DBNs, we smoothly extend the notion of $k$-truss~\cite{cohen2008trusses} in simple networks to the notion of \emph{$(\mathbf{p},\alpha)$-truss} in DBNs as follows.

\nop{
Based on the above edge cohesion in database networks, we can smoothly extend the $k$-truss in simple networks, and define the \emph{$(\mathbf{p},\alpha)$-truss} in database networks as a subgraph $T\subseteq G_\mathbf{p}$ such that the cohesion of every edge in $T$ is larger than a non-negative threshold $\alpha$.
}

\nop{
The formal definition of $(\mathbf{p},\alpha)$-truss is as follows.
}

\begin{definition}[$(\mathbf{p},\alpha)$-Truss]
\label{Def:pattern_truss}
Given a pattern $\mathbf{p}$ and a minimum cohesion threshold $\alpha \geq 0$, a \textbf{$(\mathbf{p},\alpha)$-truss}, denoted by $C_{\mathbf{p},\alpha}$, is a subgraph of the theme network $G_\mathbf{p}$ such that the edge cohesion $eco_{ij}(C_{\mathbf{p},\alpha})$ of every edge in $C_{\mathbf{p},\alpha}$ is \textbf{larger than} $\alpha$.
\end{definition}

\nop{
Essentially, the $(\mathbf{p},\alpha)$-truss $C_{\mathbf{p},\alpha}$ is an edge-induced subgraph of $G_\mathbf{p}$ on the set of edges $E_{\mathbf{p},\alpha}=\{e_{ij} \mid eco_{ij}(C_{\mathbf{p},\alpha}) > \alpha \}$.
}

\nop{
Given a minimum cohesion threshold $\alpha \geq 0$, a \textbf{$(\mathbf{p},\alpha)$-truss} $C_{\mathbf{p},\alpha}=(V_{\mathbf{p},\alpha}, E_{\mathbf{p},\alpha})$ is an edge-induced subgraph of $G_\mathbf{p}$ on the set of edges $E_{\mathbf{p},\alpha}=\{e_{ij} \mid eco_{ij}(C_{\mathbf{p},\alpha}) > \alpha \}$.
}

\nop{
In other words, a $(\mathbf{p},\alpha)$-truss $C_{\mathbf{p},\alpha}$ is a subgraph in $G_\mathbf{p}$ such that the cohesion $eco_{ij}(C_{\mathbf{p},\alpha})$ of every edge in $E_{\mathbf{p},\alpha}$ is larger than $\alpha$.
}

\nop{
\mcout{It is unclear whether $C_\mathbf{p}$ is the induced subgraph of $G_\mathbf{p}$ on $V_{\mathbf{p},\alpha}$.  That is, for any $v_i, v_j \in V_{\mathbf{p},\alpha}$, is edge $(v_i, v_j) \in E_{\mathbf{p},\alpha}$?  I don't think it is the case. This should be clarified.}

\mc{Answer: $C_{\mathbf{p},\alpha}$ is an edge-induced subgraph of $G_\mathbf{p}$ on $E_{\mathbf{p},\alpha}$. That is, for any edge $(v_i, v_j) \in E_{\mathbf{p},\alpha}$, we have $v_i, v_j \in V_{\mathbf{p},\alpha}$.}
}

\nop{
\begin{itemize}
\item $V_p(\alpha)=\{v_i \mid \exists e_{ij}\in E_p(\alpha) \}$.
\item $E_p(\alpha)=\{e_{ij} \mid w_{ij}(p) > \alpha, e_{ij}\in E_p\}$.
\end{itemize}
the \emph{cohesion} of edge $e_{ij}=(v_i, v_j)$ is defined as 
\begin{equation}
\label{Eqn:cohesion}
w_{ij}(p)=\sum\limits_{\triangle_{ijk}\subseteq C_p(\alpha)} \min{(f_i(p), f_j(p), f_k(p))}
\end{equation}
where $\triangle_{ijk} = (v_i, v_j, v_k)$ represents any triangle in $C_p(\alpha)$ that contains edge $e_{ij}$. }

\nop{Pattern truss has the following useful properties:}

\mc{
A $(\mathbf{p},\alpha)$-truss $C_{\mathbf{p},\alpha}$ is elegantly related to some well-established graph structures, such as $k$-truss~\cite{cohen2008trusses}, $k$-core~\cite{seidman1983network} and $k$-clique~\cite{luce1950connectivity}, when $f_i(\mathbf{p}) = 1$ for the vertex database of every vertex $v_i$ in $C_{\mathbf{p},\alpha}$.
First, if $\alpha = k-3$, $C_{\mathbf{p},\alpha}$ becomes a $k$-truss.
Second, if $\alpha = k-2$ and $C_{\mathbf{p},\alpha}$ is a maximal connected subgraph in $G_\mathbf{p}$, it is a $k$-core.
Last, if $\alpha=k-3$ and $C_{\mathbf{p},\alpha}$ contains $k$ vertices, it is a $k$-clique.
}

A $(\mathbf{p},\alpha)$-truss is not necessarily a connected subgraph, and the union of multiple $(\mathbf{p},\alpha)$-trusses is still a $(\mathbf{p},\alpha)$-truss.
For example, consider the edges in bold in Figure~\ref{Fig:toy_example}(b).  When $\alpha \in [0, 0.2)$, the two subgraphs induced by $\{v_1, v_2, v_3, v_4, v_5\}$ and $\{v_7, v_8, v_9\}$ are both $(\mathbf{p},\alpha)$-trusses. The union of the two subgraphs is still a $(\mathbf{p},\alpha)$-truss, but it is not a connected subgraph.

\nop{
not a connected subgraph, however, it is still a $(\mathbf{p},\alpha)$-truss.
}
\nop{
is also a $(\mathbf{p},\alpha)$-truss that is not a connected subgraph.
}
\nop{
It is easy to see that, for a given $\alpha$, the union of multiple $(\mathbf{p},\alpha)$-trusses is still a $(\mathbf{p},\alpha)$-truss.
}
\nop{Take Figure~\ref{Fig:toy_example}(c) as an example, when $\alpha \in [0.2, 0.4)$, $\{v_2, v_3, v_5, v_6\}$, $\{v_3, v_5, v_6, v_7, v_9\}$ and the union of them (i.e., $\{v_2, v_3, v_5, v_6, v_7, v_9\}$) are all valid $(\mathbf{p},\alpha)$-trusses.}

\begin{definition}[Maximal $(\mathbf{p},\alpha)$-Truss]
\label{Def:maximal_pattern_truss}
A \textbf{maximal $(\mathbf{p},\alpha)$-truss}, denoted by $C^*_{\mathbf{p},\alpha}$, is a $(\mathbf{p},\alpha)$-truss in $G_\mathbf{p}$ such that any proper superset of $C^*_{\mathbf{p},\alpha}$ is not a $(\mathbf{p},\alpha)$-truss in $G_\mathbf{p}$.
\end{definition}

Since the union of multiple $(\mathbf{p},\alpha)$-trusses is still a $(\mathbf{p},\alpha)$-truss, a maximal $(\mathbf{p},\alpha)$-truss is the union of all $(\mathbf{p},\alpha)$-trusses in $G_{\mathbf{p}}$.  Apparently, a maximal $(\mathbf{p},\alpha)$-truss is still not necessarily a connected subgraph.

\nop{
We denote by $C^*_{\mathbf{p},\alpha}=(V^*_{\mathbf{p},\alpha}, E^*_{\mathbf{p},\alpha})$ the maximal $(\mathbf{p},\alpha)$-truss in $G_\mathbf{p}$. 
}

Now we are ready to define theme community.
\begin{definition}[Theme Community]
\label{Def:theme_community}
Every maximal connected subgraph in a maximal $(\mathbf{p},\alpha)$-truss is a \textbf{theme community}.
\end{definition}

\nop{
A \emph{maximal $(\mathbf{p},\alpha)$-truss} is a $(\mathbf{p},\alpha)$-truss that is not a subgraph of any other $(\mathbf{p},\alpha)$-truss. We denote maximal $(\mathbf{p},\alpha)$-truss by $C^*_p(\alpha)=\{V^*_p(\alpha), E^*_p(\alpha)\}$.

We list the following useful facts about $(\mathbf{p},\alpha)$-truss: 
\begin{itemize}
\item Pattern truss degenerates to $k$-truss~\cite{cohen2008trusses} if $\forall v_i \in V_p, f_i(p) = 1$ and $\alpha = k-2$. 
\item A $(\mathbf{p},\alpha)$-truss is not necessarily a connected subgraph. For example, in Figure~\ref{Fig:toy_example}(b), $\{v_1, v_2, v_3, v_4,$ $ v_5, v_7, v_8, v_9\}$ is a valid $(\mathbf{p},\alpha)$-truss when $\alpha \in [0, 0.2)$.
\item The union of two $(\mathbf{p},\alpha)$-trusses is also a $(\mathbf{p},\alpha)$-truss. This is because adding new vertices and edges to an existing $(\mathbf{p},\alpha)$-truss does not decrease the cohesion of any original edge of the $(\mathbf{p},\alpha)$-truss. For example, in Figure~\ref{Fig:toy_example}(c), $\{v_2, v_3, v_5, v_6\}$, $\{v_3, v_5, v_6, v_7, v_9\}$ and the union of them (i.e., $\{v_2, v_3, v_5, v_6, v_7, v_9\}$) are all valid $(\mathbf{p},\alpha)$-trusses when $\alpha \in [0.2, 0.4)$.
\item For a fixed threshold $\alpha$, a theme network can only have one unique maximal $(\mathbf{p},\alpha)$-truss, which is the union of all valid $(\mathbf{p},\alpha)$-trusses in the theme network.
\end{itemize}
}

\begin{example}
In Figure~\ref{Fig:toy_example}(b), when $\alpha \in [0, 0.2)$, $\{v_1, v_2, v_3, v_4, v_5\}$ and $\{v_7, v_8, v_9\}$ are two theme communities in $G_{\mathbf{p}}$.
In Figure~\ref{Fig:toy_example}(c), when $\alpha \in [0.2, 0.4)$, $\{v_2, v_3, v_5, v_6, v_7, v_9\}$ is a theme community in $G_{\mathbf{q}}$, and partially overlaps with the two theme communities in $G_{\mathbf{p}}$.
\end{example}

\nop{
 shows another theme community in theme network $G_{\mathbf{p}_2}$ when $\alpha \in [0.2, 0.4)$. Such community is $\{v_2, v_3, v_5, v_6, v_7, v_9\}$, which partially overlaps with both the theme communities in $G_{\mathbf{p}_1}$.}

\nop{
\begin{table}[t]
\caption{\mc{Three types of good theme communities.}}
\centering
\label{Table:thcm_types}
\begin{tabular}{|c|c|c|}
\hline
Types  	& Theme Coherence & Edge Connection \\ \hline
Type-I       &  Medium                 & High  			\\ \hline
Type-II      &  High                      & Medium  		\\ \hline
Type-III     &  High                      & High 	 		\\ \hline
\end{tabular}
\end{table}
}

We write the set of all theme communities in the maximal $(\mathbf{p}, \alpha)$-truss as $\mathbb{T}_{\mathbf{p}, \alpha}=\{T^1_{\mathbf{p}, \alpha},\ldots, T^m_{\mathbf{p}, \alpha}\}$, where $T^i_{\mathbf{p}, \alpha}$ is the $i$-th theme community in $\mathbb{T}_{\mathbf{p}, \alpha}$ $(i\in\{1, \ldots, m\})$.
Every theme community $T^i_{\mathbf{p}, \alpha}\in\mathbb{T}_{\mathbf{p}, \alpha}$ is also a $(\mathbf{p},\alpha)$-truss. 
We define the cohesiveness of a theme community as follows.

\begin{definition}[Cohesiveness of Theme Community]
\label{Def:cohe_theme_community}
The \textbf{cohesiveness} of a theme community is the minimum cohesion of its edges.
\end{definition}

There are several important benefits from modeling theme communities using maximal $(\mathbf{p},\alpha)$-trusses. 
First, there exist polynomial time algorithms to find maximal $(\mathbf{p},\alpha)$-trusses.
Second, maximal $(\mathbf{p},\alpha)$-trusses of different theme networks may overlap with each other, which reflects the application scenarios where a vertex may participate in communities of different themes.
Last, as to be proved in Sections~\ref{Sec:pompt} and~\ref{Sec:mpt_dec}, maximal $(\mathbf{p},\alpha)$-trusses have many desirable properties that enable us to design efficient mining and indexing algorithms for theme community finding.

\nop{One interesting observation is that even though the left community has a lower average vertex frequency than the right community, it is equally valid as the right community under the same threshold $\alpha$. This is because the left community has a denser edge connection that contains more triangles, and the edge cohesion comprehensively considers both edge connections and vertex frequencies.}

\subsection{Problem Definition and Complexity}
\label{Sec:tcfp}
\nop{We define the theme community finding problem as follows.}
\begin{definition}[Theme Community Finding]
\label{Def:theme_comm_finding}
Given a DBN $G$ and a minimum cohesion threshold $\alpha$, the \textbf{problem of theme community finding} is to enumerate all theme communities in $G$.
\end{definition}

\nop{
This is a challenging problem, since a database network can induce up to $2^{|S|}$ theme networks and each theme network may contain many theme communities. 
As a result, enumerating all theme communities in a database network is computationally intractable.
}

\nop{
Next, we prove in Theorem~\ref{theo:hardness} that even counting the number of the theme communities in $G$ is \#P-hard. 
}

\nop{The proof of Theorem~\ref{theo:hardness} is in Section~\ref{Apd:hardness} of the appendix.}

\begin{theorem}[Complexity]
\label{theo:hardness}
Given a DBN $G$ and a minimum cohesion threshold $\alpha$, the problem of counting the number of theme communities in $G$ is \#P-hard.
\end{theorem}

\begin{proof}
We prove by a reduction from the \emph{Frequent Pattern Counting} (FPC) problem, which is \#P-complete~\cite{gunopulos2003discovering}.

Given a vertex database $\mathbf{b}$ and a minimum support threshold $\alpha\geq 0$, an instance of the FPC problem is to count the number of patterns $\mathbf{p}$ in $\mathbf{b}$ such that $f(\mathbf{p}) > \alpha$.
Here, $f(\mathbf{p})$ is the frequency of $\mathbf{p}$ in $\mathbf{b}$.

We construct a DBN $G=(V,E,B,S)$, where $V=\{v_1, v_2, v_3\}$. $E=\{(v_1, v_2), (v_2, v_3), (v_3, v_1)\}$ forms a triangle; $B=\{\mathbf{b}_1, \mathbf{b}_2, \mathbf{b}_3 \mid \mathbf{b}_1=\mathbf{b}_2=\mathbf{b}_3=\mathbf{b}\}$; and $S$ is the set of items appearing in $\mathbf{b}$. Apparently, $G$ can be constructed in $O(|\mathbf{b}|)$ time.

For any pattern $\mathbf{p}\subseteq S$, since $\mathbf{b}_1=\mathbf{b}_2=\mathbf{b}_3=\mathbf{b}$, it follows $f_1(\mathbf{p})=f_2(\mathbf{p})=f_3(\mathbf{p})=f(\mathbf{p})$. According to Definition~\ref{Def:edge_cohesion}, $eco_{12}(G_\mathbf{p})=eco_{13}(G_\mathbf{p})=eco_{23}(G_\mathbf{p})=f(\mathbf{p})$. 
By Definition~\ref{Def:theme_community}, $G_\mathbf{p}$ is a theme community in $G$ if and only if $f(\mathbf{p})> \alpha$.
Therefore, for any threshold $\alpha\geq 0$, the number of theme communities in $G$ is equal to the number of patterns in $\mathbf{b}$ satisfying $f(\mathbf{p})>\alpha$, which is exactly the answer to the FPC problem.
\end{proof}

The theme community finding problem is challenging because a DBN $G$ can induce up to $2^{|S|}-1$ theme networks, and each theme network may contain many theme communities.
Denote by $n$ the number of vertices in $G$. Since the smallest theme community is a triangle that is not connected to any other vertices, the maximum number of theme communities in a theme network is $\floor*{\frac{n}{3}}$ for a fixed threshold $\alpha\geq 0$.
Therefore, the maximum number of theme communities in $G$ is $\floor*{\frac{n}{3}} (2^{|S|}-1)$.
However, this theoretical worst case seldom occurs in real-world datasets, since a large proportion of patterns do not induce any theme community. This is because a longer pattern $\mathbf{p}$ is less likely to appear in a vertex database, and the sparse structure of $G$ makes it even harder for a small number of vertices containing $\mathbf{p}$ to form a theme community.
As demonstrated by our experiments in section~\ref{sec:exp}, the proposed methods efficiently enumerate and index theme communities in large real-world DBNs.

\nop{
longer patterns are less likely to be contained in a vertex database

1) longer patterns are less likely to be contained in a vertex database, thus . 2)  3) the structure of $G$ is sparse. 
}

\nop{
Denote by $L$ the length of the longest pattern $\mathbf{p}^{longest}$ that induces a theme community in $G$. Then, $\floor*{\frac{n}{3}} (2^L-1)$ is the upper bound of the number of theme communities in $G$. Since $L \leq |S|$, this bound is $\floor*{\frac{n}{3}} (2^{|S|}-1)$ in the worst case.
}

\nop{
$G$ contains at most $2^L - 1$ theme networks that contain  

. The maximum number of theme networks in $G$ is $2^L - 1$. Therefore,
}

\nop{
\mc{Given a database network $G$ and a threshold $\alpha$, if an algorithm can enumerate all theme communities in $G$, it is also able to count them. 
From this perspective, the theme community finding problem (Definition~\ref{Def:theme_comm_finding}) is at least as hard as the problem of counting the number of theme communities in $G$.}
}


Since extracting theme communities (i.e., maximal connected subgraphs) from a maximal $(\mathbf{p},\alpha)$-truss is straightforward, the core of the theme community finding problem is to identify the maximal $(\mathbf{p},\alpha)$-trusses of all theme networks. 
In the rest of the paper, we develop an exact algorithm to find maximal $(\mathbf{p},\alpha)$-truss and investigate various techniques to speed up the search.

\nop{
However, as proved in Section~\ref{Sec:pompt}, the size of maximal $(\mathbf{p},\alpha)$-truss decreases monotonously when the length of pattern increases. 
With carefully designed algorithms, we can safely prune a pattern $\mathbf{p}$ if all its sub-patterns $\{\mathbf{h} \mid \mathbf{h}\subset \mathbf{p}, \mathbf{h}\neq \emptyset\}$ cannot induce $(\mathbf{p},\alpha)$-truss.
Such prunning leads to high detection efficiency without accuracy loss.
}

\section{Baseline}
\label{sec:baseline}

In this section, we first introduce \emph{Maximal $(\mathbf{p},\alpha)$-Truss Detector} (MTD), which finds the maximal $(\mathbf{p},\alpha)$-truss of a given theme network $G_\mathbf{p}$. Then, we present a baseline for theme community finding.

\subsection{Maximal \titlepa{11}-Truss Detector}
\label{sec:mtdsub}

Given $G_\mathbf{p}$ and $\alpha$, an edge in $G_\mathbf{p}$ is called an \emph{unqualified edge} if the edge cohesion is not larger than $\alpha$.
The key idea of MTD is to iteratively remove all unqualified edges so that the remaining edges and connected vertices constitute the maximal $(\mathbf{p},\alpha)$-truss.
By Definition~\ref{Def:edge_cohesion}, the cohesion of an edge $e_{ij}$ depends on the structure of subgraph $C_\mathbf{p} \subseteq G_\mathbf{p}$ that contains $e_{ij}$, thus removing an edge from $C_\mathbf{p}$ reduces the cohesion of some remaining edges.
As a result, MTD iteratively updates the cohesion of all remaining edges after removing each unqualified edge. The iteration continues until there is no unqualified edge to remove.

As shown in Algorithm~\ref{Alg:mptd}, MTD consists of two phases. Phase 1 (Lines 1-8) computes the initial cohesion of each edge and pushes unqualified edges into queue $Q$. Phase 2 (Lines 9-18) iteratively removes the unqualified edges in $Q$ from $E_\mathbf{p}$. Since removing $e_{ij}$ also breaks $\triangle_{ijk}$, we update $eco_{ik}(G_\mathbf{p})$ and $eco_{jk}(G_\mathbf{p})$ in Lines 12-13. If consequently $e_{ik}$ or $e_{jk}$ become unqualified, they are pushed into $Q$ (Lines 14-15).
Finally, the surviving edges and connected vertices are returned as the maximal $(\mathbf{p},\alpha)$-truss.

\nop{
The correctness of MTD is based on the fact that MTD does not remove any edge from a valid maximal $(\mathbf{p},\alpha)$-truss $C^*_p(\alpha)$. 
This is because MTD only removes unqualified edges and all edges in $C^*_p(\alpha)$ are qualified edges according to the definition of $C^*_p(\alpha)$. 
}

We show the correctness of MTD as follows. 
If $C^*_{\mathbf{p},\alpha}= \emptyset$, then all edges in $E_\mathbf{p}$ are removed as unqualified edges and MTD returns $\emptyset$. 
If $C^*_{\mathbf{p},\alpha}\neq \emptyset$, then only the edges in $E_\mathbf{p}$ that are not contained in $C^*_{\mathbf{p},\alpha}$ are removed as unqualified edges, and MTD returns exactly $C^*_{\mathbf{p},\alpha}$.

\nop{
all edges in $E_\mathbf{p}\setminus E_\mathbf{p}^*(\alpha)$ are removed as unqualified edges and MTD returns exactly $C^*_{\mathbf{p},\alpha}$.
}

\nop{we decompose the theme network $G_p$ into a maximal $(\mathbf{p},\alpha)$-truss $C^*_p(\alpha)$ and the remaining subgraph denoted by $\overline{C^*_p(\alpha)}=G_p \setminus C^*_p(\alpha)$.
Since adding new edges or vertices to $C^*_p(\alpha)$ does not decrease the cohesion of the existing edges in $C^*_p(\alpha)$, the edges and vertices in $\overline{C^*_p(\alpha)}$ will not decrease the cohesion of any edge in $C^*_p(\alpha)$. Furthermore, $C^*_p(\alpha)$ being the maximal $(\mathbf{p},\alpha)$-truss equally means all edges in $\overline{C^*_p(\alpha)}$ are unqualified, thus $\overline{C^*_p(\alpha)}$ will be removed by MTD and the maximal $(\mathbf{p},\alpha)$-truss $C^*_p(\alpha)$ is found. Additionally, if theme network $G_p$ does not contain any maximal $(\mathbf{p},\alpha)$-truss, we have $C^*_p(\alpha)=\emptyset$ and MTD will remove all edges from $G_p$.}

The time complexity of Algorithm~\ref{Alg:mptd} is dominated by the complexity of triangle enumeration for each edge $e_{ij}$ in $E_\mathbf{p}$. 
This requires checking all neighboring vertices of $v_i$ and $v_j$, which costs $\mathcal{O}(d(v_i) + d(v_j))$ time, where $d(v_i)$ and $d(v_j)$ are the degrees of $v_i$ and $v_j$, respectively.
Since all edges in $E_\mathbf{p}$ are checked, the cost for Lines 1-8 in Algorithm~\ref{Alg:mptd} is $\mathcal{O}(\sum_{e_{ij}\in E_\mathbf{p}} (d(v_i) + d(v_j))) = \mathcal{O}(\sum_{v_i\in V_\mathbf{p}} d^2(v_i))$.
The cost of Lines 9-18 is also $\mathcal{O}(\sum_{v_i\in V_\mathbf{p}} d^2(v_i))$. The worst case happens when all edges are removed. Therefore, the time complexity of MTD is $\mathcal{O}(\sum_{v_i\in V_\mathbf{p}} d^2(v_i))$.  In many real networks, most vertices have very small degrees.  Thus, MTD can efficiently find the maximal $(\mathbf{p},\alpha)$-truss of a sparse theme network.

\subsection{Theme Community Scanner: A Baseline}
\label{sec:tcs}

\nop{
detect the maximal $(\mathbf{p},\alpha)$-truss on each theme network using MTD, and improve the detection efficiency by pre-filtering out the patterns whose maximum frequencies in all vertex databases cannot reach a minimum frequency threshold $\epsilon$. 
}

Since a DBN $G$ may induce up to $2^{|S|}$ theme networks, running MTD on all theme networks is impractical.
Intuitively, patterns with low frequencies may be less likely to induce a theme community. Thus, a simple idea is to first filter out the patterns whose maximum frequencies in all vertex databases fail a minimum frequency threshold $\epsilon$, then apply MTD to detect the maximal $(\mathbf{p},\alpha)$-truss on each theme network induced by the remaining patterns.

Following the above idea, we introduce a baseline method, called \emph{Theme Community Scanner} (TCS).
Given a frequency threshold $\epsilon$, TCS first obtains the set of candidate patterns $\mathcal{P}=\{\mathbf{p} \mid \exists v_i\in V, f_i(\mathbf{p}) \geq \epsilon\}$ by enumerating all patterns in each vertex database. Then, for each candidate pattern $\mathbf{p}\in \mathcal{P}$, we induce theme network $G_\mathbf{p}$ and find the maximal $(\mathbf{p},\alpha)$-truss by MTD. The final result is a set of maximal $(\mathbf{p},\alpha)$-trusses, denoted by $\mathbb{C}(\alpha)=\{C^*_{\mathbf{p},\alpha} \mid C^*_{\mathbf{p},\alpha}\neq\emptyset, \mathbf{p}\in \mathcal{P}\}$.

The filtering step of TCS improves the detection efficiency of theme communities, however, it may miss some theme communities, since a pattern $\mathbf{p}$ with relatively small frequencies on all vertex databases can still form a good theme community, if a large number of vertices containing $\mathbf{p}$ form a densely connected subgraph.
As a result, TCS trades accuracy for efficiency, which, however, is not as effective as one may expect according to our experiments in Section~\ref{Sec:eop}.

\nop{
Detailed discussion on the effect of $\epsilon$ will be conducted in Section~\ref{Sec:eop}.
}

\nop{However, since a pattern $\mathbf{p}$ with small frequency can still form a strong theme community if there are a large number of vertices that contain $\mathbf{p}$ and form a densely connected subgraph.}

\nop{
However, there is no proper $\epsilon$ that works for all theme networks and setting $\epsilon=0$ is the only way to guarantee exact results.
Detailed discussion on the effect of $\epsilon$ will be introduced in Section~\ref{Sec:eop}.
Pre-filtering out the patterns with low frequencies improves the efficiency of TCS, however, since 
In sum, TCS performs a trade-off between efficiency and accuracy.
However, there is no proper $\epsilon$ that works for all theme networks and setting $\epsilon=0$ is the only way to guarantee exact results.
Detailed discussion on the effect of $\epsilon$ will be introduced in Section~\ref{Sec:eop}.
}

\nop{
\section{A Naive Method}
In this section, we introduce a naive method named \emph{Theme Community Scanner} (TCS) to solve the theme community finding problem in Definition~\ref{Def:theme_comm_finding}.

\begin{algorithm}[t]
\caption{Theme Community Scanner}
\label{Alg:tcs}
\KwIn{A DBN $G$ and a user input $\alpha$.}
\KwOut{The set of maximal $(\mathbf{p},\alpha)$-trusses $\mathcal{C}(\alpha)$ in $G$.}
\BlankLine
\begin{algorithmic}[1]
    \STATE Initialize maximal $(\mathbf{p},\alpha)$-truss set: $\mathcal{C}(\alpha)\leftarrow \emptyset$.
    \STATE Initialize candidate pattern set: $P\leftarrow \emptyset$.
    \FOR{each database $\mathbf{b}_i\in B$}
    	\STATE Find pattern set $P_i=\{p \mid f_i(p)>\epsilon\}$.
	\STATE Update pattern set: $P\leftarrow P \cup P_i$.
    \ENDFOR
    
    \FOR{each pattern $p\in P$}
    	\STATE Induce theme network $G_p$.
	\STATE Remove vertices in $V'=\{v_i \mid f_i(p) \leq \epsilon \}$ from $G_p$.
    	\STATE Call Algorithm~\ref{Alg:mptd} to find $C^*_p(\alpha)$ in $G_p$.
    	\STATE Update: $\mathcal{C}(\alpha)\leftarrow \mathcal{C}(\alpha) \cup C^*_p(\alpha)$.
    \ENDFOR
\BlankLine
\RETURN $\mathcal{C}(\alpha)$.
\end{algorithmic}
\end{algorithm}

The key idea of TCS is to directly apply MTD (Algorithm~\ref{Alg:mptd}) in each induced theme network $G_p$, while attempting to improve the efficiency by pre-filtering out the vertices whose pattern frequency $f_i(p)$ is not larger than a small threshold $\epsilon$. 

The intuition for the pre-filtering step is simply that vertices with low pattern frequency are less likely to be contained in a theme community. 
As an immediate result of the pre-filtering step, we can skip a theme network $G_p$ if no vertex $v_i\in V_p$ has a large pattern frequency $f_i(p)>\epsilon$. Thus, the efficiency of theme community finding can be improved. 

Algorithm~\ref{Alg:tcs} presents the details of TCS. Steps 1-2 perform initialization. Steps 3-6 enumerates patterns in each database and obtains the candidate pattern set $P=\{p \mid \exists v_i\in V, f_i(p) > \epsilon\}$. Steps 7-12 apply MTD on the theme networks induced by candidate patterns to find the set of maximal $(\mathbf{p},\alpha)$-trusses $\mathcal{C}(\alpha)$. In the end, we can easily obtain theme communities by extracting maximal connected subgraphs from each maximal $(\mathbf{p},\alpha)$-truss in $\mathcal{C}(\alpha)$.

Figure~\ref{Fig:toy_example}(b)-(c) show some good examples on the effect of the pre-filtering step. In Figure~\ref{Fig:toy_example}(c), setting $\epsilon=0.1$ will filter out both $v_1$ and $v_8$ in $G_{p_2}$. This improves the efficiency of TCS without affecting the theme community $\{v_2, v_3, v_5, v_6, v_7, v_9\}$. However, in Figure~\ref{Fig:toy_example}(b), setting $\epsilon=0.1$ will filter out both $v_3$ and $v_5$, which dismantles the theme community $\{v_1, v_2, v_3, v_4, v_5\}$. 

In fact, the pre-filtering step is a trade-off between efficiency and accuracy. However, there is no proper $\epsilon$ that uniformly works for all theme networks, since different patterns can induce completely different theme networks. 
Therefore, setting $\epsilon=0$ is the only way to obtain exact theme community finding results. 
Detailed discussion on the effect of $\epsilon$ will be discussed in Section~\ref{Sec:eop}.
}

\begin{algorithm}[t]
\caption{Maximal $(\mathbf{p},\alpha)$-Truss Detector}
\label{Alg:mptd}
\KwIn{A theme network $G_\mathbf{p}$ and a user input $\alpha$.}
\KwOut{The maximal $(\mathbf{p},\alpha)$-truss $C^*_{\mathbf{p},\alpha}$ in $G_\mathbf{p}$.}
\BlankLine
\begin{algorithmic}[1]
    \STATE Initialize: $Q\leftarrow \emptyset$.
    \FOR{each $e_{ij}\in E_\mathbf{p}$}
    	\STATE $eco_{ij}(G_\mathbf{p})\leftarrow 0$.
    	\FOR{each $v_k\in \triangle_{ijk}$}
		\STATE $eco_{ij}(G_\mathbf{p})\leftarrow eco_{ij}(G_\mathbf{p}) + \min(f_i(\mathbf{p}), f_j(\mathbf{p}), f_k(\mathbf{p}))$.
	\ENDFOR
	\STATE \textbf{if} $eco_{ij}(G_\mathbf{p}) \leq \alpha$ \textbf{then} $Q.push(e_{ij})$.
    \ENDFOR
    
    \WHILE{$Q\neq\emptyset$}
    	\STATE $e_{ij}\leftarrow Q.pop()$.
	\FOR{each $v_k\in \triangle_{ijk}$}
		\STATE $eco_{ik}(G_\mathbf{p})\leftarrow eco_{ik}(G_\mathbf{p}){-}\min(f_i(\mathbf{p}), f_j(\mathbf{p}), f_k(\mathbf{p})).$
		\STATE $eco_{jk}(G_\mathbf{p})\leftarrow eco_{jk}(G_\mathbf{p}){-}\min(f_i(\mathbf{p}), f_j(\mathbf{p}), f_k(\mathbf{p})).$
		\STATE \textbf{if} $eco_{ik}(G_\mathbf{p})\leq \alpha$ \textbf{then} $Q.push(e_{ik})$.
		\STATE \textbf{if} $eco_{jk}(G_\mathbf{p})\leq \alpha$ \textbf{then} $Q.push(e_{jk})$.
	\ENDFOR
	\STATE Remove $e_{ij}$ from $G_{\mathbf{p}}$.
    \ENDWHILE
\BlankLine
\RETURN $C^*_{\mathbf{p},\alpha} = G_{\mathbf{p}}$.
\end{algorithmic}
\end{algorithm}

\section{Theme Community Finding}
\label{sec:algo}

In this section, we first explore several fundamental properties of maximal $(\mathbf{p},\alpha)$-truss, then apply them to develop two fast and exact theme community finding methods.

\nop{\emph{Theme Community Finder Apriori (TCFA)} and \emph{Theme Community Finder Intersection (TCFI)}.}

\subsection{Properties of Maximal \titlepa{11}-Truss}
\label{Sec:pompt}

\begin{theorem}[Graph Anti-monotonicity]
\label{Prop:gam}
If patterns $\mathbf{p}_1 \subseteq \mathbf{p}_2$, then maximal $(\mathbf{p}, \alpha)$-trusses $C^*_{\mathbf{p}_2,\alpha}\subseteq C^*_{\mathbf{p}_1,\alpha}$.
\end{theorem}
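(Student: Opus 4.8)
The plan is to show that $C^*_{\mathbf{p}_2}(\alpha)$, regarded as a subgraph of $G_{\mathbf{p}_1}$, is itself a pattern truss in $G_{\mathbf{p}_1}$ with respect to $\alpha$; then, since the maximal pattern truss $C^*_{\mathbf{p}_1}(\alpha)$ is the union of all pattern trusses in $G_{\mathbf{p}_1}$ with respect to $\alpha$ (as observed right after Definition~\ref{Def:maximal_pattern_truss}), the containment $C^*_{\mathbf{p}_2}(\alpha)\subseteq C^*_{\mathbf{p}_1}(\alpha)$ follows at once. If $C^*_{\mathbf{p}_2}(\alpha)=\emptyset$ the claim is trivial, so I would assume it is nonempty.

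First I would record the elementary monotonicity implied by $\mathbf{p}_1\subseteq\mathbf{p}_2$: any transaction containing $\mathbf{p}_2$ also contains $\mathbf{p}_1$, so $f_i(\mathbf{p}_2)\le f_i(\mathbf{p}_1)$ for every vertex $v_i$. In particular $f_i(\mathbf{p}_2)>0$ forces $f_i(\mathbf{p}_1)>0$, hence $V_{\mathbf{p}_2}\subseteq V_{\mathbf{p}_1}$ and therefore $E_{\mathbf{p}_2}\subseteq E_{\mathbf{p}_1}$; that is, $G_{\mathbf{p}_2}$ is a subgraph of $G_{\mathbf{p}_1}$. Consequently $C := C^*_{\mathbf{p}_2}(\alpha)$, which is the edge-induced subgraph of $G_{\mathbf{p}_2}$ on $E^*_{\mathbf{p}_2}(\alpha)$, is also an edge-induced subgraph of $G_{\mathbf{p}_1}$ on the same edge set, since $E^*_{\mathbf{p}_2}(\alpha)\subseteq E_{\mathbf{p}_2}\subseteq E_{\mathbf{p}_1}$.

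The core step is then a comparison of edge cohesion inside the fixed subgraph $C$ under the two patterns. Fix an edge $e_{ij}$ of $C$. The collection of triangles $\triangle_{ijk}\subseteq C$ containing $e_{ij}$ is exactly the same whether we think of $C$ inside $G_{\mathbf{p}_1}$ or $G_{\mathbf{p}_2}$, because it depends only on the graph $C$, not on the pattern. For each such triangle, the termwise inequality from the previous paragraph gives $\min(f_i(\mathbf{p}_1),f_j(\mathbf{p}_1),f_k(\mathbf{p}_1))\ge\min(f_i(\mathbf{p}_2),f_j(\mathbf{p}_2),f_k(\mathbf{p}_2))$. Summing over all triangles in $C$ that contain $e_{ij}$, the edge cohesion of $e_{ij}$ in $C$ computed with the $\mathbf{p}_1$-frequencies is at least the edge cohesion of $e_{ij}$ in $C$ computed with the $\mathbf{p}_2$-frequencies, which is $>\alpha$ because $C=C^*_{\mathbf{p}_2}(\alpha)$ is a pattern truss in $G_{\mathbf{p}_2}$. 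Hence every edge of $C$ has edge cohesion $>\alpha$ within $C$ under the $\mathbf{p}_1$-frequencies, so $C$ satisfies Definition~\ref{Def:pattern_truss} in $G_{\mathbf{p}_1}$; that is, $C$ is a pattern truss in $G_{\mathbf{p}_1}$ with respect to $\alpha$, and the union characterization of $C^*_{\mathbf{p}_1}(\alpha)$ finishes the proof.

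I do not expect a genuine obstacle here. The only points requiring care are (i) confirming that $C^*_{\mathbf{p}_2}(\alpha)$ really lives inside $G_{\mathbf{p}_1}$ as an edge-induced subgraph, which is handled by $V_{\mathbf{p}_2}\subseteq V_{\mathbf{p}_1}$, and (ii) making explicit that, for a fixed subgraph, edge cohesion is monotone nondecreasing as the pattern shrinks — which reduces to the termwise comparison of the $\min$ expressions. Everything else is bookkeeping with the definitions.
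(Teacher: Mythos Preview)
Your proposal is correct and follows essentially the same approach as the paper's proof: both show that $C^*_{\mathbf{p}_2}(\alpha)$ sits inside $G_{\mathbf{p}_1}$ via the frequency anti-monotonicity $f_i(\mathbf{p}_1)\ge f_i(\mathbf{p}_2)$, then compare the triangle-level $\min$ terms to conclude that every edge of $C^*_{\mathbf{p}_2}(\alpha)$ has cohesion $>\alpha$ under the $\mathbf{p}_1$-frequencies, so it is a pattern truss in $G_{\mathbf{p}_1}$ and hence contained in the maximal one. Your handling of the empty case and your explicit remark that the triangle set of $C$ is pattern-independent are minor presentational additions, not substantive differences.
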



\nop{
We construct a subgraph $H_{\mathbf{p}_1}=(V_{\mathbf{p}_1}, E_{\mathbf{p}_1})$, where $V_{\mathbf{p}_1}=V^*_{\mathbf{p}_2}(\alpha)$ and $E_{\mathbf{p}_1}=E^*_{\mathbf{p}_2}(\alpha)$. 
That is, $H_{\mathbf{p}_1}=C^*_{\mathbf{p}_2}(\alpha)$.
Next, we prove $H_{\mathbf{p}_1}$ is a subgraph in $G_{\mathbf{p}_1}$. 
Since $\mathbf{p}_1\subseteq \mathbf{p}_2$, by the anti-monotonicity of patterns~\cite{agrawal1994fast,han2000mining}, it follows $\forall v_i \in V, f_i(\mathbf{p}_1) \geq f_i(\mathbf{p}_2)$. 
According to the definition of theme network, it follows $G_{\mathbf{p}_2}\subseteq G_{\mathbf{p}_1}$. 
Since $C^*_{\mathbf{p}_2}(\alpha)$ is the maximal $(\mathbf{p},\alpha)$-truss in $G_{\mathbf{p}_2}$, it follows $C^*_{\mathbf{p}_2}(\alpha)\subseteq G_{\mathbf{p}_2}\subseteq G_{\mathbf{p}_1}$.
Recall that $H_{\mathbf{p}_1}=C^*_{\mathbf{p}_2}(\alpha)$, it follows $H_{\mathbf{p}_1}\subseteq G_{\mathbf{p}_1}$.

Let $H_{\mathbf{p}_1}$ be the subgraph induced by $\mathbf{p}_1$ on $C^*_{\mathbf{p}_2,\alpha}$, obviously, $H_{\mathbf{p}_1}=C^*_{\mathbf{p}_2,\alpha}$. Obviously, $H_{\mathbf{p}_1}\subseteq G_{\mathbf{p}_1}$.

Since $C^*_{\mathbf{p}_1,\alpha}$ is the union of all $(\mathbf{p},\alpha)$-trusses in $G_{\mathbf{p}_1}$, we can prove $C^*_{\mathbf{p}_2,\alpha}\subseteq C^*_{\mathbf{p}_1,\alpha}$ by proving $C^*_{\mathbf{p}_2,\alpha}$ is a $(\mathbf{p},\alpha)$-truss in $G_{\mathbf{p}_1}$.

First, we prove $C^*_{\mathbf{p}_2,\alpha}$ is a subgraph of $G_{\mathbf{p}_1}$.
Since $\mathbf{p}_1\subseteq \mathbf{p}_2$, it follows the the anti-monotonicity~\cite{agrawal1994fast,han2000mining} that $\forall v_i \in V, f_i(\mathbf{p}_1) \geq f_i(\mathbf{p}_2)$, thus $G_{\mathbf{p}_2}\subseteq G_{\mathbf{p}_1}$. 
Since $C^*_{\mathbf{p}_2,\alpha}$ is the maximal $(\mathbf{p},\alpha)$-truss in $G_{\mathbf{p}_2}$, it follows $C^*_{\mathbf{p}_2,\alpha}\subseteq G_{\mathbf{p}_2}\subseteq G_{\mathbf{p}_1}$.

Denote by $H$ the subgraph that has exactly the same sets of vertices and edges as $C^*_{\mathbf{p}_2,\alpha}$.

For any edge $e_{ij}$ in $C^*_{\mathbf{p}_2,\alpha}$, denote by $eco_{ij}^{\mathbf{p}_1}(C^*_{\mathbf{p}_2,\alpha})$

$\exists H_{\mathbf{p}_1}\subseteq G_{\mathbf{p}_1}$ such that $H_{\mathbf{p}_1}=C^*_{\mathbf{p}_2,\alpha}$.
}

\begin{proof}
Since $C^*_{\mathbf{p}_1,\alpha}$ is the union of all $(\mathbf{p}_1,\alpha)$-trusses, we prove $C^*_{\mathbf{p}_2,\alpha}\subseteq C^*_{\mathbf{p}_1,\alpha}$ by proving $C^*_{\mathbf{p}_2,\alpha}$ is also a $(\mathbf{p}_1,\alpha)$-truss.

First, we prove that there exists a subgraph $H_{\mathbf{p}_1}$ in $G_{\mathbf{p}_1}$ such that $H_{\mathbf{p}_1} = C^*_{\mathbf{p}_2,\alpha}$, that is, $H_{\mathbf{p}_1}$ and $C^*_{\mathbf{p}_2,\alpha}$ have exactly the same sets of vertices and edges.
Since $\mathbf{p}_1\subseteq \mathbf{p}_2$, it follows the anti-monotonicity~\cite{agrawal1994fast,han2000mining} that $\forall v_i \in V, f_i(\mathbf{p}_1) \geq f_i(\mathbf{p}_2)$. Thus, $G_{\mathbf{p}_2}\subseteq G_{\mathbf{p}_1}$. 
Since $C^*_{\mathbf{p}_2,\alpha}\subseteq G_{\mathbf{p}_2}$, $C^*_{\mathbf{p}_2,\alpha}\subseteq G_{\mathbf{p}_1}$.
Therefore, $H_{\mathbf{p}_1}$ exists.

Next, we prove $H_{\mathbf{p}_1}$ and $C^*_{\mathbf{p}_2,\alpha}$ are both $(\mathbf{p}_1,\alpha)$-trusses.
Since $\forall v_i \in V, f_i(\mathbf{p}_1) \geq f_i(\mathbf{p}_2)$, the following inequality holds for every triangle $\triangle_{ijk}$ in $H_{\mathbf{p}_1}$.
\begin{equation}\nonumber
\label{eq:mono}
\begin{split}
	\min(f_i(\mathbf{p}_1), f_j(\mathbf{p}_1), f_k(\mathbf{p}_1)) \geq\min(f_i(\mathbf{p}_2), f_j(\mathbf{p}_2), f_k(\mathbf{p}_2))
\end{split}
\end{equation}

Since $H_{\mathbf{p}_1}=C^*_{\mathbf{p}_2,\alpha}$, it follows the above inequality that
$eco_{ij}(H_{\mathbf{p}_1})\geq eco_{ij}(C^*_{\mathbf{p}_2,\alpha})$ for every edge $e_{ij}$ in $H_{\mathbf{p}_1}$.

Since $C^*_{\mathbf{p}_2,\alpha}$ is the maximal $(\mathbf{p}_2,\alpha)$-truss, $eco_{ij}(C^*_{\mathbf{p}_2,\alpha})>\alpha$ for every edge $e_{ij}$ in $C^*_{\mathbf{p}_2,\alpha}$.

Now we can conclude from the above that $eco_{ij}(H_{\mathbf{p}_1})\geq eco_{ij}(C^*_{\mathbf{p}_2,\alpha}) > \alpha$ for every edge $e_{ij}$ in $H_{\mathbf{p}_1}$.
This means both $H_{\mathbf{p}_1}$ and $C^*_{\mathbf{p}_2,\alpha}$ are $(\mathbf{p}_1,\alpha)$-trusses. 
The theorem follows.
\end{proof}

\begin{proposition}[Pattern Anti-monotonicity]
\label{Prop:pam}
For patterns $\mathbf{p}_1\subseteq \mathbf{p}_2$ and a cohesion threshold $\alpha$, 
\begin{enumerate}
	\item If $C^*_{\mathbf{p}_2,\alpha}\neq \emptyset$, then $C^*_{\mathbf{p}_1,\alpha}\neq\emptyset$.
	\item If $C^*_{\mathbf{p}_1,\alpha}= \emptyset$, then $C^*_{\mathbf{p}_2,\alpha}= \emptyset$.
\end{enumerate}
\end{proposition}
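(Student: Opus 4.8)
The plan is to derive Proposition~\ref{Prop:pam} directly from Theorem~\ref{Prop:gam} (Graph Anti-monotonicity), which states that $\mathbf{p}_1 \subseteq \mathbf{p}_2$ implies $C^*_{\mathbf{p}_2}(\alpha) \subseteq C^*_{\mathbf{p}_1}(\alpha)$. Both parts of the proposition are simply logical consequences of this containment, so no new graph-theoretic argument is needed; the work is entirely in unpacking the set inclusion correctly.

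First I would prove part (1). Assume $C^*_{\mathbf{p}_2}(\alpha) \neq \emptyset$. By Theorem~\ref{Prop:gam}, $C^*_{\mathbf{p}_2}(\alpha) \subseteq C^*_{\mathbf{p}_1}(\alpha)$. Since a graph that contains a non-empty subgraph must itself be non-empty, it follows immediately that $C^*_{\mathbf{p}_1}(\alpha) \neq \emptyset$. Part (2) is just the contrapositive of part (1): if $C^*_{\mathbf{p}_1}(\alpha) = \emptyset$, then by part (1) we cannot have $C^*_{\mathbf{p}_2}(\alpha) \neq \emptyset$ (otherwise $C^*_{\mathbf{p}_1}(\alpha)$ would be non-empty), hence $C^*_{\mathbf{p}_2}(\alpha) = \emptyset$. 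One could also argue part (2) directly from the containment: $C^*_{\mathbf{p}_2}(\alpha) \subseteq C^*_{\mathbf{p}_1}(\alpha) = \emptyset$ forces $C^*_{\mathbf{p}_2}(\alpha) = \emptyset$.

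There is essentially no obstacle here — the proposition is a corollary of Theorem~\ref{Prop:gam}, and the only thing to be careful about is the convention that ``$C^*_{\mathbf{p}}(\alpha) = \emptyset$'' means the maximal pattern truss is the empty edge-induced subgraph (no edges, hence no vertices), so that ``non-empty'' and ``contains a non-empty subgraph'' behave as expected with respect to the subgraph relation $\subseteq$. If one wanted a self-contained alternative that does not cite Theorem~\ref{Prop:gam}, one could instead reprise the key fact from its proof: since $\mathbf{p}_1 \subseteq \mathbf{p}_2$ implies $f_i(\mathbf{p}_1) \geq f_i(\mathbf{p}_2)$ for all $v_i$ by pattern anti-monotonicity of frequency, any pattern truss in $G_{\mathbf{p}_2}$ with respect to $\alpha$ is also a pattern truss in $G_{\mathbf{p}_1}$ with respect to $\alpha$ (edge cohesions only increase when frequencies increase, and $G_{\mathbf{p}_2} \subseteq G_{\mathbf{p}_1}$), so a non-empty maximal pattern truss in $G_{\mathbf{p}_2}$ witnesses a non-empty pattern truss in $G_{\mathbf{p}_1}$, giving part (1); part (2) then follows by contraposition. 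I would go with the short route via Theorem~\ref{Prop:gam}, since it is already available.
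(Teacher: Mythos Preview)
Your proposal is correct and matches the paper's own proof exactly: the paper simply invokes Theorem~\ref{Prop:gam} to obtain $C^*_{\mathbf{p}_2}(\alpha)\subseteq C^*_{\mathbf{p}_1}(\alpha)$ and notes that the proposition follows immediately. Your write-up is more detailed than the paper's two-line version, but the logical content is identical.
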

\begin{proof}
According to Theorem~\ref{Prop:gam}, since $\mathbf{p}_1\subseteq \mathbf{p}_2$, $C^*_{\mathbf{p}_2,\alpha}\subseteq C^*_{\mathbf{p}_1,\alpha}$.
The proposition follows immediately.
\end{proof}

\nop{it follows $V^*_{\mathbf{p}_2,\alpha}\neq\emptyset$, $E^*_{\mathbf{p}_2,\alpha}\neq\emptyset$ and $\forall v_i\in V^*_{\mathbf{p}_2,\alpha}, f_i(\mathbf{p}_2)>0$.
Since $\forall v_i \in V, f_i(\mathbf{p}_1) \geq f_i(\mathbf{p}_2)$, it follows $\forall v_i\in V_{\mathbf{p}_1}, f_i(\mathbf{p}_1)>0$ and $H_{\mathbf{p}_1}\neq\emptyset$. 
}
\nop{According to the definition of theme network, it follows $G_{\mathbf{p}_2}\subseteq G_{\mathbf{p}_1}$. 
Since $C^*_{\mathbf{p}_2,\alpha}\subseteq G_{\mathbf{p}_2}$, it follows $C^*_{\mathbf{p}_2,\alpha}\subseteq G_{\mathbf{p}_1}$.
Considering $V_{\mathbf{p}_1}=V^*_{\mathbf{p}_2,\alpha}$ and $E_{\mathbf{p}_1}=E^*_{\mathbf{p}_2,\alpha}$, it follows $H_{\mathbf{p}_1}\subseteq G_{\mathbf{p}_1}$.
}

\nop{
First, we prove $H_{\mathbf{p}_1}\subseteq G_{\mathbf{p}_1}$. 

According to the definition of theme network, it follows $G_{\mathbf{p}_2}\subseteq G_{\mathbf{p}_1}$. 
Since $C^*_{\mathbf{p}_2,\alpha}\subseteq G_{\mathbf{p}_2}$, it follows $C^*_{\mathbf{p}_2,\alpha}\subseteq G_{\mathbf{p}_1}$.
Considering $H_{\mathbf{p}_1}\subseteq C^*_{\mathbf{p}_2,\alpha}$, it follows $H_{\mathbf{p}_1}\subseteq G_{\mathbf{p}_1}$.

Second, we define $H_{\mathbf{p}_1}=\{V_{\mathbf{p}_1}, E_{\mathbf{p}_1}\}$, where $V_{\mathbf{p}_1}$

we prove $C^*_{\mathbf{p}_2,\alpha}$ is a $(\mathbf{p},\alpha)$-truss in $G_{\mathbf{p}_1}$. Since $\mathbf{p}_1 \subseteq \mathbf{p}_2$, it follows $\forall v_i \in V^*_{\mathbf{p}_2,\alpha}, f_i(\mathbf{p}_1) \geq f_i(\mathbf{p}_2)$. 
Therefore, for each $\triangle_{ijk}$ in $C^*_{\mathbf{p}_2,\alpha}$, we have
\begin{equation}\nonumber
\begin{split}
	\min(f_i(\mathbf{p}_1), f_j(\mathbf{p}_1), f_k(\mathbf{p}_1)) \geq\min(f_i(\mathbf{p}_2), f_j(\mathbf{p}_2), f_k(\mathbf{p}_2))
\end{split}
\end{equation}
Thus, according to Definition~\ref{Def:edge_cohesion} it follows that 
\begin{equation}\nonumber
	\forall e_{ij}\in E^*_{\mathbf{p}_2,\alpha}, eco_{ij}(\mathbf{p}_1)\geq eco_{ij}(\mathbf{p}_2)
\end{equation}
}

\begin{proposition}[Graph Intersection Property]
\label{Lem:gip}
If $\mathbf{p}_1 \subseteq \mathbf{p}_3$ and $\mathbf{p}_2 \subseteq \mathbf{p}_3$, then $C^*_{\mathbf{p}_3,\alpha}\subseteq C^*_{\mathbf{p}_1,\alpha} \cap C^*_{\mathbf{p}_2,\alpha}$.
\end{proposition}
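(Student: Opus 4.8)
The plan is to derive this proposition directly from Theorem~\ref{Prop:gam} (Graph Anti-monotonicity), which already does all the real work. The statement is essentially a corollary: once we know that enlarging a pattern can only shrink its maximal pattern truss, a pattern that contains two smaller patterns has a maximal pattern truss contained in each of the two corresponding maximal pattern trusses, hence in their intersection.

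Concretely, the first step is to invoke Theorem~\ref{Prop:gam} with the containment $\mathbf{p}_1 \subseteq \mathbf{p}_3$ to conclude $C^*_{\mathbf{p}_3}(\alpha) \subseteq C^*_{\mathbf{p}_1}(\alpha)$. The second step is to invoke the same theorem with $\mathbf{p}_2 \subseteq \mathbf{p}_3$ to conclude $C^*_{\mathbf{p}_3}(\alpha) \subseteq C^*_{\mathbf{p}_2}(\alpha)$. The third and final step is the elementary set-theoretic observation that if a set is contained in each of two sets, it is contained in their intersection; applying this to $C^*_{\mathbf{p}_3}(\alpha)$ gives $C^*_{\mathbf{p}_3}(\alpha) \subseteq C^*_{\mathbf{p}_1}(\alpha) \cap C^*_{\mathbf{p}_2}(\alpha)$, which is exactly the claim. (Here I am reading ``$\subseteq$'' between subgraphs in the natural componentwise sense, i.e.\ containment of both vertex sets and edge sets, consistent with how it is used in Theorem~\ref{Prop:gam}; the intersection of two subgraphs is the subgraph on the intersected vertex and edge sets.)

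There is essentially no obstacle: the proof is a two-line deduction, and the only thing worth being careful about is making explicit that the subgraph containment and intersection are interpreted componentwise so that the generic ``$A \subseteq B$ and $A \subseteq C$ imply $A \subseteq B \cap C$'' argument applies verbatim. If one wanted a self-contained argument without citing Theorem~\ref{Prop:gam}, one could instead mimic its proof — showing $C^*_{\mathbf{p}_3}(\alpha)$ is a pattern truss with respect to $\alpha$ in both $G_{\mathbf{p}_1}$ and $G_{\mathbf{p}_2}$ using $f_i(\mathbf{p}_1), f_i(\mathbf{p}_2) \geq f_i(\mathbf{p}_3)$ for all $v_i$ and Definition~\ref{Def:edge_cohesion} — but reusing the theorem is cleaner and is the route I would take.
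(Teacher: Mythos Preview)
Your proposal is correct and matches the paper's own proof essentially line for line: the paper also applies Theorem~\ref{Prop:gam} once with $\mathbf{p}_1 \subseteq \mathbf{p}_3$ and once with $\mathbf{p}_2 \subseteq \mathbf{p}_3$, then concludes by the obvious intersection argument.
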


\begin{proof}
By Theorem~\ref{Prop:gam}, since $\mathbf{p}_1 \subseteq \mathbf{p}_3$, $C^*_{\mathbf{p}_3,\alpha}\subseteq C^*_{\mathbf{p}_1,\alpha}$. 
Similarly, $C^*_{\mathbf{p}_3,\alpha}\subseteq C^*_{\mathbf{p}_2,\alpha}$. The proposition follows.
\end{proof}

\nop{
The proofs of Propositions~\ref{Prop:pam}, \ref{Prop:gam} and \ref{Lem:gip} can be found in Sections~\ref{Apd:pam}, \ref{Apd:gam} and \ref{Apd:gip} of the appendix, respectively.
}

\begin{algorithm}[t]
\caption{Generate Apriori Candidate Patterns}
\label{Alg:apriori}
\KwIn{The length-$(k-1)$ qualified patterns $\mathcal{P}^{k-1}$.}
\KwOut{The set of length-$k$ candidate patterns $\mathcal{M}^k$.}
\BlankLine
\begin{algorithmic}[1]
        \STATE Initialize: $\mathcal{M}^k\leftarrow \emptyset$.
    	\FOR{$\{\mathbf{p}, \mathbf{q}\} \subset \mathcal{P}^{k-1} \land |\mathbf{p} \cup \mathbf{q}| = k$}
		\STATE $\mathbf{h}\leftarrow \mathbf{p} \cup \mathbf{q}$.
		\STATE \textbf{if} all length-$(k-1)$ sub-patterns of $\mathbf{h}$ are qualified \textbf{then} $\mathcal{M}^k\leftarrow \mathcal{M}^k \cup \mathbf{h}$.
		
	\ENDFOR
\BlankLine
\RETURN $\mathcal{M}^k$.
\end{algorithmic}
\end{algorithm}

\begin{algorithm}[t]
\caption{Theme Community Finder Apriori}
\label{Alg:tcfa}
\KwIn{A DBN $G$ and a user input $\alpha$.}
\KwOut{The set of maximal $(\mathbf{p},\alpha)$-trusses $\mathbb{C}$ in $G$.}
\BlankLine
\begin{algorithmic}[1]
    \STATE Initialize: $\mathcal{P}^1$, $\mathbb{C}\leftarrow \mathbb{C}^1$, $k\leftarrow2$.
    \WHILE{$\mathcal{P}^{k-1}\neq \emptyset$}
    	\STATE Call Algorithm~\ref{Alg:apriori}: $\mathcal{M}^k\leftarrow \mathcal{P}^{k-1}$.
	\STATE $\mathcal{P}^k\leftarrow \emptyset$, $\mathbb{C}^k\leftarrow \emptyset$.
	 \FOR{each length-$k$ pattern $\mathbf{h} \in \mathcal{M}^k$}
    		\STATE Induce $G_{\mathbf{h}}$ from $G$. 
		\STATE Compute $C^*_{\mathbf{h},\alpha}$ using $G_{\mathbf{h}}$ by Algorithm~\ref{Alg:mptd}.
		\STATE \textbf{if} $C^*_{\mathbf{h},\alpha} \neq \emptyset$ \textbf{then} $\mathbb{C}^k\leftarrow \mathbb{C}^k \cup C^*_{\mathbf{h},\alpha}$, $\mathcal{P}^k\leftarrow \mathcal{P}^k\cup \mathbf{h}$.
    	\ENDFOR
	\STATE $\mathbb{C}\leftarrow \mathbb{C} \cup \mathbb{C}^k$ and $k\leftarrow k+1$.
    \ENDWHILE
\BlankLine
\RETURN $\mathbb{C}$.
\end{algorithmic}
\end{algorithm}

\subsection{Theme Community Finder Apriori}
\label{Sec:tcfa}
In this subsection, we introduce algorithm \emph{Theme Community Finder Apriori} (TCFA) to solve the theme community finding problem.
The key idea of TCFA is to improve theme community finding efficiency by early pruning unqualified patterns in an Apriori-like manner~\cite{agrawal1994fast}.

A pattern $\mathbf{p}$ is said to be \emph{unqualified} if $C^*_{\mathbf{p},\alpha}=\emptyset$, and to be \emph{qualified} if $C^*_{\mathbf{p},\alpha}\neq\emptyset$.
For two patterns $\mathbf{p}_1$ and $\mathbf{p}_2$, if $\mathbf{p}_1 \subseteq \mathbf{p}_2$, $\mathbf{p}_1$ is called a \emph{sub-pattern} of $\mathbf{p}_2$.

According to the second item in Proposition~\ref{Prop:pam}, for two patterns $\mathbf{p}_1$ and $\mathbf{p}_2$, if $\mathbf{p}_1 \subseteq \mathbf{p}_2$ and $\mathbf{p}_1$ is unqualified, then $\mathbf{p}_2$ is unqualified, either, thus $\mathbf{p}_2$ can be immediately pruned. Therefore, we can prune a length-$k$ pattern if any of its length-$(k-1)$ sub-patterns is unqualified.

Algorithm~\ref{Alg:apriori} shows how we generate the set of length-$k$ candidate patterns by retaining only the length-$k$ patterns whose all length-$(k-1)$ sub-patterns are qualified.
\nop{
Specifically, line 3 generates a possible length-$k$ candidate pattern $\mathbf{p}^k$ by merging two length-$(k-1)$ qualified patterns $\{\mathbf{p}^{k-1}, \mathbf{q}^{k-1}\}$. 
Line 4 retains $\mathbf{p}^k$ as a candidate pattern if all its length-$(k-1)$ sub-patterns are qualified. 
}

Algorithm~\ref{Alg:tcfa} introduces the details of TCFA. Line 1 computes the set of length-1 qualified patterns $\mathcal{P}^1=\{\mathbf{p} \subset S \mid C^*_{\mathbf{p},\alpha}\neq\emptyset, |\mathbf{p}| = 1\}$ and the corresponding set of maximal $(\mathbf{p},\alpha)$-trusses $\mathbb{C}^1 = \{C^*_{\mathbf{p},\alpha} \mid \mathbf{p}\in \mathcal{P}^1 \}$. 
This requires to run MTD on each theme network induced by a single item in $S$. 
\nop{Since the theme networks induced by different items are independent, we can run this process in parallel. Our implementation use multiple threads for this step.}
Line 3 calls Algorithm~\ref{Alg:apriori} to generate the set of length-$k$ candidate patterns $\mathcal{M}^k$.
Lines 5-9 remove the unqualified candidate patterns in $\mathcal{M}^k$ by discarding every candidate pattern that cannot form a non-empty maximal $(\mathbf{p},\alpha)$-truss.
In this way, we iteratively generate the set of length-$k$ qualified patterns $\mathcal{P}^k$ from $\mathcal{P}^{k-1}$ until no qualified patterns can be found. 
Last, the exact set of maximal $(\mathbf{p},\alpha)$-trusses $\mathbb{C}$ is returned.

\nop{Specifically, step 3 calls Algorithm~\ref{Alg:apriori} to generate candidate pattern set $L^k$; steps 4-9 obtain $P^k$ by retaining only the qualified patterns in $L^k$.}

\nop{
Algorithm~\ref{Alg:apriori} shows how we generate the length-$k$ candidate patterns and prune unqualified patterns in an Apriori-like manner.
According to the second property of Proposition~\ref{Prop:pam}, for two patterns $p_1$ and $p_2$, if $p_1 \subset p_2$ and $p_1$ is unqualified, then $p_2$ is unqualified and can be immediately pruned without running Algorithm~\ref{Alg:mptd} on $G_{p_2}$. 
As a result, we perform efficient pattern pruning by retaining only the length-$k$ candidate patterns whose length-$(k-1)$ sub-patterns are all qualified patterns.
Specifically, step 3 generates a possible length-$k$ candidate pattern $p^k$ by merging two length-$(k-1)$ qualified patterns $\{p^{k-1}, q^{k-1}\}$. Step 4 retains $p^k$ as a candidate pattern if all its length-$(k-1)$ sub-patterns are qualified. 
}

Comparing with the baseline TCS in Section~\ref{sec:tcs}, TCFA achieves a good efficiency improvement by effectively pruning a large number of unqualified patterns using the Apriori-like method.
However, due to the limitation of Apriori~\cite{agrawal1994fast}, the set of candidate patterns $\mathcal{M}^k$ is often very large and still contains many unqualified candidate patterns. 
Consequently, Lines 5-9 of Algorithm~\ref{Alg:tcfa} become the bottleneck of TCFA.
We solve this problem next.

\nop{
can be identified and removed by running Algorithm~\ref{Alg:mptd} on the corresponding theme networks, 

Thus, checking the qualification of such unqualified patterns in lines 5-9 of Algorithm~\ref{Alg:tcfa} becomes the bottleneck of computational efficiency. 
}

\subsection{Theme Community Finder Intersection}
\label{Sec:tcfi}
The \emph{Theme Community Finder Intersection} (TCFI) method significantly improves the efficiency of TCFA by pruning unqualified patterns in $\mathcal{M}^k$ using Proposition~\ref{Lem:gip}.

Consider pattern $\mathbf{h}$ of length $k$ and patterns $\mathbf{p}$ and $\mathbf{q}$ both of length $k-1$.
According to Proposition~\ref{Lem:gip}, if $\mathbf{h} = \mathbf{p} \cup \mathbf{q}$, then $C^*_{\mathbf{h},\alpha}\subseteq C^*_{\mathbf{p},\alpha} \cap C^*_{\mathbf{q},\alpha}$. 
Therefore, let $Z=C^*_{\mathbf{p},\alpha} \cap C^*_{\mathbf{q},\alpha}$, if $Z=\emptyset$, then $C^*_{\mathbf{h},\alpha}=\emptyset$.  That is, we can prune $\mathbf{h}$ immediately.
If $Z\neq\emptyset$, we can induce theme network $G_{\mathbf{h}}$ from $Z$
and find $C^*_{\mathbf{h},\alpha}$ within $G_{\mathbf{h}}$ by MTD.

Accordingly, TCFI improves TCFA by modifying only Line 6 of Algorithm~\ref{Alg:tcfa}.
Instead of inducing $G_{\mathbf{h}}$ from $G$, TCFI induces $G_{\mathbf{h}}$ from $Z$ when $Z\neq \emptyset$. 
Here, $Z=C^*_{\mathbf{p},\alpha} \cap C^*_{\mathbf{q},\alpha}$ where $\mathbf{p}$ and $\mathbf{q}$ are qualified patterns in $\mathcal{P}^{k-1}$ such that $\mathbf{h} = \mathbf{p} \cup \mathbf{q}$.

TCFI dramatically improves the detection efficiency.
First, TCFI prunes a large number of candidate patterns in $\mathcal{M}^k$ by efficiently checking whether $Z=\emptyset$.
Second, when $Z\neq\emptyset$, inducing $G_{\mathbf{h}}$ from $Z$ is more efficient than inducing $G_{\mathbf{h}}$ from $G$, since $Z$ is often much smaller than $G$. 
Third, $G_{\mathbf{h}}$ induced from $Z$ is often much smaller than $G_{\mathbf{h}}$ induced from $G$, which significantly reduces the time cost of running MTD on $G_{\mathbf{h}}$. 
Last, according to Theorem~\ref{Prop:gam}, the size of a maximal $(\mathbf{p},\alpha)$-truss decreases when the length of the pattern increases. Thus, when a pattern grows longer, the size of $Z$ decreases rapidly, which significantly improves the pruning effectiveness of TCFI.


\section{Theme Community Indexing}
\label{sec:index}
\nop{
When a user inputs a new threshold $\alpha$, TCS, TCFA and TCFI have to recompute from scratch. 
Can we save the re-computation cost by decomposing and indexing all maximal $(\mathbf{p},\alpha)$-trusses?
In this section, we propose the \emph{Theme Community Tree} (TC-Tree) for fast query answering.
}

\mc{
In practice, different users may be interested in theme communities in different maximal $(\mathbf{p}, \alpha)$-trusses.
Unfortunately, for every new threshold $\alpha$, TCS, TCFA and TCFI have to recompute from scratch.
Can we save the re-computation cost by providing a fast query answering service that allows users to efficiently explore a DBN and quickly retrieve theme communities of their own interest?
In this section, we propose \emph{Theme Community Tree} (TC-Tree) to provide fast query answering service by decomposing and indexing all maximal $(\mathbf{p},\alpha)$-trusses in a DBN.
}
\nop{
TCS, TCFA and TCFI find a large number of maximal $(\mathbf{p},\alpha)$-trusses

The maximal $(\mathbf{p},\alpha)$-trusses detected by TCFA and TCFI contain a large number of theme communities with different themes and cohesiveness.
}

We first introduce how to decompose maximal $(\mathbf{p},\alpha)$-truss.
Then, we illustrate how to build TC-Tree with decomposed maximal $(\mathbf{p},\alpha)$-trusses. 
Last, we present a query answering method that can efficiently retrieve a ranked list of theme communities to answer a user query, and can also recommend a ranked list of meaningful new queries to address user disappointment when a query does not return any community.

\subsection{Maximal \titlepa{11}-Truss Decomposition}
\label{Sec:mpt_dec}
In this subsection, we introduce how to decompose a maximal $(\mathbf{p},\alpha)$-truss into multiple disjoint sets of edges.

\nop{
\begin{property}
\label{Obs:anti_alpha}
$\forall p\subseteq S$, if $\alpha_2 > \alpha_1 \geq 0$, then $C^*_{p}(\alpha_2)\subseteq C^*_{p}(\alpha_1)$.  
\end{property}

Property~\ref{Obs:anti_alpha} shows that the size of maximal $(\mathbf{p},\alpha)$-truss reduces when the threshold increases from $\alpha_1$ to $\alpha_2$.
According to Definition~\ref{Def:pattern_truss}, the cohesion of all edges in $C^*_{p}(\alpha_1)$ are larger than $\alpha_1$. Since $\alpha_2 > \alpha_1$, there can be some edges in $C^*_{p}(\alpha_1)$ whose cohesion is smaller than $\alpha_2$.
Since $C^*_{p}(\alpha_2)$ is formed by removing such edges from $C^*_{p}(\alpha_1)$, we have $C^*_{p}(\alpha_2)\subseteq C^*_{p}(\alpha_1)$.
}

\begin{theorem}
\label{Obs:discrete}
Given a maximal $(\mathbf{p},\alpha_1)$-truss $C^*_{\mathbf{p},\alpha_1}$ with minimum edge cohesion $\beta_{\mathbf{p},\alpha_1}$, for any cohesion threshold $\alpha_2 \geq \beta_{\mathbf{p},\alpha_1}$, $C^*_{\mathbf{p},\alpha_2}\subset C^*_{\mathbf{p},\alpha_1}$.
\end{theorem}


\begin{proof}
First, we prove $\alpha_2 > \alpha_1$. 
By Definition~\ref{Def:pattern_truss}, for any edge $e_{ij}$ in $C^*_{\mathbf{p},\alpha_1}$, $eco_{ij}(C^*_{\mathbf{p},\alpha_1}) > \alpha_1$. 
Since $\beta_{\mathbf{p},\alpha_1}$ is the minimum edge cohesion of $C^*_{\mathbf{p},\alpha_1}$, $\beta_{\mathbf{p},\alpha_1} > \alpha_1$. 
Since $\alpha_2 \geq \beta_{\mathbf{p},\alpha_1}$, $\alpha_2 > \alpha_1$.

Second, we prove $C^*_{\mathbf{p},\alpha_2}\subseteq C^*_{\mathbf{p},\alpha_1}$. 
Since $\alpha_2 > \alpha_1$, it follows Definition~\ref{Def:pattern_truss} that, for any edge $e_{ij}$ in $C^*_{\mathbf{p},\alpha_2}$, $eco_{ij}(C^*_{\mathbf{p},\alpha_2}) > \alpha_2> \alpha_1$.
This means $C^*_{\mathbf{p},\alpha_2}$ is also a $(\mathbf{p},\alpha_1)$-truss.
Since $C^*_{\mathbf{p},\alpha_1}$ is \nop{\todo{a maximal or the maximum?}}the maximal $(\mathbf{p},\alpha_1)$-truss, $C^*_{\mathbf{p},\alpha_2}\subseteq C^*_{\mathbf{p},\alpha_1}$.

Last, we prove $C^*_{\mathbf{p},\alpha_2}\neq C^*_{\mathbf{p},\alpha_1}$. 
Let $e^*_{ij}$ be the edge that has the minimum edge cohesion $\beta_{\mathbf{p},\alpha_1}$ in $ C^*_{\mathbf{p},\alpha_1}$. 
Since $\alpha_2\geq \beta_{\mathbf{p},\alpha_1}$, $e^*_{ij}$ is not an edge of $C^*_{\mathbf{p},\alpha_2}$. 
Thus, $C^*_{\mathbf{p},\alpha_2}\neq C^*_{\mathbf{p},\alpha_1}$.
Recall that $C^*_{\mathbf{p},\alpha_2}\subseteq C^*_{\mathbf{p},\alpha_1}$, the theorem follows.
\end{proof}

Theorem~\ref{Obs:discrete} indicates that the size of $C^*_{\mathbf{p},\alpha_1}$ is smaller than $C^*_{\mathbf{p},\alpha_2}$ only when $\alpha_2 \geq \beta_{\mathbf{p},\alpha_1}$.
Thus, we can iteratively compute a sequence of ascending cohesion thresholds $\mathcal{A}_\mathbf{p}=\alpha_0, \alpha_1, \ldots, \alpha_h$, where $\alpha_0 = 0$, $\alpha_k = \beta_{\mathbf{p},\alpha_{k-1}}$ for $k\in\{1, \ldots, h\}$, and $\alpha_h$ is the largest $\alpha$ in $G_\mathbf{p}$ such that $C^*_{\mathbf{p},\alpha}=\emptyset$ for all $\alpha\geq \alpha_h$.

\nop{We write the largest cohesion threshold in $\mathcal{A}_\mathbf{p}$ as $\alpha^*_\mathbf{p}=\max\mathcal{A}_\mathbf{p}$.}

\nop{decompose a maximal $(\mathbf{p}, \alpha)$-truss of $G_\mathbf{p}$ into a sequence of disjoint sets of edges using a sequence of ascending cohesion thresholds $\mathcal{A}_\mathbf{p}=\alpha_0, \alpha_1, \cdots, \alpha_h$, where $\alpha_0 = 0$ and $\alpha_k = \min\limits_{e_{ij}\in E^*_{\mathbf{p},\alpha_{k-1}}} eco_{ij}(C^*_{\mathbf{p},\alpha_{k-1}})$ for $k\in[1,h]$.}

\nop{
enumerate all possible values of cohesion threshold for all $(\mathbf{p}, \alpha)$-trusses in $G_\mathbf{p}$ as a sequence of ascending cohesion thresholds $\mathcal{A}_\mathbf{p}=\alpha_0, \alpha_1, \cdots, \alpha_h$, where $\alpha_0 = 0$, $\alpha_k = \min\limits_{e_{ij}\in E^*_{\mathbf{p},\alpha_{k-1}}} eco_{ij}(C^*_{\mathbf{p},\alpha_{k-1}})$ for $k\in[1,h]$, and $\alpha_h$ is the largest $\alpha$ in $G_\mathbf{p}$ such that $C^*_{\mathbf{p},\alpha}=\emptyset$ for all $\alpha\geq \alpha_h$.
We write the largest cohesion threshold in $\mathcal{A}_\mathbf{p}$ as $\alpha^*_\mathbf{p}=\max\mathcal{A}_\mathbf{p}$.
}

\nop{
For any $\alpha\in[\alpha_k, \alpha_{k+1})$ such that $\alpha_k, \alpha_{k+1}\in\mathcal{A}_\mathbf{p}$, we can derive from Definition~\ref{Def:pattern_truss} and Theorem~\ref{Obs:discrete} that $C^*_{\mathbf{p},\alpha_k} = C^*_{\mathbf{p},\alpha} \subset C^*_{\mathbf{p},\alpha_{k+1}}$.
}

\nop{
Therefore, we can decompose a maximal $(\mathbf{p},\alpha)$-truss into a sequence of disjoint sets of edges using a sequence of ascending cohesion thresholds $\mathcal{A}_\mathbf{p}=\alpha_0, \alpha_1, \cdots, \alpha_h$, where $\alpha_0 = 0$ and $\alpha_k = \min\limits_{e_{ij}\in E^*_{\mathbf{p},\alpha_{k-1}}} eco_{ij}(C^*_{\mathbf{p},\alpha_{k-1}})$ for $k\in[1,h]$.

$\mathcal{A}_\mathbf{p}$ contains all possible values of $\alpha$ for every $(\mathbf{p}, \alpha)$-truss in $G_\mathbf{p}$, because, for any $\alpha\in[\alpha_k, \alpha_{k+1})$ such that $\alpha_k, \alpha_{k+1}\in\mathcal{A}_\mathbf{p}$, we have $C^*_{\mathbf{p},\alpha_k} = C^*_{\mathbf{p},\alpha} \subset C^*_{\mathbf{p},\alpha_{k+1}}$.
}

\nop{$C^*_{\mathbf{p},\alpha_1} = C^*_{\mathbf{p},\alpha_2}$ for any $\alpha_2\in[\alpha_1, \min\limits_{e_{ij}\in E^*_{\mathbf{p},\alpha_1}} eco_{ij}(C^*_{\mathbf{p},\alpha_1})]$.}

We use $\mathcal{A}_\mathbf{p}$ to decompose a maximal $(\mathbf{p}, \alpha)$-truss as follows. First, we call MTD 
to compute $C^*_{\mathbf{p},\alpha_0}$, which is the largest maximal $(\mathbf{p},\alpha)$-truss in $G_\mathbf{p}$. 
Then, for $\alpha_1, \ldots, \alpha_h$, we decompose $C^*_{\mathbf{p},\alpha_0}$ into a sequence of sets of edges $R_{\mathbf{p},\alpha_1}, \ldots, R_{\mathbf{p},\alpha_h}$, where $R_{\mathbf{p},\alpha_k}$ is the set of edges that are contained in $C^*_{\mathbf{p},\alpha_{k-1}}$ but not in $C^*_{\mathbf{p},\alpha_k}$.

The decomposition results are stored in a linked list $\mathcal{L}_\mathbf{p}=\mathcal{L}_{\mathbf{p},\alpha_1}, \ldots, \mathcal{L}_{\mathbf{p},\alpha_h}$, where the $k$-th node stores $\mathcal{L}_{\mathbf{p},\alpha_k} = (\alpha_k, R_{\mathbf{p},\alpha_k})$. Since $\mathcal{L}_\mathbf{p}$ stores the same number of edges as $C^*_{\mathbf{p},\alpha_0}$, it does not incur much extra memory cost. 

\nop{This decomposition iterates until all edges in $C^*_{\mathbf{p},\alpha_0}$ are removed.}

\nop{That is, $R_{\mathbf{p},\alpha_k}=E^*_{\mathbf{p},\alpha_{k-1}} \setminus E^*_{\mathbf{p},\alpha_k}$, where $E^*_{\mathbf{p},\alpha_{k-1}}$ and $E^*_{\mathbf{p},\alpha_k}$ are the set of edges of $C^*_{\mathbf{p},\alpha_{k-1}}$ and $C^*_{\mathbf{p},\alpha_k}$, respectively.}

\nop{
$C^*_{p}(\alpha_i)$ reduces to $C^*_{p}(\alpha_{i+1})$ when $\alpha_{i+1}\in [\beta_i, \beta_{i+1})$, where $\beta_i = \psi_p(\alpha_i)$.
}

\nop{
Denote by $E^*_{\mathbf{p}, \alpha}$ the set of edges of the maximal $(\mathbf{p}, \alpha)$-truss $C^*_{\mathbf{p}, \alpha}$. 
We can use $\mathcal{L}_\mathbf{p}$ to efficiently compute $E^*_{\mathbf{p}, \alpha}$ as $E^*_{\mathbf{p}, \alpha}=\bigcup_{\alpha_k>\alpha} R_{\mathbf{p},\alpha_k}$, and directly induce $C^*_{\mathbf{p}, \alpha}$ from  $E^*_{\mathbf{p}, \alpha}$.
Then, we can obtain the set of theme communities $\mathbb{T}_{\mathbf{p}, \alpha}=\{T^1_{\mathbf{p}, \alpha},\ldots, T^m_{\mathbf{p}, \alpha}\}$ by finding the set of maximal connected subgraphs in $C^*_{\mathbf{p},\alpha}$.
}

\nop{
$C^*_{\mathbf{p}, \alpha}$ by first obtaining $E^*_{\mathbf{p}, \alpha}=\bigcup_{\alpha_k>\alpha} R_{\mathbf{p},\alpha_k}$, then inducing $C^*_{\mathbf{p}, \alpha}$ from  $E^*_{\mathbf{p}, \alpha}$.
}

Using $\mathcal{L}_\mathbf{p}$, we can efficiently get the set of theme communities $\mathbb{T}_{\mathbf{p}, \alpha}=\{T^1_{\mathbf{p}, \alpha},\ldots, T^m_{\mathbf{p}, \alpha}\}$ in two steps. 
Denote by $E^*_{\mathbf{p},\alpha}$ the set of edges of $C^*_{\mathbf{p},\alpha}$, in the first step, we compute $E^*_{\mathbf{p},\alpha}=\bigcup_{\alpha_k>\alpha} R_{\mathbf{p},\alpha_k}$, and induce $C^*_{\mathbf{p},\alpha}$ from $E^*_{\mathbf{p},\alpha}$.
In the second step, we obtain $\mathbb{T}_{\mathbf{p}, \alpha}$ by finding the set of maximal connected subgraphs in $C^*_{\mathbf{p},\alpha}$.

\nop{
Using $\mathcal{L}_\mathbf{p}$, we can efficiently get the maximal $(\mathbf{p}, \alpha)$-truss $C^*_{\mathbf{p}, \alpha}$ in two steps: 
1) obtain the set of edges of $C^*_{\mathbf{p},\alpha}$ by $\bigcup\limits_{\alpha_k>\alpha} R_{\mathbf{p},\alpha_k}$; 2) induce set of vertices from the set of edges.

the set of theme communities $\mathbb{T}_{\mathbf{p}, \alpha}=\{T^1_{\mathbf{p}, \alpha},\ldots, T^m_{\mathbf{p}, \alpha}\}$ in two steps: 
1) we get the set of edges of $C^*_{\mathbf{p},\alpha}$ by $\bigcup\limits_{\alpha_k>\alpha} R_{\mathbf{p},\alpha_k}$
and inducing $V^*_{\mathbf{p},\alpha}$ from $E^*_{\mathbf{p},\alpha}$;
2) we obtain $\mathbb{T}_{\mathbf{p}, \alpha}$ by finding the set of maximal connected subgraphs in $C^*_{\mathbf{p},\alpha}$.
}

Next, we introduce how to compute the cohesiveness of every theme community in $\mathbb{T}_{\mathbf{p}, \alpha}$.

\nop{
$E^*_{\mathbf{p},\alpha}=\bigcup\limits_{\alpha_k>\alpha} R_{\mathbf{p},\alpha_k}$
}

\begin{theorem}
\label{thm:tc_cohesiveness}
Given a theme community $T^i_{\mathbf{p}, \alpha}\in\mathbb{T}_{\mathbf{p}, \alpha}$, denote by $\gamma$ the cohesiveness of  $T^i_{\mathbf{p}, \alpha}$, if $\alpha_k$ is the smallest cohesion threshold in $\mathcal{A}_\mathbf{p}$ such that $T^i_{\mathbf{p}, \alpha}\not\subseteq C^*_{\mathbf{p},\alpha_k}$, then $\gamma=\alpha_{k}$.
\end{theorem}

\begin{proof}
First, we prove $\gamma \leq \alpha_{k}$.
By Definition~\ref{Def:cohe_theme_community}, $\gamma$ is the minimum cohesion of all edges in $T^i_{\mathbf{p}, \alpha}$. 
Since $T^i_{\mathbf{p}, \alpha}\not\subseteq C^*_{\mathbf{p},\alpha_k}$, it follows Definition~\ref{Def:pattern_truss} that $\gamma \leq \alpha_{k}$.

Second, we prove $\gamma \geq \alpha_{k}$. Since $\alpha_k$ is the smallest cohesion threshold in $\mathcal{A}_\mathbf{p}$ such that $T^i_{\mathbf{p}, \alpha}\not\subseteq C^*_{\mathbf{p},\alpha_k}$, we have $T^i_{\mathbf{p}, \alpha}\subseteq C^*_{\mathbf{p},\alpha_{k-1}}$, and thus $\gamma \geq \beta_{\mathbf{p}, \alpha_{k-1}}$. 
Since $\alpha_{k} = \beta_{\mathbf{p}, \alpha_{k-1}}$, $\gamma \geq \alpha_{k} $. The theorem follows.
\end{proof}

According to Theorem~\ref{thm:tc_cohesiveness}, $\mathcal{A}_\mathbf{p}$ is exactly the set of the cohesiveness of all theme communities in $G_\mathbf{p}$. 
To compute the cohesiveness of a theme community $T^i_{\mathbf{p}, \alpha}\in\mathbb{T}_{\mathbf{p}, \alpha}$, we simply use $\mathcal{L}_\mathbf{p}$ to find the smallest $\alpha_k\in\mathcal{A}_\mathbf{p}$ such that $T^i_{\mathbf{p}, \alpha}\not\subseteq C^*_{\mathbf{p},\alpha_k}$.

Next, we introduce how to use the decomposition property of maximal $(\mathbf{p}, \alpha)$-truss to build a TC-Tree.

\nop{
$\mathcal{L}_\mathbf{p}$ also provides the nontrivial range of $\alpha$ for $G_\mathbf{p}$. 
The upper bound of $\alpha$ in $G_\mathbf{p}$ is $\alpha^*_\mathbf{p}=\max\mathcal{A}_\mathbf{p}$, since $C^*_{\mathbf{p},\alpha}=\emptyset$ for all $\alpha\geq \alpha^*_\mathbf{p}$. 
Therefore, the nontrivial range of $\alpha$ for $G_\mathbf{p}$ is $\alpha\in[0, \alpha^*_\mathbf{p})$, where $\alpha^*_\mathbf{p}$ is stored in the last entry of $\mathcal{L}_\mathbf{p}$.
}

\nop{
finding the smallest cohesion threshold in $\mathcal{A}_\mathbf{p}$ such that $T^i_{\mathbf{p}, \alpha}\not\subseteq C^*_{\mathbf{p},\alpha_k}$. This can be easily done 
}

\nop{
Since $\alpha_k$ is the smallest cohesion threshold in $\mathcal{A}_\mathbf{p}$ such that $T^i_{\mathbf{p}, \alpha}\not\subseteq C^*_{\mathbf{p},\alpha_k}$, $T^i_{\mathbf{p}, \alpha}\subseteq C^*_{\mathbf{p},\alpha_{k-1}}$. Since $\gamma$ is the minimum cohesion of all edges in $T^i_{\mathbf{p}, \alpha}$, it follows Definition~\ref{Def:pattern_truss} that $\gamma \leq \alpha_{k+1}$.

First, we prove $\gamma \geq \alpha_{k}$. 
By Definition~\ref{Def:cohe_theme_community}, $\gamma$ is the minimum cohesion of all edges in $T^i_{\mathbf{p}, \alpha}$. 
Since $T^i_{\mathbf{p}, \alpha}\subseteq C^*_{\mathbf{p},\alpha_k}$ and $\alpha_{k+1} = \min\limits_{e_{ij}\in E^*_{\mathbf{p},\alpha_{k}}} eco_{ij}(C^*_{\mathbf{p},\alpha_{k}})$, thus $\gamma \geq \alpha_{k+1} $.
}

\nop{
We prove $\gamma=\alpha_{k+1}$ by proving $\gamma\in \mathcal{A}_\mathbf{p}$ and $\alpha_k < \gamma \leq \alpha_{k+1}$.

First, since $T^i_{\mathbf{p}, \alpha}$ is a maximal connected subgraph in $C^*_{\mathbf{p},\alpha}$, removing the edges in $C^*_{\mathbf{p},\alpha}$ that are not contained in $T^i_{\mathbf{p}, \alpha}$ breaks no triangle in $T^i_{\mathbf{p}, \alpha}$, thus does not change the minimum cohesion of any edge in $T^i_{\mathbf{p}, \alpha}$.

Denote by $\alpha_i\in\mathcal{A}_\mathbf{p}$ the largest edge cohesion in $\mathcal{A}_\mathbf{p}$ such that

Since the cohesiveness of $T^i_{\mathbf{p}, \alpha}$ is $\gamma$, the minimum cohesion of all edges in $T^i_{\mathbf{p}, \alpha}$ is always $\gamma$ for all $C^*_{\mathbf{p},\alpha_i}$ such that $\alpha_i\in\mathcal{A}_\mathbf{p}$ and $\alpha_i<\gamma$.

Therefore, the minimum cohesion of all edges in $T^i_{\mathbf{p}, \alpha}$ is always $\gamma$

 $\gamma\in \mathcal{A}_\mathbf{p}$.

Second, we prove $\alpha_k < \gamma \leq \alpha_{k+1}$. Since $\alpha_k$ is the largest cohesion threshold in $\mathcal{A}_\mathbf{p}$ such that $T^i_{\mathbf{p}, \alpha}\subseteq C^*_{\mathbf{p},\alpha_k}$, we know $T^i_{\mathbf{p}, \alpha}\subseteq C^*_{\mathbf{p},\alpha_k}$ and $T^i_{\mathbf{p}, \alpha}\not\subseteq C^*_{\mathbf{p},\alpha_{k+1}}$. Therefore, $\alpha_k < \gamma \leq \alpha_{k+1}$.
The theorem follows.
}

\nop{
Since $T^i_{\mathbf{p}, \alpha}\subseteq C^*_{\mathbf{p},\alpha_k}$, the cohesion of every edge in $T^i_{\mathbf{p}}$ is larger than $\alpha_k$, thus $\gamma > \alpha_k$.  Therefore, $\gamma \leq \alpha_{k+1}$.
}

\nop{
$\mathcal{L}_\mathbf{p}$ also provides the nontrivial range of $\alpha$ for $G_\mathbf{p}$. 
The upper bound of $\alpha$ in $G_\mathbf{p}$ is $\alpha^*_\mathbf{p}=\max\mathcal{A}_\mathbf{p}$, since $C^*_{\mathbf{p},\alpha}=\emptyset$ for all $\alpha\geq \alpha^*_\mathbf{p}$. 
Therefore, the nontrivial range of $\alpha$ for $G_\mathbf{p}$ is $\alpha\in[0, \alpha^*_\mathbf{p})$, where $\alpha^*_\mathbf{p}$ can be easily obtained by visiting the last entry of $\mathcal{L}_\mathbf{p}$.
}

\begin{figure}[t]
\centering
\includegraphics[width=85mm]{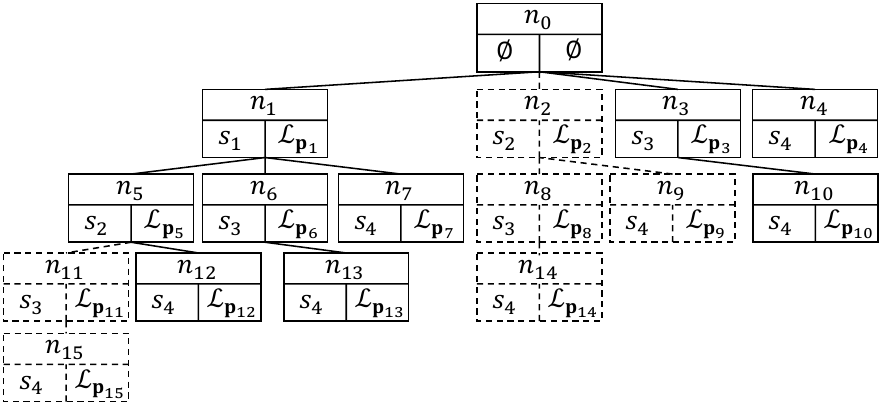}

\vspace{-2mm}
\caption{An example of SE-Tree and TC-Tree when $S=\{s_1, s_2, s_3, s_4\}$ and $\mathcal{L}_{\mathbf{p}_0}=\mathcal{L}_{\mathbf{p}_2}=\mathcal{L}_{\mathbf{p}_8}=\mathcal{L}_{\mathbf{p}_9}=\mathcal{L}_{\mathbf{p}_{11}}=\mathcal{L}_{\mathbf{p}_{14}}=\mathcal{L}_{\mathbf{p}_{15}}=\emptyset$. SE-Tree includes all nodes marked in solid and dashed lines. TC-tree contains only the nodes in solid line.}
\label{Fig:tct}
\end{figure}

\subsection{Theme Community Tree}
A TC-Tree, denoted by $\mathcal{T}$, is an extension of a \emph{set enumeration tree} (SE-Tree)~\cite{rymon1992search} and is carefully customized for efficient theme community indexing and query answering.

\nop{
A SE-Tree is a basic data structure that enumerates all the patterns in the power set of $S$. 
Each node of the SE-Tree stores an item in $S$. 
The $i$-th node of the SE-Tree is denoted by $n_i$, the item stored in $n_i$ is denoted by $s_{n_i}\in S$.
We map each item in $S$ to a unique rank using a \emph{ranking function} $\psi: S \rightarrow \mathbb{Z}^+$. 
For the $j$-th item $s_j$ in $S$, the ranking function maps $s_j$ to its rank $j$ in $S$, that is, $\psi(s_j) = j$.
\nop{We compare a pair of items $s_i$ and $s_j$ in $S$ by a \emph{pre-imposed order}, that is, \emph{if $j > i$ then $s_j > s_i$}.}
Then, for every node $n_i$ of the SE-Tree, we build a child of $n_i$ for each item $s_j\in S$ that satisfies $\psi(s_j) > \psi(s_{n_i})$.
In this way, every node $n_i$ of the SE-Tree uniquely represents a pattern in the power set of $S$, denoted by $\mathbf{p}_i\subseteq S$, which is the union of the items stored in all the nodes along the path from the root to $n_i$. 
As shown in Figure~\ref{Fig:tct}, the SE-Tree of $S=\{s_1, s_2, s_3, s_4\}$ has 16 nodes $\{n_0, n_1, n_2, \cdots, n_{15}\}$.
For node $n_{13}$, the path from the root to $n_{13}$ contains nodes $\{n_0, n_1, n_6, n_{13}\}$, where $s_{n_0}=\emptyset$, $s_{n_1}=s_1$, $s_{n_6}=s_3$ and $s_{n_{13}}=s_4$, thus $\mathbf{p}_{13}=\{s_1, s_3, s_4\}$.
}

A SE-Tree is a basic data structure that enumerates all the subsets of a set $S$. 
A total order $\prec$ on the items in $S$ is assumed. 
Thus, any subset of $S$ can be written as a sequence of items in the order of $\prec$.
\nop{ \todo{This is not right. $\prec$ is the order on the alphabet.  Subsets of $S$ has to be ordered using the dictionary order based on $\prec$ instead of $\prec$ itself.}}

Every node of a SE-Tree uniquely represents a subset of $S$.
The root node represents empty set $\emptyset$. 
For subsets $S_1$ and $S_2$ of $S$, the node representing $S_2$ is the child of the node representing $S_1$, if $S_1 \subset S_2$, $|S_2 \setminus S_1|=1$, and $S_1$ is a prefix of $S_2$ when $S_1$ and $S_2$ are written as sequences of items in order $\prec$. 
Each node of a SE-Tree only stores the item in $S$ that is appended to the parent node to extend the child from the parent.
In this way, the set of items represented by node $n_i$ is the union of the items stored in all the nodes along the path from the root to $n_i$.
Figure~\ref{Fig:tct} shows an example of the SE-tree of set $S=\{s_1, s_2, s_3, s_4\}$.  
For node $n_{13}$, the path from the root to $n_{13}$ contains nodes $n_0$-$n_1$-$n_6$-$n_{13}$, thus the set of items represented by $n_{13}$ is $\{s_1, s_3, s_4\}$.

\nop{The SE-Tree efficiently stores the item $S_2\setminus S_1$ in the child node that represents $S_2$. In this way, }

\nop{
Each node of the SE-Tree stores an item in $S$. 
The $i$-th node of the SE-Tree is denoted by $n_i$, the item stored in $n_i$ is denoted by $s_{n_i}\in S$.
We map each item in $S$ to a unique rank using a \emph{ranking function} $\psi: S \rightarrow \mathbb{Z}^+$. 
For the $j$-th item $s_j$ in $S$, the ranking function maps $s_j$ to its rank $j$ in $S$, that is, $\psi(s_j) = j$.
\nop{We compare a pair of items $s_i$ and $s_j$ in $S$ by a \emph{pre-imposed order}, that is, \emph{if $j > i$ then $s_j > s_i$}.}
Then, for every node $n_i$ of the SE-Tree, we build a child of $n_i$ for each item $s_j\in S$ that satisfies $\psi(s_j) > \psi(s_{n_i})$.
In this way, every node $n_i$ of the SE-Tree uniquely represents a pattern in the power set of $S$, denoted by $\mathbf{p}_i\subseteq S$, which is the union of the items stored in all the nodes along the path from the root to $n_i$. 
As shown in Figure~\ref{Fig:tct}, the SE-Tree of $S=\{s_1, s_2, s_3, s_4\}$ has 16 nodes $\{n_0, n_1, n_2, \cdots, n_{15}\}$.
For node $n_{13}$, the path from the root to $n_{13}$ contains nodes $\{n_0, n_1, n_6, n_{13}\}$, where $s_{n_0}=\emptyset$, $s_{n_1}=s_1$, $s_{n_6}=s_3$ and $s_{n_{13}}=s_4$, thus $\mathbf{p}_{13}=\{s_1, s_3, s_4\}$.
}

\nop{
Building a SE-Tree requires a pre-imposed order of the items in $S$. 
We define such \emph{pre-imposed order} as: \emph{$\forall s_i, s_j\in S$, if $j > i$ then $s_j > s_i$}.
\mc{The following description of SE-tree is confusing and inaccurate. The mapping between nodes $i_i$ and items $s_j$ is unclear.}
\todo{As shown in Figure~\ref{Fig:tct}, for $S=\{s_1, s_2, s_3, s_4\}$, the SE-Tree has 16 nodes $\{n_0, n_1, n_2, \cdots, n_{15}\}$. Each node $n_i$ uniquely represents a pattern $\mathbf{p}_i$, which is the union of items in all nodes along the path from $n_0$ to $n_i$. 
Take $n_{13}$ for an example, the path from $n_0$ to $n_{13}$ contains nodes $\{n_0, n_1, n_6, n_{13}\}$, thus $\mathbf{p}_{13}=\{s_1, s_3, s_4\}$.}
}

A TC-Tree is an extension of a SE-Tree.
In a TC-Tree, each node $n_i$ represents a pattern $\mathbf{p}_i$, which is a subset of $S$. 
The item stored in $n_i$ is denoted by $s_{n_i}$. 
We also store the decomposed maximal $(\mathbf{p},\alpha)$-truss $\mathcal{L}_{\mathbf{p}_i}$ in $n_i$. 
To save memory, we omit the nodes $n_j$ $(j\geq 1)$ whose decomposed maximal $(\mathbf{p},\alpha)$-trusses are $\mathcal{L}_{\mathbf{p}_j}=\emptyset$. 

We can build a TC-Tree in a top-down manner efficiently.
If $\mathcal{L}_{\mathbf{p}_j}=\emptyset$, we can prune the entire subtree rooted at $n_j$ immediately.
This is because, for node $n_j$ and its descendant $n_d$, we have $\mathbf{p}_j\subset \mathbf{p}_d$. Since $\mathcal{L}_{\mathbf{p}_j}=\emptyset$, we can derive from Proposition~\ref{Prop:pam} that $\mathcal{L}_{\mathbf{p}_d}=\emptyset$. As a result, all descendants of $n_j$ can be immediately pruned.

Algorithm~\ref{Alg:tctb} gives the details of building a TC-Tree $\mathcal{T}$. 
Lines 2-5 generate the nodes at the first layer of $\mathcal{T}$.
Since the theme networks induced by different items in $S$ are independent, we can compute $\mathcal{L}_{\mathbf{p}_i}$ in parallel. Our implementation uses multiple threads for this step.
Lines 6-12 iteratively build the rest of the nodes of $\mathcal{T}$ in breadth first order.
Here, $n_f.siblings$ is the set of nodes that have the same parent as $n_f$. The children of $n_f$, denoted by $n_c$, are built in Lines 8-11.
\nop{The pre-imposed order of items (i.e., $n_b.s_b > n_f.s_f$) is applied in Line 9 to ensure that the TC-Tree is a subtree of SE-Tree.  \mc{What do you mean by ``sub-tree'' here?, Do you mean a sub-graph?}}
In Line 9, we apply Proposition~\ref{Lem:gip} to efficiently calculate $\mathcal{L}_{\mathbf{p}_c}$.
Since $\mathbf{p}_c=\mathbf{p}_f \cup \mathbf{p}_b$, we have $\mathbf{p}_f\subset \mathbf{p}_c$ and $\mathbf{p}_b\subset \mathbf{p}_c$. 
From Proposition~\ref{Lem:gip}, we know $C^*_{\mathbf{p}_c, \alpha_0}\subseteq C^*_{\mathbf{p}_f, \alpha_0} \cap C^*_{\mathbf{p}_b, \alpha_0}$.
Therefore, we can find $C^*_{\mathbf{p}_c, \alpha_0}$ within a small subgraph $C^*_{\mathbf{p}_f, \alpha_0} \cap C^*_{\mathbf{p}_b, \alpha_0}$ using MTD,
and then get $\mathcal{L}_{\mathbf{p}_c}$ by decomposing $C^*_{\mathbf{p}_c, \alpha_0}$.

In summary, every node of a TC-Tree stores the decomposed maximal $(\mathbf{p}, \alpha)$-truss $\mathcal{L}_\mathbf{p}$ of a unique pattern $\mathbf{p}\subseteq S$. 
Next, we introduce how to efficiently query a TC-Tree.

\nop{
Since $\mathcal{L}_\mathbf{p}$ also stores the nontrivial range of $\alpha$ in $G_\mathbf{p}$, we can easily use the TC-Tree to obtain the range of $\alpha$ for all theme networks in $G$. 
This range helps the users to set their queries.
}

\nop{
\begin{algorithm}[t]
\caption{Query Theme Community Tree}
\label{Alg:qtct}
\KwIn{A TC-Tree $\mathcal{T}$ and a query $(\mathbf{q}, \alpha_\mathbf{q})$.}
\KwOut{A rank list $\mathcal{R}_{\mathbf{q}, \alpha_\mathbf{q}}$ of theme communities.}
\BlankLine
\begin{algorithmic}[1]
    \STATE Initialization: $Q\leftarrow n_0$.
    \WHILE{$Q\neq\emptyset$}
    	\STATE $n_{f}\leftarrow Q.pop()$.
	\FOR{each node $n_c\in n_f.children \land s_{n_c}\in \mathbf{q}$}
			\STATE Get $C^*_{\mathbf{p}_c, \alpha_\mathbf{q}}$ from $\mathcal{L}_{\mathbf{p}_c}$ by Equation~\ref{Eqn:get_mpt}.
			\IF{$C^*_{\mathbf{p}_c, \alpha_\mathbf{q}}\neq \emptyset$}
				\STATE $\mathbb{T}_{\mathbf{p}_c, \alpha_\mathbf{q}}\leftarrow\{\text{all theme communities in } C^*_{\mathbf{p}_c, \alpha_\mathbf{q}}\}.$
				\STATE $\mathcal{R}_{\mathbf{q}, \alpha_\mathbf{q}}\leftarrow \mathcal{R}_{\mathbf{q}, \alpha_\mathbf{q}} \cup \mathbb{T}_{\mathbf{p}_c, \alpha_\mathbf{q}}$ and $Q.push(n_c)$.
			\ENDIF
	\ENDFOR
    \ENDWHILE
    \STATE Sort $\mathcal{R}_{\mathbf{q}, \alpha_\mathbf{q}}$ by the cohesiveness of theme communities.
\BlankLine
\RETURN The rank list $\mathcal{R}_{\mathbf{q}, \alpha_\mathbf{q}}$ of theme communities.
\end{algorithmic}
\end{algorithm}
}

\nop{Leon's note: we first query by $(\mathbf{q}, \alpha_\mathbf{q})$. If $C^*_{\mathbf{q}, \alpha_\mathbf{q}}\neq\emptyset$ then we return $\mathbb{T}_{\mathbf{q}, \alpha_\mathbf{q}}$. Otherwise, we recommend a rank list of subpatterns of $\mathbf{q}$ to the user.}

\subsection{Querying and Query Recommendation}

In this subsection, we first introduce how to query a TC-Tree $\mathcal{T}$ by a \textbf{query pattern} $\mathbf{q}$, then we illustrate how to recommend new queries based on a user-provided query pattern $\mathbf{q}$.

When querying a TC-Tree $\mathcal{T}$ by a query pattern $\mathbf{q}$, the \emph{answer} to the query, denoted by $\mathcal{R}_\mathbf{q}$, is a ranked list of  all the theme communities in $G_\mathbf{q}$, that is, $\cup_{\alpha_i\in\mathcal{A}_\mathbf{p}} \mathbb{T}_{\mathbf{q}, \alpha_i}$.

\nop{
all maximal $(\mathbf{q}, \alpha)$-trusses $C^*_{\mathbf{q}, \alpha}$ for every $\alpha\in\mathcal{A}_\mathbf{q}$, 
that is, .
}

We obtain $\mathcal{R}_\mathbf{q}$ in the following steps:
First, we find the node $n_i$ in $\mathcal{T}$ such that the pattern of $n_i$ is $\mathbf{p}_i=\mathbf{q}$. Second, we obtain the set of theme communities in $G_\mathbf{q}$ using the $\mathcal{L}_\mathbf{q}$ stored in $n_i$.  Last, we obtain $\mathcal{R}_\mathbf{q}$ by sorting the theme communities in the descending order of their cohesiveness.

\nop{

2) calculate $C^*_{\mathbf{q}, \alpha_\mathbf{q}}$ using the $\mathcal{L}_\mathbf{q}$ stored in $n_i$; 3) obtain the set of theme communities $\mathbb{T}_{\mathbf{q}, \alpha_\mathbf{q}}$ by finding the set of maximal connected subgraphs in $C^*_{\mathbf{q}, \alpha_\mathbf{q}}$; and 4) obtain $\mathcal{R}_{\mathbf{q}, \alpha_\mathbf{q}}$ by sorting the theme communities in $\mathbb{T}_{\mathbf{q}, \alpha_\mathbf{q}}$ in descending order of their cohesiveness.

}

\nop{
If there is no theme community that satisfies the user query $(\mathbf{q}, \alpha_\mathbf{q})$, then $C^*_{\mathbf{q}, \alpha_\mathbf{q}}=\emptyset$ and we return $\mathcal{R}_{\mathbf{q}, \alpha_\mathbf{q}}=\emptyset$.
}

\nop{
A straight forward way to query a TC-Tree $\mathcal{T}$ by $(\mathbf{q}, \alpha_\mathbf{q})$ is to first find the tree node $n_i$ such that $\mathbf{p_i}=\mathbf{q}$, then extract all theme communities from $\mathcal{L}_\mathbf{p_i}$ by Equation~\ref{Eqn:get_mpt}.
}

\nop{
The \emph{answer} to query $(\mathbf{q}, \alpha_\mathbf{q})$ is the set of maximal $(\mathbf{p},\alpha)$-trusses with respect to $\alpha_\mathbf{q}$ for any sub-pattern of $\mathbf{q}$, that is, $\mathbb{C}_\mathbf{q}(\alpha_\mathbf{q})=\{C^*_\mathbf{p}(\alpha_\mathbf{q}) \mid C^*_\mathbf{p}(\alpha_\mathbf{q})\neq \emptyset, \mathbf{p}\subseteq \mathbf{q}\}$.
With $\mathbb{C}_\mathbf{q}(\alpha_\mathbf{q})$, one can easily extract theme communities by finding the maximal connected subgraphs in all the retrieved maximal $(\mathbf{p},\alpha)$-trusses.
}

From time to time, a user-provided query pattern $\mathbf{q}$ may not lead to any answer. The cohesiveness and size of the retrieved theme communities may be too small or the answer to the query can even be $\mathcal{R}_\mathbf{q}=\emptyset$.
In such a case, we can explore the TC-Tree to recommend a ranked list of new query patterns, denoted by $\mathcal{U}=\mathbf{q}_1, \ldots, \mathbf{q}_k$.

Those recommended query patterns in $\mathcal{U}$ should satisfy three conditions. 
First, querying $\mathcal{T}$ by any query pattern $\mathbf{q}_i\in\mathcal{U}$ leads to a non-empty set of theme communities, that is, $\mathcal{L}_{\mathbf{q}_i}\neq\emptyset$. 
Second, the recommended query patterns should be contained in $\mathbf{q}$, because querying $\mathcal{T}$ by a pattern containing $\mathbf{q}$ only retrieves those theme communities with smaller cohesiveness and size.
Third, every query pattern $\mathbf{q}_i\in\mathcal{U}$ should be the most similar to the original query pattern $\mathbf{q}$, that is, among all query patterns that satisfy the previous two conditions, $\mathbf{q}_i$ has the minimum size of set difference $|\mathbf{q}\setminus\mathbf{q}_i|$. 

According to the third condition, the set differences of $\mathbf{q}$ and every query pattern in $\mathcal{U}$ must have the same size.
Since a query pattern $\mathbf{q}$ contains at most $|\mathbf{q}|\choose m$ patterns that have the same set difference size $m$ with respect to $\mathbf{q}$, the number of query patterns in $\mathcal{U}$ is at most $|\mathbf{q}|\choose \floor*{|\mathbf{q}|/2}$. In practice, since many query patterns contained in $\mathbf{q}$ retrieve no theme community from $\mathcal{T}$, the actual volume of $\mathcal{U}$ is very small.

\nop{
Denote by $m$ the set difference between $\mathbf{q}$ and a recommended query pattern in $\mathcal{U}$. Since $\mathbf{q}$ contains $|\mathbf{q}|\choose m$ patterns that has a set difference $m$ with respect to $\mathbf{q}$, the number of recommended query patterns in $\mathcal{U}$ is at most $|\mathbf{q}|\choose \floor*{\frac{|\mathbf{q}|}{2}}$.
}

\nop{$|\mathbf{q}|\choose \left \lceil{\frac{|\mathbf{q}|}{2}}\right \rceil $}
\nop{
Since a query pattern $\mathbf{q}$ contains $|\mathbf{q}|\choose m$ patterns that has a set difference $m$ with respect to $\mathbf{q}$, 

There can be more than one query pattern in $\mathcal{U}$ that have the minimum set difference with respect to $\mathbf{q}$.
}
\nop{\todo{There should be only one ``the most similar''.  This sentence is confusing.}}

\nop{
and be the most similar to the original query pattern $\mathbf{q}$. We do not recommend any query pattern that contains $\mathbf{q}$, because querying $\mathcal{T}$ by a pattern containing $\mathbf{q}$ only retrieves theme communities with smaller cohesiveness and size.  That is, $\mathbf{q}_i\subset \mathbf{q}$, and $\mathbf{q}_i$ and $\mathbf{q}$ have the minimum set difference $|\mathbf{q}\setminus\mathbf{q}_i|$. 
}
\nop{\todo{This paragraph is not easy to read or understand.  It may be helpful if you can provide an example to illustrate.}}

\nop{
Based on the original query pattern $\mathbf{q}$, we can also explore the TC-Tree to recommend the user a rank list of new queries, denoted by $\mathcal{U}=(\mathbf{q}_1, \alpha^*_{\mathbf{q}_1}), \ldots, (\mathbf{q}_k, \alpha^*_{\mathbf{q}_k})$. 
Every recommended query $(\mathbf{q}_i, \alpha^*_{\mathbf{q}_i})$ in $\mathcal{U}$ satisfies the following three \textbf{conditions}:
1) $\mathcal{L}_{\mathbf{q}_i}\neq\emptyset$; 
2) $\mathbf{q}_i\subset \mathbf{q}$, and $\mathbf{q}_i$ and $\mathbf{q}$ have the minimum set difference $|\mathbf{q}\setminus\mathbf{q}_i|$; and
3) $\alpha^*_{\mathbf{q}_i}=\max \mathcal{A}_{\mathbf{q}_i}$.
}

\nop{
Intuitively, Condition 1 requires every query in $\mathcal{U}$ to be \emph{valid}, that is, querying $\mathcal{T}$ by any query in $\mathcal{U}$ gets a non-empty set of theme communities. Condition 2 requires that the recommended query patterns are contained in $\mathbf{q}$ and are the most similar to the original query pattern $\mathbf{q}$.
We do not recommend any query pattern that contains $\mathbf{q}$, because querying $\mathcal{T}$ by a pattern containing $\mathbf{q}$ only retrieves theme communities with smaller cohesiveness.
}

\nop{
Intuitively, Condition 1 requires every query in $\mathcal{U}$ to be \emph{valid}, that is, querying $\mathcal{T}$ by any query in $\mathcal{U}$ gets a non-empty theme community in $G$; Condition 2 requires that the patterns of the recommended queries are the most similar to the pattern $\mathbf{q}$ of the original user query; in Condition 3, $\alpha^*_{\mathbf{q}_i}$ gives the non-trivial range of cohesion threshold for $G_{\mathbf{q}_i}$.
}

Now, we illustrate how to rank all the recommended query patterns in $\mathcal{U}$. For each query pattern $\mathbf{q}_i\in\mathcal{U}$, we first access the last entry of $\mathcal{L}_{\mathbf{q}_i}$ to obtain $\alpha^*_{\mathbf{q}_i}$, which is the maximum cohesiveness of all theme communities in $G_{\mathbf{q}_i}$.
Then, we rank the new query patterns in $\mathcal{U}$ in the descending order of $\alpha^*_{\mathbf{q}_i}$, so that the user can first explore the query patterns that retrieve theme communities with large cohesiveness.

\nop{
 if $\mathcal{L}_\mathbf{q}\neq\emptyset$, then we obtain $\alpha^*_\mathbf{q}=\max \mathcal{A}_\mathbf{q}$ by visiting the last entry of $\mathcal{L}_\mathbf{q}$, and recommend the user a valid query $(\mathbf{q}, \alpha^*_\mathbf{q})$ that gives the non-trivial range of cohesion threshold for the theme communities in $G_\mathbf{q}$.
If $\mathcal{L}_\mathbf{q}=\emptyset$, then
}

\nop{
In this case, we can further explore the TC-Tree to recommend the user a rank list of potential queries $\mathcal{U}=(\mathbf{q}_1, \alpha^*_{\mathbf{q}_1}), (\mathbf{q}_2, \alpha^*_{\mathbf{q}_2}), \ldots, (\mathbf{q}_k, \alpha^*_{\mathbf{q}_k})$ that satisfies the following conditions:
}

As shown in Algorithm~\ref{Alg:qr}, the query recommendation method simply traverses the TC-Tree in the breadth first manner and collects the set of new queries that satisfy the above conditions for the new query patterns in $\mathcal{U}$. 

\nop{
The efficiency of Algorithm~\ref{Alg:qtct} comes from three factors.
First, in Line 4, if $s_{n_c}\not\in \mathbf{q}$, then $\mathbf{p}_c\not\subset \mathbf{q}$ and the patterns of all descendants of $n_c$ are not sub-patterns of $\mathbf{q}$. Therefore, we can prune the entire subtree rooted at $n_c$. 
Second, in Line 6, if $C^*_{\mathbf{p}_c}(\alpha_\mathbf{q})=\emptyset$, we can prune the entire subtree rooted at $n_c$, because, according to Proposition~\ref{Prop:pam}, no descendants of $n_c$ can induce a maximal $(\mathbf{p},\alpha)$-truss with respect to $\alpha_\mathbf{q}$.
Last, in Line 5, getting $C^*_{\mathbf{p}_c}(\alpha_\mathbf{q})$ from $\mathcal{L}_{\mathbf{p}_c}$ is efficient using Equation~\ref{Eqn:get_mpt}.
}

\nop{
then $C^*_{p_{g}}(\alpha_\mathbf{q})=\emptyset$ for all children $n_{g}$ of $n_c$. This is an immediate result of the second property of Proposition~\ref{Prop:pam}, since $p_c\subset p_g$. Therefore, when $C^*_{p_c}(\alpha_\mathbf{q})=\emptyset$, . 
}

In summary, TC-Tree enables fast user query answering and efficient query recommendation. 
As demonstrated by the case study in Section~\ref{Sec:cs} and the experiments in Section~\ref{Sec:eotci}, TC-Tree is efficient to build, easy to query, and scales well to index a large number of theme communities using practical size of memory.

\section{Experiments}
\label{sec:exp}

In this section, we first present a case study to demonstrate how theme community finding is useful.  Then, we comprehensively evaluate the performance of Theme Community Scanner (TCS), Theme Community Finder Apriori (TCFA), Theme community Finder Intersection (TCFI) and Theme Community Tree (TC-Tree). 
Last, we compare the theme community detection performance of TCFI and that of two vertex attributed network methods, CESNA~\cite{yang2013community} and SCI~\cite{wang2016semantic}.

\nop{\todo{What does this mean?} \mc{All these methods are compatible with both the absolute frequency and the relative frequency of patterns.}}

For each of our proposed methods, the final output is the set of theme communities obtained by finding the maximal connected subgraphs in the detected maximal $(\mathbf{p}, \alpha)$-trusses.
We use the absolute frequency for TC-Tree.
However, since the absolute frequency is unnormalized, which makes it difficult to set the frequency threshold $\epsilon$ for TCS, we adopt the relative frequency for TCS, and also use the relative frequency for TCFA and TCFI to fairly compare with TCS.
Since TC-Tree is an indexing method, it is not directly comparable with TCS, TCFA and TCFI.

We implement TCS, TCFA and TCFI in Java. 
In order to efficiently index the theme communities in large DBNs, we implement TC-Tree in C++ and parallelize the steps in Lines 2-5 of Algorithm~\ref{Alg:tctb} with 4 threads using OpenMP.
The source code of CESNA and SCI was provided by their authors.
All experiments are performed on a Windows 7 PC with Core-i7 CPU, 32GB RAM and a 5400 rpm hard drive.

\nop{
We compare the performance of TCS, TCFA and TCFI in Sections~\ref{Sec:eop} and~\ref{Sec:eotcf}, and evaluate the indexing performance of TC-Tree in Section~\ref{Sec:eotci}. Last, we present some interesting case studies in Section~\ref{Sec:cs}.
}

\nop{
The performance of TCS, TCFA and TCFI are evaluated on the following aspects.  First, ``Time Cost'' measures the total runtime of each method.
Second, the ``Number of Theme Communities (NTC)'' is the total number of all theme communities detected in $G$.
The evaluation metrics of TC-Tree are discussed in Section~\ref{Sec:eotci}
}
\nop{
Second, ``Number of Patterns (NP)'', ``Number of Vertices (NV)'' and ``Number of Edges (NE)'' are the total numbers of patterns, vertices and edges in all detected maximal $(\mathbf{p},\alpha)$-trusses, respectively. 
When counting NV, a vertex is counted $k$ times if it is contained by $k$ different maximal $(\mathbf{p},\alpha)$-trusses. 
For NE, an edge is counted $k$ times if it is contained by $k$ different maximal $(\mathbf{p},\alpha)$-trusses.
NP is equal to the number of maximal $(\mathbf{p},\alpha)$-trusses, since each maximal $(\mathbf{p},\alpha)$-truss uniquely corresponds to a pattern.
The evaluation metrics of TC-Tree are discussed in Section~\ref{Sec:eotci}
}

The following 3 data sets are used.

The \textbf{Brightkite (BK)}
data set is a public check-in data set produced by the location-based social networking website \url{BrightKite.com}~\cite{BKGW_data}. 
It includes a friendship network of 58,228 users and 4,491,143 user check-ins; every user check-in contains the check-in time and location.
We construct a DBN using this data set by taking the user friendship network as the network of the DBN. 
To create the vertex database for a user, we treat each check-in location as an item, and cut the check-in history of a user into periods of 2 days.  The set of check-in locations within a period is a transaction.
A theme community in this DBN is a group of friends who frequently visit the same set of places.

The \textbf{Gowalla (GW)} 
data set is a public data set produced by the location-based social networking website \url{Gowalla.com}~\cite{BKGW_data}. 
It includes a friendship network of 196,591 users and 6,442,890 user check-ins that contain the check-in time and location. 
We transform this data set into a DBN in the same way as BK.  

The \textbf{AMINER}
data set is built from the Citation network v2 (CNV2) data set~\cite{Aminer_data}.
CNV2 contains 1,397,240 papers. We transform it into a DBN in the following two steps. First, we treat each author as a vertex and build an edge between a pair of authors who co-author at least one paper. 
Second, to build the vertex database for an author, we treat each keyword in the abstract of a paper as an item, and all the keywords in the abstract of a paper are turned into a transaction. An author vertex is associated with a vertex database of all papers the author publishes.
In this DBN, a theme community represents a group of authors who collaborate closely and share the same research interest that is described by the same set of keywords.


\begin{algorithm}[t]
\caption{Build Theme Community Tree}
\label{Alg:tctb}
\KwIn{A DBN $G$.}
\KwOut{The TC-Tree $\mathcal{T}$ with root node $n_0$.}
\BlankLine
\begin{algorithmic}[1]
    \STATE Initialization: $Q\leftarrow\emptyset$, $s_{n_0}\leftarrow\emptyset$, $\mathcal{L}_{\mathbf{p}_0}\leftarrow\emptyset$.
    
    \FOR{each item $s_i\in S$}
    	\STATE $s_{n_i}\leftarrow s_i$, $\mathbf{p}_i\leftarrow s_i$ and compute $\mathcal{L}_{\mathbf{p}_i}$.
	\STATE \textbf{if} $\mathcal{L}_{\mathbf{p}_i}\neq \emptyset$ \textbf{then} $n_0.addChild(n_i)$ and $Q.push(n_i)$. 
    \ENDFOR
    
    \WHILE{$Q\neq\emptyset$}
    	\STATE $n_{f}\leftarrow Q.pop()$.
	\FOR{each node $n_b\in n_f.siblings$}
		\STATE \textbf{if} $s_{n_f} \prec s_{n_b}$ \textbf{then} $s_{n_c} \leftarrow s_{n_b}$, $\mathbf{p}_c \leftarrow \mathbf{p}_f \cup \mathbf{p}_b$, and compute $\mathcal{L}_{\mathbf{p}_c}$.
		\STATE \textbf{if} $\mathcal{L}_{\mathbf{p}_c}\neq \emptyset$ \textbf{then} $n_f.addChild(n_c)$ and $Q.push(n_c)$.
	\ENDFOR
    \ENDWHILE
\BlankLine
\RETURN The TC-Tree $\mathcal{T}$ with root node $n_0$.
\end{algorithmic}
\end{algorithm}

\begin{algorithm}[t]
\caption{Query Recommendation}
\label{Alg:qr}
\KwIn{A TC-Tree $\mathcal{T}$ and a query $\mathbf{q}$.}
\KwOut{A ranked list $\mathcal{U}=\mathbf{q}_1, \ldots, \mathbf{q}_k$.}
\BlankLine
\begin{algorithmic}[1]
    \STATE Initialization: $Q\leftarrow n_0$, $\mathcal{U}\leftarrow\emptyset$.
    \WHILE{$Q\neq\emptyset$}
    	\STATE $n_{f}\leftarrow Q.pop()$.
	\FOR{each node $n_c\in n_f.children \land s_{n_c}\in \mathbf{q}$}
			\STATE \textbf{if} $\mathcal{L}_{\mathbf{p}_c}\neq \emptyset$ \textbf{then} $\mathcal{U}\leftarrow \mathcal{U} \cup \mathbf{p}_c$; $Q.push(n_c)$.
	\ENDFOR
    \ENDWHILE
    \STATE Keep each query pattern $\mathbf{q}_i\in\mathcal{U}$ that has the smallest size of set difference $|\mathbf{q}\setminus\mathbf{q}_i|$, and remove the others.
    \STATE Rank all patterns in $\mathcal{U}$ by $\alpha^*_{\mathbf{q}_i}$.
\BlankLine
\RETURN The ranked list $\mathcal{U}$.
\end{algorithmic}
\end{algorithm}

\newcommand{\cellwidth}{13mm}
\begin{table}[t]
\caption{Statistics of the DBNs. \#Trans. is the number of transactions. \#Items is the number of items stored in all vertex databases. \nop{ \#Items (total) is the total number of items stored in all vertex databases, and \#Items (unique) is the number of unique items in $S$.}
}
\label{Table:dss}
\vspace{1mm}
\centering
\begin{tabular}{|p{14mm}<{\centering}|p{\cellwidth}<{\centering}|p{\cellwidth}<{\centering}|p{\cellwidth}<{\centering}|p{\cellwidth}<{\centering}|}
\hline
Data sets 		& \#Vertices 			& \#Edges 			& \#Trans. 			& \#Items 				\\ \hline
BK			& $5.1{\times}10^4$		& $2.1{\times}10^5$		& $1.2{\times}10^6$		& $1.7{\times}10^6$		\\ \hline
GW			&  $1.1{\times}10^5$		& $9.5{\times}10^5$		& $2.0{\times}10^6$		& $3.5{\times}10^6$		\\ \hline
AMINER		& $1.1{\times}10^6$		& $2.6{\times}10^6$		& $3.1{\times}10^6$		& $9.2{\times}10^6$		\\ \hline
\end{tabular}
\end{table}

\begin{figure}[t]
\centering
\includegraphics[width=85mm]{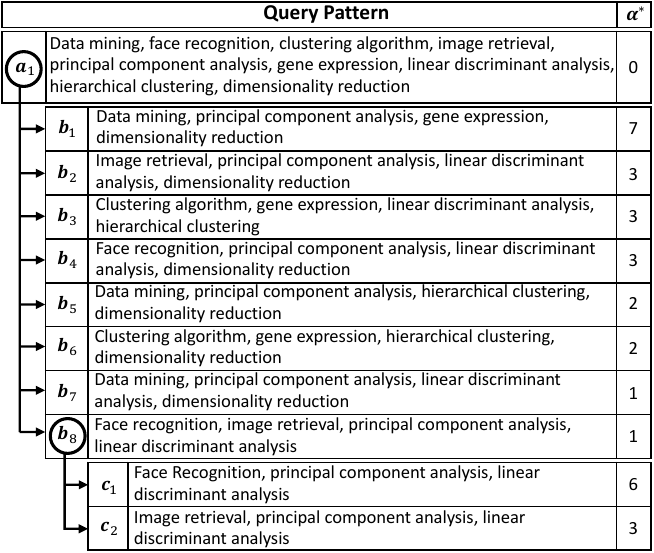}

\vspace{-3mm}
\caption{The query patterns for the theme communities in Figure~\ref{Fig:case_study_b} and Figure~\ref{Fig:case_study_c}. $\mathbf{a}_1$ is the user provided query pattern. $\{\mathbf{b}_1, \ldots, \mathbf{b}_8\}$ are the recommended query patterns based on $\mathbf{a}_1$. $\{\mathbf{c}_1, \mathbf{c}_2\}$ are the top-1 and top-2 recommended query patterns for $\mathbf{b}_8$, respectively. $\alpha^*$ is the maximum cohesiveness of all theme communities induced by a query pattern.}
\label{Fig:query_patterns}
\end{figure}

\begin{figure}[t]
\centering
\includegraphics[width=85mm]{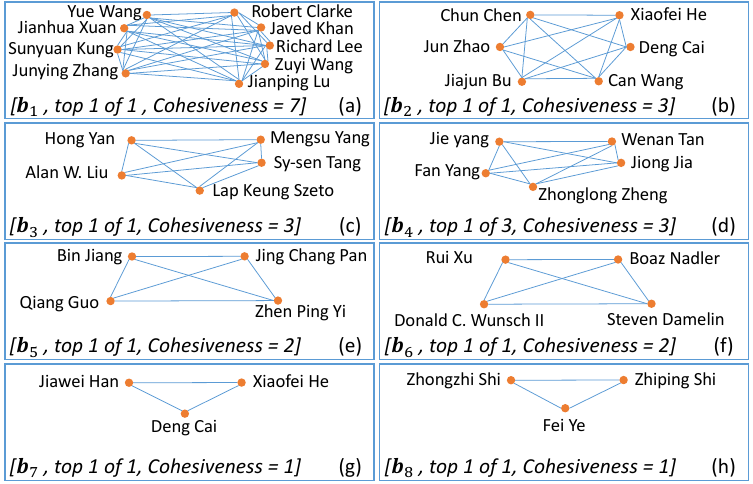}

\vspace{-3mm}
\caption{The top-1 theme communities retrieved by the query patterns $\{\mathbf{b}_1, \ldots, \mathbf{b}_8\}$ in Figure~\ref{Fig:query_patterns}.}
\label{Fig:case_study_b}
\end{figure}

\begin{figure}[t]
\centering
\includegraphics[width=85mm]{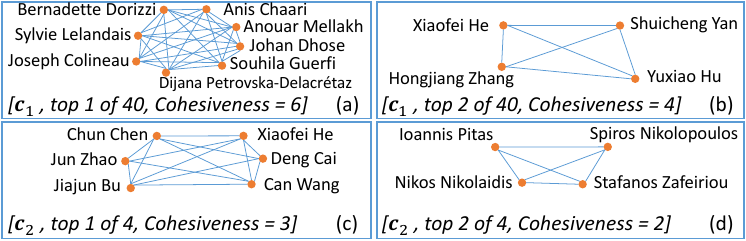}

\vspace{-3mm}
\caption{The top-1 and top-2 theme communities retrieved by the query patterns $\mathbf{c}_1$ and $\mathbf{c}_2$ in Figure~\ref{Fig:query_patterns}.}
\label{Fig:case_study_c}
\end{figure}

\nop{
\textbf{Synthetic (SYN) data set.}
The synthetic data set is built to evaluate the scalability of TC-Tree. We first generate a network with 1 million vertices using the Java Universal Network/Graph Framework (JUNG)~\cite{JUNG}.
Then, in order to make the vertex databases of neighbour vertices share some common patterns, we generate the vertex databases on each vertex in three steps.
First, we randomly select 1000 seed vertices. 
Then, to build the vertex database of each seed vertex, we randomly sample multiple itemsets from $S$ and store each sampled itemset as a transaction in the vertex database.
Last, to build the vertex database of each non-seed vertex, we first sample multiple transactions from the vertex databases of the neighbour vertices, then randomly change 10\% of the items in each sampled transaction to different items randomly picked in $S$.
In this way, we iteratively generate the vertex databases of all vertices by a breadth first search of the network. 
For each vertex $v_i$ with degree $d(v_i)$, the number of transactions in vertex database $\mathbf{b}_i$ is set to $\lceil{e^{ 0.1\times d(v_i)}}\rceil$.
}

\nop{, the length of each transaction in $\mathbf{b}_i$ is set to $\lceil{e^{ 0.13\times d(v_i)}}\rceil$.}

\nop{For each non-seed vertex $v_j$, we generate the vertex database $\mathbf{b}_j$ by sampling transactions from the databases of the neighbouring vertices of $v_j$.
to generate a transaction $t$ in the vertex database $\mathbf{b}_j$, we first sample a transaction $t'$ from the vertex databases of the neighbouring vertices of $v_j$, then obtain $t$ by changing each item in $t'$ into another item in $S$ with probability $0.1$.
we use a \emph{sample-and-change} method to make sure that the vertex databases of neighbouring vertices share some common patterns. Specifically,
}

The statistics of all data sets are given in Table~\ref{Table:dss}. 

\nop{
\begin{table}[t]
\caption{Statistics of the DBNs. \#Trans. is the number of transactions. \#Items is the number of items stored in all vertex databases. \nop{ \#Items (total) is the total number of items stored in all vertex databases, and \#Items (unique) is the number of unique items in $S$.}}
\centering
\label{Table:dss}
\begin{tabular}{|c|c|c|c|c|}
\hline
  & BK & GW & AMINER & SYN \\ \hline
\#Vertices            &  $5.1{\times}10^4$  &  $1.1{\times}10^5$  &  $1.1{\times}10^6$  &  $1.0{\times}10^6$     \\ \hline
\#Edges              &  $2.1{\times}10^5$  &  $9.5{\times}10^5$  &  $2.6{\times}10^6$  &  $1.0{\times}10^7$     \\ \hline
\#Trans.   &  $1.2{\times}10^6$  &  $2.0{\times}10^6$  &  $3.1{\times}10^6$  &  $6.1{\times}10^6$     \\ \hline
\#Items                &  $1.7{\times}10^6$  &  $3.5{\times}10^6$  &  $9.2{\times}10^6$  &  $1.3{\times}10^8$     \\ \hline
\end{tabular}
\end{table}
}

\subsection{A Case Study}
\label{Sec:cs}
In this subsection, we demonstrate the query recommendation and query answering process of the TC-Tree that indexes all the theme communities in the DBN of AMINER.
Each theme community retrieved by a query pattern represents a group of co-working scholars who share the same research interest characterized by the set of keywords contained in the query pattern.

We start our case study with a user who wants to find theme communities that apply the classic methods, such as ``hierarchical clustering'' (HC), ``principal component analysis'' (PCA), ``linear discriminant analysis'' (LDA) and ``dimensionality reduction'' (DR), in the research areas of ``data mining'' (DM), ``face recognition'' (FR), ``image retrieval'' (IR) and ``gene expression'' (GE). 

Since the user may have little prior knowledge about the theme communities in the DBN, it is difficult for him to design an appropriate query that successfully retrieves valid theme communities.
Therefore, he simply put all his interested key words together and starts querying the TC-Tree using the query pattern $\mathbf{a}_1$ in Figure~\ref{Fig:query_patterns}. 

Unfortunately, no theme community in the DBN of AMINER is induced by $\mathbf{a}_1$, therefore querying the TC-Tree by $\mathbf{a}_1$ returns no theme community.
In this case, we use the query recommendation method in Algorithm~\ref{Alg:qr} to recommend to the user a ranked list of query patterns $\{\mathbf{b}_1, \ldots, \mathbf{b}_8\}$ in Figure~\ref{Fig:query_patterns}. 
Each of $\{\mathbf{b}_1, \ldots, \mathbf{b}_8\}$ is a meaningful combination of the user-interested keywords that describes an interesting research direction. For example, $\mathbf{b}_1$ characterizes the research direction that applies PCA and DR in the area of DM and GE, and $\mathbf{b}_2$ is the research direction that applies PCA, LDA and DR in the area of IR. 


Figure~\ref{Fig:case_study_b} shows the top-1 theme communities retrieved by the recommended query patterns in $\{\mathbf{b}_1, \ldots, \mathbf{b}_8\}$. Most of the top-1 theme communities are cliques, because a clique is a special case of $(\mathbf{p}, \alpha)$-truss, and the dense connection between the vertices of a clique produces a high cohesiveness.
Interestingly, the theme community in Figure~\ref{Fig:case_study_b}(b) is not a clique. This demonstrates the flexibility of the proposed $(\mathbf{p}, \alpha)$-truss in detecting communities that are not cliques but still have large cohesiveness. 

We can also see that the theme communities in Figures~\ref{Fig:case_study_b}(b) and~\ref{Fig:case_study_b}(g) both contain ``Xiaofei He'' and ``Deng Cai''. 
This shows that the proposed theme community finding allows arbitrary overlap between communities of different themes.

As shown in Figure~\ref{Fig:case_study_b}(h), the query pattern $\mathbf{b}_8$ retrieves only one theme community that is small in both size and cohesiveness. This is because FR and IR are separate sub-disciplines in the research field of computer vision, and it is rare for a paper to focus on both of them at the same time.

If the user is not happy with the small theme community retrieved by $\mathbf{b}_8$, we can further apply Algorithm~\ref{Alg:qr} to recommend to the user the query patterns $\mathbf{c}_1$ and $\mathbf{c}_2$ in Figure~\ref{Fig:query_patterns}. Obviously, $\mathbf{c}_1$ describes the sub-discipline that applies PCA and LDA in FR, and $\mathbf{c}_2$ characterizes the sub-discipline that uses PCA and LDA in IR.

Figure~\ref{Fig:case_study_c} shows the theme communities retrieved by $\mathbf{c}_1$ and $\mathbf{c}_2$. Since $\mathbf{c}_1$ and $\mathbf{c}_2$ characterize the two sub-disciplines more precisely than $\mathbf{b}_8$, $\mathbf{c}_1$ and $\mathbf{c}_2$ retrieve more theme communities than $\mathbf{b}_8$, and the cohesiveness of the theme communities retrieved by $\mathbf{c}_1$ and $\mathbf{c}_2$ are much larger than the one retrieved by $\mathbf{b}_8$.

In summary, the proposed theme community finding discovers meaningful theme communities and allows arbitrary overlap between theme communities. The proposed query recommendation method effectively recommends meaningful query patterns, and the proposed TC-Tree makes it possible to efficiently retrieve ranked lists theme communities from large DBNs.

\nop{
the proposed TC-Tree efficiently indexes all the theme communities in large database networks. By exploring the TC-Tree, the query recommendation method effectively recommends meaningful query patterns

the proposed theme community finding method finds meaningful theme communities and allows arbitrary overlap between them,

 the query recommendation method effectively recommends meaningful query patterns, and the proposed TC-Tree efficiently indexes all the meaningful theme communities from large database networks.
}
\nop{
In summary, the proposed theme community finding method allows arbitrary overlap between theme communities with different themes, and it is able to efficiently and accurately discover meaningful theme communities from large database networks.
}

\nop{
Most importantly, all recommended query patterns in $\{\mathbf{b}_1, \ldots, \mathbf{b}_8\}$ can successfully retrieve valid theme communities in the database network.
}

\nop{
We start querying the TC-Tree with the user given query pattern $\mathbf{a}_1$ in Figure~\ref{Fig:query_patterns}. 
More often than not, since the user has no prior knowledge about the database network, he may not have a precise query in mind when he starts the querying. In our case, we assume that the user is interested in finding theme communities that applies classic methods such as ``hierarchical clustering'', ``principal component analysis'' (PCA), ``linear discriminant analysis'' (LDA) and so on, in the research areas of ``data mining'', ``social network'', ``face recognition'', ``image retrieval'', ``gene expression''. 
}

\nop{
We present 6 interesting theme communities in Figure~\ref{Fig:case_study} and show the corresponding sets of keywords in Table~\ref{Table:listofpat}.

Take Figures~\ref{Fig:case_study}(a)-(b) as an example, the research interest of the theme community in Figure~\ref{Fig:case_study}(a) is ``data mining'' and ``sequential pattern''. 
If we narrow down the research interest of this theme community by an additional keyword ``intrusion detection'', the theme community in Figure~\ref{Fig:case_study}(a) reduces to the theme community in Figure~\ref{Fig:case_study}(b). This result demonstrates the fact that the size of a theme community reduces when the length of the pattern increases, which is consistent with Theorem~\ref{Prop:gam}.

The results in Figures~\ref{Fig:case_study}(a)-(d) show that four researchers, Philip S.\ Yu, Jiawei Han, Jian Pei and Ke Wang, actively coauthor with different groups of researchers in different sub-disciplines of data mining, such as sequential pattern mining, intrusion detection, frequent pattern mining and privacy protection.
These results demonstrate that the proposed TC-Tree method can discover arbitrarily overlapped theme communities with different themes.

We are surprised to see that TC-Tree also discovers the interdisciplinary research communities that are formed by researchers from different research areas. 
As shown in Figures~\ref{Fig:case_study}(e)-(f), the research activities of Jiawei Han and Jian Pei are not limited in data mining.
Figure~\ref{Fig:case_study}(e) indicates that Jiawei Han collaborated with some researchers in linear discriminant analysis. 
Figure~\ref{Fig:case_study}(f) shows that Jian Pei collaborated with some researchers in image retrieval. 
More interestingly, both Jiawei Han and Jian Pei collaborated with Jun Zhao, Xiaofei He,  etc.
The two theme communities in Figures~\ref{Fig:case_study}(e)-(f) have a heavy overlap in vertices, but are different in themes.
}
\nop{
Interestingly, the theme communities in Figure~\ref{Fig:case_study}(e)-(f) both involve the research group of Prof. Jun Zhao, Prof. Chun Chen, Prof. Xiaofei He, Prof. Jiajun Bu and Prof. Can Wang from Zhejiang University, China.
}

\nop{
A, finding theme communities in the database network of AMINER also discovers the interdisciplinary research communities that are formed by researchers from different research fields. 
As revealed by the results, both theme communities in Figure~\ref{Fig:case_study}(e)-(f) involve the research group of Prof. Jun Zhao, Prof. Chun Chen, Prof. Xiaofei He, Prof. Jiajun Bu and Prof. Can Wang from Zhejiang University, China. However, Prof. Jiawei Han, who is  co-worked with the researchers in linear discriminant analysis and Prof. Jian Pei was actively associated with the researchers in the field of image retrieval.
}

\nop{
Prof. Jiawei Han co-worked with the researchers in linear discriminant analysis and Prof. Jian Pei was actively associated with the researchers in the field of image retrieval. Interestingly, both the theme communities in Figure~\ref{Fig:case_study}(e)-(f) involve the research group of Prof. Jun Zhao, Prof. Chun Chen, Prof. Xiaofei He, Prof. Jiajun Bu and Prof. Can Wang from Zhejiang University, China.

In summary, the proposed theme community finding method allows arbitrary overlap between theme communities with different themes, and it is able to efficiently and accurately discover meaningful theme communities from large database networks.
}

\subsection{Effect of Parameters}
\label{Sec:eop}
In this subsection, we analyze the effect of the cohesion threshold $\alpha$ and the frequency threshold $\epsilon$.
The settings of parameters are $\alpha \in \{0.0, 0.1, 0.2, 0.3, 0.5, 1.0, 1.5, 2.0\}$ and $\epsilon\in\{0.1, 0.2, 0.3\}$. 

We do not evaluate the performance of TCS for $\epsilon=0.0$ and $\epsilon > 0.3$, because TCS cannot stop in reasonable time when $\epsilon=0.0$, and it loses too much accuracy when $\epsilon>0.3$.
Since TCS with $\epsilon\in\{0.1, 0.2, 0.3\}$ still run too slow on the original DBNs of BK, GW and AMINER, we use smaller DBNs that are sampled from the original DBNs by doing a breadth first search from a random seed vertex.
From each of BK and GW, we sample one DBN with 10,000 edges. 
For AMINER, we sample a DBN of 5,000 edges.
\nop{\todo{Do you conduct the sampling multiple times and report the average results or just try one sample?}}

Figures~\ref{Fig:effect_of_parameters}(a), \ref{Fig:effect_of_parameters}(c) and \ref{Fig:effect_of_parameters}(e) show the \textbf{number of theme communities}, denoted by ``\#TCs'', found by TCFA, TCFI and TCS from BK, GW and AMINER, respectively. 
As it is shown, TCFA and TCFI detect the same number of theme communities for all values of $\alpha$ on all DBNs. This is because both TCFA and TCFI produce the exact result in finding all theme communities in a DBN.
However, TCS does not always produce the exact result. 
The reason is that vertices with small pattern frequencies can still form a good theme community with large edge cohesion if they form a densely connected subgraph.
Such theme communities may be missed if the patterns with low frequencies are dropped by the pre-filtering step of TCS.

\nop{As shown in Figures~\ref{Fig:effect_of_parameters}(a), \ref{Fig:effect_of_parameters}(c) and \ref{Fig:effect_of_parameters}(e), for TCS ($\epsilon=0.1$), TCS ($\epsilon=0.2$) and TCS ($\epsilon=0.3$), in order to produce the same results as TCFA and TCFI, the proper values of $\alpha$ vary in different DBNs.}

\nop{
For example, TCS ($\epsilon=0.1$) cannot produce the same results as TCFA and TCFI unless $\alpha\geq 0.2$.
For TCS ($\epsilon=0.2$) and TCS ($\epsilon=0.3$), in order to produce the same results as TCFA and TCFI, the proper values of $\alpha$ varies in different database networks.
}

\nop{
The reason why TCS does not produce the exact results is that vertices with small pattern frequencies can still form a good maximal $(\mathbf{p},\alpha)$-truss with large edge cohesion if they form a densely connected subgraph.
Such maximal $(\mathbf{p},\alpha)$-trusses may be lost if the patterns with low frequencies are dropped by the pre-filtering step of TCS.
}
\nop{This clearly shows that TCS performs a trade-off between efficiency and accuracy.}
\nop{
The reason is that, in the database network of AMINER, the patterns whose maximum frequencies are no larger than $\epsilon=0.1$ cannot form a maximal $(\mathbf{p},\alpha)$-truss with edge cohesion larger than $\alpha=0.1$.
However, in the database networks of BK and GW, some patterns whose maximum frequencies are no larger than $\epsilon=0.1$ can form a maximal $(\mathbf{p},\alpha)$-truss with edge cohesion larger than $\alpha=0.3$.
As a result, there is no proper value of $\epsilon$
}

\nop{Such maximal $(\mathbf{p},\alpha)$-trusses with large edge cohesion but relatively small pattern frequencies on vertices usually have dense edge connections between all vertices.}

\nop{
When $\alpha$ is large, TCS produces the same results as TCFA and TCFI. 
This is because, the vertices filtered out by a non-zero threshold $\epsilon\in\{0.1, 0.2, 0.3\}$ are not contained by any theme communities. 
However, when $\alpha$ is small, TCS finds less theme communities than TCFA and TCFI. 
The reason is that, the pre-filtering method of TCS performs a trade-off between efficiency and accuracy; a larger threshold $\epsilon$ gains more efficiency, however, loses more accuracy.
By comparing the NP, NV and NE performances of TCS on BK, GW and AMINER, it is obvious that whether TCS produces the exact results highly depends on the threshold $\epsilon$, the user input $\alpha$ and the data set.
Thus, there is no proper $\epsilon$ that works for all database networks.
}
\nop{
The reason is that, the vertices filtered out by a non-zero threshold $\epsilon\in\{0.1, 0.2, 0.3\}$ are not contained by any theme communities with $\alpha \geq 0.3$. 
However, TCS ($\epsilon=0.1$), TCS ($\epsilon=0.2$) and TCS ($\epsilon=0.3$) cannot produce the exact results when $\alpha < 0.3$, $\alpha<0.5$ and $\alpha<1.5$, respectively. This is because the pre-filtering step of TCS performs a trade-off between efficiency and accuracy; a larger $\epsilon$ gains more efficiency, however, loses more accuracy.
We can also observe from Figure~\ref{Fig:effect_of_parameters}(f)-(h),(j)-(l) that whether TCS ($\epsilon=[0.1, 0.2, 0.3]$) can produce the exact results is highly dependent on the user input $\alpha$ and the data set. 
In fact, setting $\epsilon=0.0$ is the only way for TCS to produce exact results for arbitrary user input $\alpha$ and data sets.
}

\nop{
\begin{table}[t]
\caption{Statistics of the database networks. \#Trans. is the number of transactions. \#Items is the number of items stored in all vertex databases. \nop{ \#Items (total) is the total number of items stored in all vertex databases, and \#Items (unique) is the number of unique items in $S$.}}
\centering
\label{Table:dss}
\begin{tabular}{|c|c|c|c|}
\hline
  & BK & GW & AMINER \\ \hline
\#Vertices            &  $5.1{\times}10^4$  &  $1.1{\times}10^5$  &  $1.1{\times}10^6$  \\ \hline
\#Edges              &  $2.1{\times}10^5$  &  $9.5{\times}10^5$  &  $2.6{\times}10^6$   \\ \hline
\#Trans.   &  $1.2{\times}10^6$  &  $2.0{\times}10^6$  &  $3.1{\times}10^6$              \\ \hline
\#Items                &  $1.7{\times}10^6$  &  $3.5{\times}10^6$  &  $9.2{\times}10^6$   \\ \hline
\end{tabular}
\end{table}
}

Figures~\ref{Fig:effect_of_parameters}(b), \ref{Fig:effect_of_parameters}(d) and~\ref{Fig:effect_of_parameters}(f) show the time cost of TCFA, TCFI and TCS on BK, GW and AMINER, respectively.
The time cost of both TCFA and TCFI decreases when $\alpha$ increases, because increasing $\alpha$ reduces the size of $\mathcal{P}^{k-1}$, which further reduces the size of $\mathcal{M}^k$.

When $\alpha$ is small, the time cost of TCFI is much lower than the time cost of TCFA.
This is due to the following two effects of applying the graph intersection property in Proposition~\ref{Lem:gip}.
First, most maximal $(\mathbf{p},\alpha)$-trusses are small local subgraphs that do not intersect with each other, thus many unqualified patterns in $\mathcal{M}^k$ are pruned by TCFI using the graph intersection property.
Second, for each call of MTD, TCFA computes the maximal $(\mathbf{p},\alpha)$-truss in the large theme network induced from the entire DBN, however, TCFI operates on the small theme network induced from the intersection of two maximal $(\mathbf{p},\alpha)$-trusses.

When $\alpha$ is large, the time cost of TCFI and TCFA becomes comparable.
This is because increasing $\alpha$ reduces the size of $\mathcal{M}^k$ and the size of the maximal $(\mathbf{p}, \alpha)$-truss for each pattern in $\mathcal{M}^k$. Therefore, the effectiveness of applying the graph intersection property is reduced.

Take the sampled DBN of AMINER as an example. When $\alpha=0$, TCFA calls MTD 622,852 times and TCFI calls MTD 152,396 times. TCFI effectively prunes 75.5\% of the candidate patterns used by TCFA. 
Moreover, as shown in Figure~\ref{Fig:effect_of_parameters}(f), TCFI is nearly 3 orders of magnitude faster than TCFA when $\alpha=0$. This is because TCFI always operates on small theme networks.
We can also see that, when $\alpha\geq 1$, the time cost of TCFI and TCFA becomes comparable. 
This is because, when $\alpha\geq 1$, AMINER contains only one maximal $(\mathbf{p},\alpha)$-truss, thus TCFI cannot prune any candidate pattern in $\mathcal{M}^k$ or induce any smaller theme network by applying the graph intersection property.

\nop{
However, as shown in Figure~\ref{Fig:effect_of_parameters}(f), TCFI is nearly 3 orders of magnitudes faster than TCFA when $\alpha=0$. This is because TCFI always operates on small theme networks.
}

\nop{
Interestingly, in Figures~\ref{Fig:effect_of_parameters}(b), \ref{Fig:effect_of_parameters}(d) and~\ref{Fig:effect_of_parameters}(f), the time cost of TCFI and TCFA are comparable on all database networks when $\alpha\geq 1$.
This is because, when $\alpha\geq 1$, GW and AMINER contain only one maximal $(\mathbf{p},\alpha)$-truss each, BK contains only three maximal $(\mathbf{p},\alpha)$-trusses that overlap with each other.
Thus, TCFI does not prune any candidate pattern by the graph intersection property.
}

\nop{
The time cost of TCFA is sensitive to $\alpha$, because it is largely dominated by the size of $\mathcal{M}^k$.
When $\alpha$ decreases, the size of $\mathcal{P}^{k-1}$ increases quickly, and the size of $\mathcal{M}^k$ becomes very large. Thus, the time cost of TCFA increases rapidly.
}

\nop{
To further analyze the pruning effectiveness of TCFI, we specifically inspect the experimental results on the sampled database network of AMINER. When $\alpha=0$, TCFA calls MTD 622,852 times and TCFI calls MTD 152,396 times. This indicates that TCFI effectively prunes 75.5\% of the candidate patterns. However, as shown in Figure~\ref{Fig:effect_of_parameters}(e), TCFI is nearly 3 orders of magnitudes faster than TCFA when $\alpha=0$. This is because, for each run of MTD, TCFA computes the maximal $(\mathbf{p},\alpha)$-truss in the large theme network induced from the entire database network, however, TCFI operates within the small theme network induced from the intersection of two maximal $(\mathbf{p},\alpha)$-trusses.

However, the efficiency advantage of TCFI over TCFA becomes marginal when $\alpha\geq 1$.
This is because, when $\alpha\geq 1$, GW and AMINER contain only one maximal $(\mathbf{p},\alpha)$-truss each, BK contains only three maximal $(\mathbf{p},\alpha)$-trusses that overlap with each other.
In this case, TCFI does not prune any candidate pattern by the graph intersection property in Proposition~\ref{Lem:gip}.
}
\nop{
In this case, both TCFA and TCFI achieve the best time cost performances that are more than two orders of magnitudes better than TCS.
}

\nop{
even a single triangle whose vertices share a common length-1 pattern forms a theme community. 

 the length-1 pattern set $\mathcal{P}^1$ contains the length-1 patterns of all theme communities 

for TCFA, the length-2 candidate pattern set $\mathcal{M}^2$ is the power set of the length-1 pattern set $\mathcal{P}^1$.

 and $\mathcal{P}^1$ is very large when $\alpha=0$. When the volume of $\mathcal{M}^2$ is larger than the volume of the candidate pattern set $\mathcal{P}$ of TCS, TCFA costs more time than TCS.
However, when $\alpha$ increases to $\alpha=0.1$, the time cost of TCFA 

However, since the graph intersection based pruning of TCFI is not affected much by $\alpha$, TCFI is still two orders of magnitudes faster than the other methods when $\alpha=0$.

We can see from Figure~\ref{Fig:effect_of_parameters}(a) that the time costs of TCS ($\epsilon=[0.1, 0.2, 0.3]$) do not change with the increase of $\alpha$; this is because the time cost of TCS is mostly determined by the size of candidate pattern set $P$ (see Section~\ref{sec:tcs}), which is not affected by $\alpha$. However, the size of $P$ reduces with the increase of $\epsilon$, thus the time cost of TCS decreases when $\epsilon$ increases.
On the other hand, the time cost of TCFA and TCFI both decreases when $\alpha$ increases; the reason is that both TCFA and TCFI generates candidate patterns in a bottom-up manner (see Algorithm~\ref{Alg:apriori}), a larger $\alpha$ leads to a smaller pattern set $P^{k-1}$, which further reduces the number of candidate patterns in $L^k$. This facilitates the early pruning of TCFA and TCFI and decreases the time cost. 
Interestingly, when $\alpha=0$, TCFA costs even more time than TCS ($\epsilon=[0.1, 0.2, 0.3]$).
This is because $L^2$ is the power set of $P^1$, thus when $\alpha=0$ and $P^1$ is large, $L^2$ can be larger than the candidate pattern set $P$ of TCS ($\epsilon=[0.1, 0.2, 0.3]$).
However, since the graph intersection based pruning of TCFI is not affected much by $\alpha$, TCFI is still two orders of magnitudes faster than the other methods when $\alpha=0$.
Similar time cost performances can be observed in Figure~\ref{Fig:effect_of_parameters}(e) and Figure~\ref{Fig:effect_of_parameters}(i).
}

The time cost of TCS is not affected by $\alpha$, because it is largely dominated by the size of $\mathcal{P}$ (see Section~\ref{sec:tcs}), which is irrelevant to $\alpha$. 
Increasing $\epsilon$ reduces the size of $\mathcal{P}$ and improves the efficiency of TCS. However, since the size of $\mathcal{P}$ is still too large, TCS is orders of magnitude slower than TCFI on all DBNs.

In summary, TCFI produces the best detection results of theme communities and achieves the best efficiency performance for all values of $\alpha$ on all DBNs. 

\nop{
\newcommand{\parawidthExpthree}{43mm}
\begin{figure*}[t]
\centering
\subfigure[Query by alpha (BK)]{\includegraphics[width=\parawidthExpthree]{Figs/EXP3_IndexingPerformances_qba_BK.pdf}}
\subfigure[Query by alpha (GW)]{\includegraphics[width=\parawidthExpthree]{Figs/EXP3_IndexingPerformances_qba_GW.pdf}}
\subfigure[Query by alpha (AMINER)]{\includegraphics[width=\parawidthExpthree]{Figs/EXP3_IndexingPerformances_qba_DBLP.pdf}}
\subfigure[Query by alpha (SYN)]{\includegraphics[width=\parawidthExpthree]{Figs/EXP3_IndexingPerformances_qba_SYN.pdf}}
\subfigure[Query by pattern (BK)]{\includegraphics[width=\parawidthExpthree]{Figs/EXP3_IndexingPerformances_qbp_BK.pdf}}
\subfigure[Query by pattern (GW)]{\includegraphics[width=\parawidthExpthree]{Figs/EXP3_IndexingPerformances_qbp_GW.pdf}}
\subfigure[Query by pattern (AMINER)]{\includegraphics[width=\parawidthExpthree]{Figs/EXP3_IndexingPerformances_qbp_DBLP.pdf}}
\subfigure[Query by pattern (SYN)]{\includegraphics[width=\parawidthExpthree]{Figs/EXP3_IndexingPerformances_qbp_SYN.pdf}}
\caption{Querying TC-Tree. (a)-(d) show the QBA performance. (e)-(h) show the QBP performance.}
\label{Fig:query_performances}
\end{figure*}
}

\subsection{Scalability of Theme Community Finding}
\label{Sec:eotcf}
In this subsection, we analyze the runtime of all methods with respect to the size of the DBNs. 
For each DBN, we generate a series of DBNs with different sizes by sampling the original DBN using the sampling method introduced in Section~\ref{Sec:eop}.
Since TCS and TCFA run too slow on large DBNs, we stop reporting the performance of TCS and TCFA when they cost more than one day. 
The performance of TCFI is evaluated on all sizes of DBNs including the original ones. To evaluate the worst case performance of all methods, we set $\alpha=0$.

\newcommand{\parawidthExpone}{41.5mm}
\begin{figure}[t]
\centering
\subfigure{\includegraphics[width=\parawidthExpone]{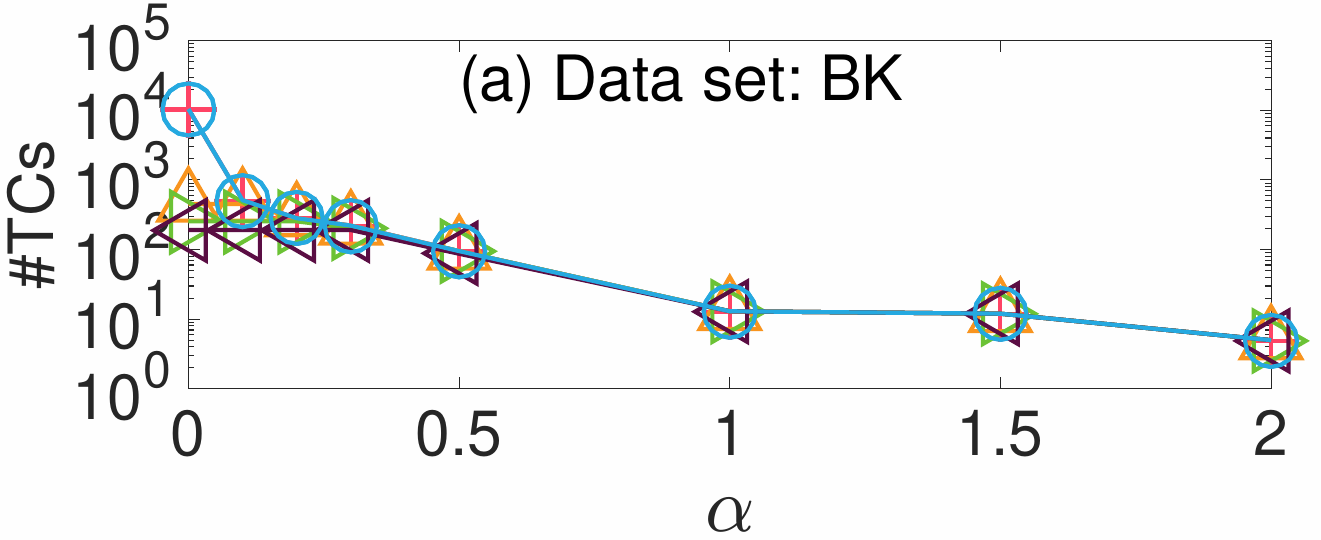}}
\subfigure{\includegraphics[width=\parawidthExpone]{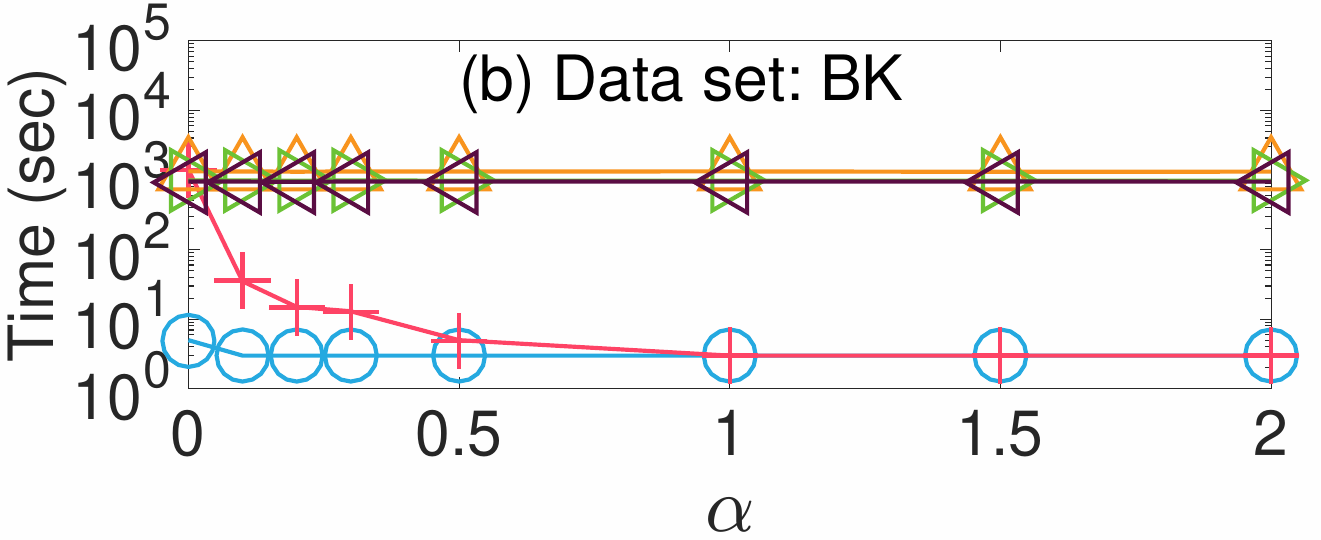}}

\vspace{-3mm}
\subfigure{\includegraphics[width=\parawidthExpone]{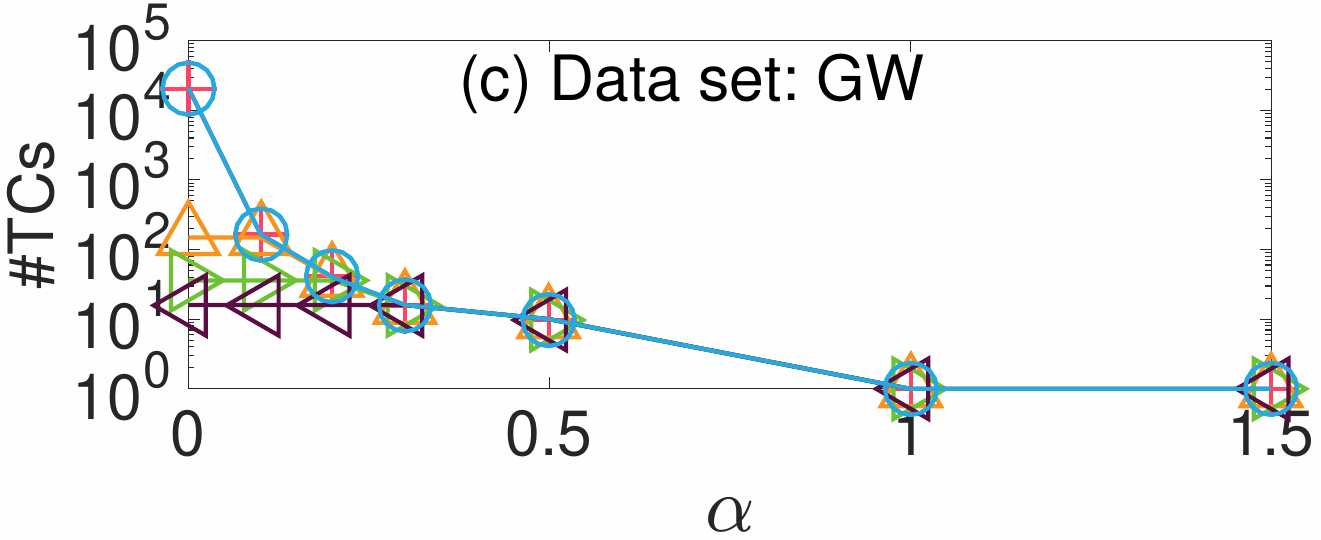}}
\subfigure{\includegraphics[width=\parawidthExpone]{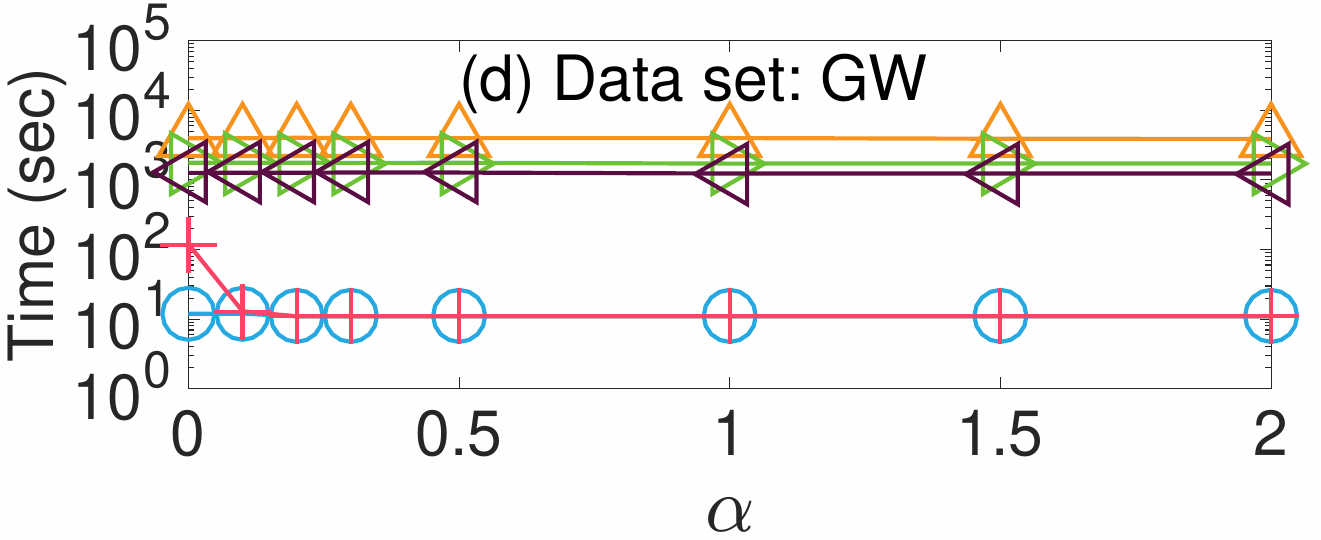}}

\vspace{-3mm}
\subfigure{\includegraphics[width=\parawidthExpone]{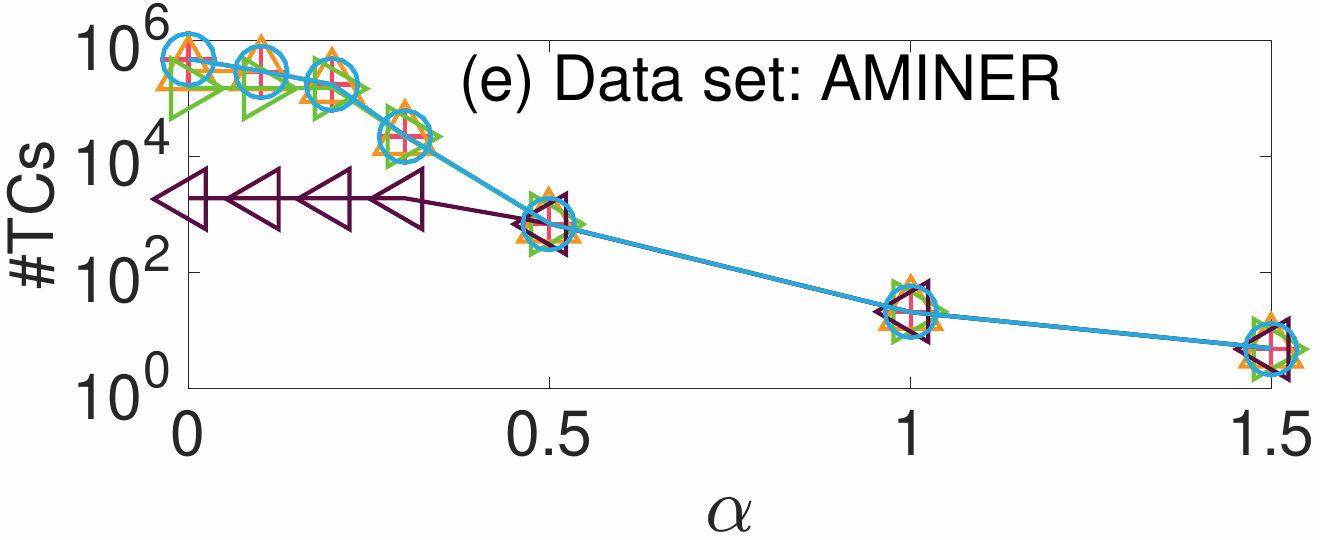}}
\subfigure{\includegraphics[width=\parawidthExpone]{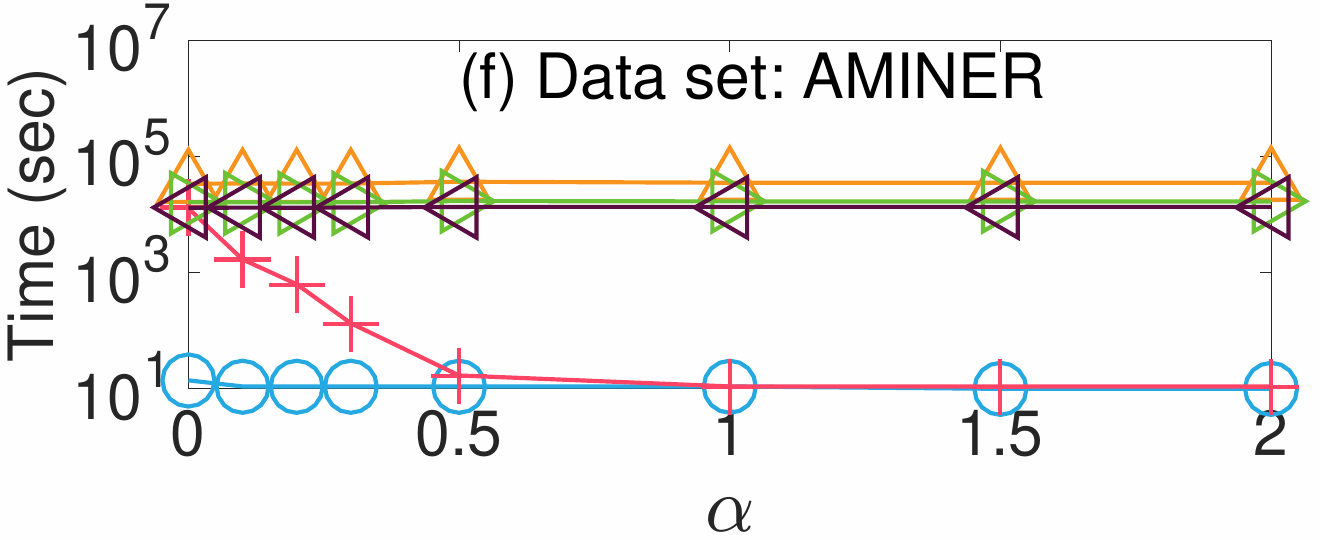}}

\vspace{-3mm}

\subfigure{\includegraphics[height=2.8mm]{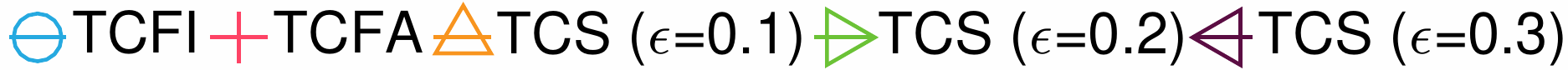}}

\vspace{-3mm}
\caption{The effects of parameters $\alpha$ and $\epsilon$. In (c) and (e), \#TCs is zero when $\alpha=2.0$.}
\label{Fig:effect_of_parameters}
\end{figure}

\newcommand{\parawidthExptwo}{41.5mm}
\begin{figure}[t]
\centering
\subfigure{\includegraphics[width=\parawidthExptwo]{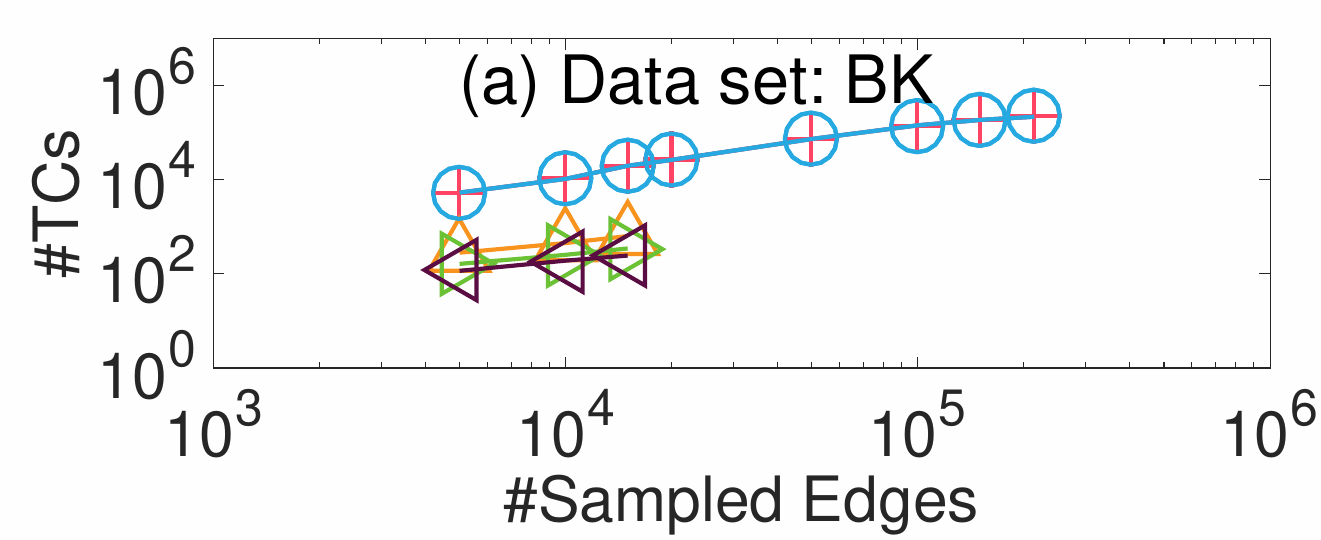}}
\subfigure{\includegraphics[width=\parawidthExptwo]{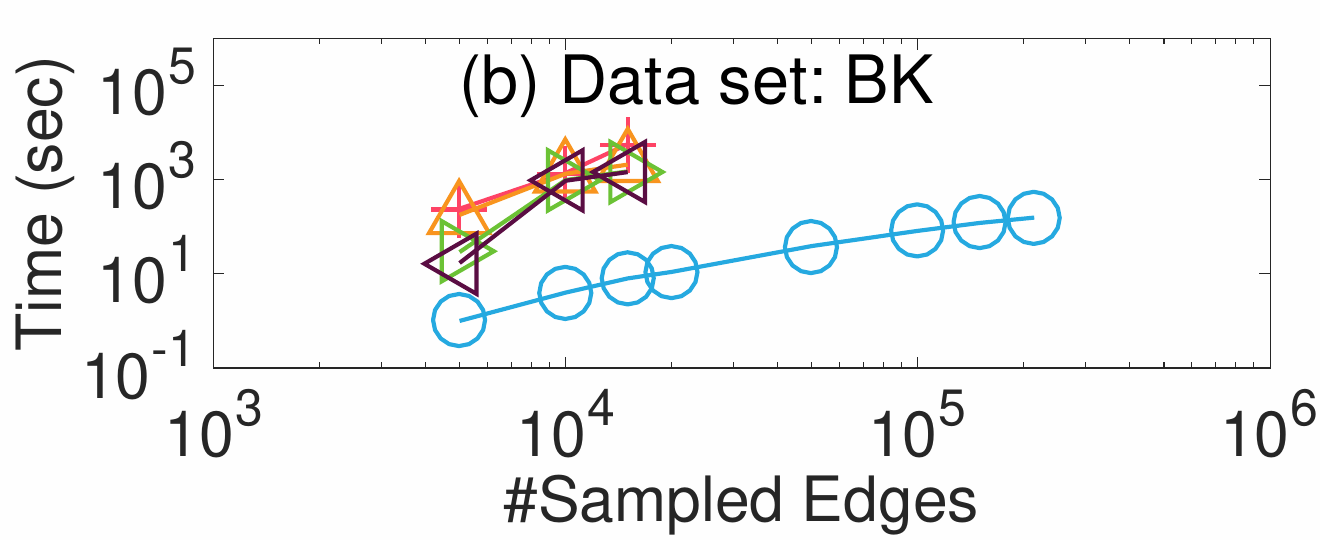}}

\vspace{-3mm}
\subfigure{\includegraphics[width=\parawidthExptwo]{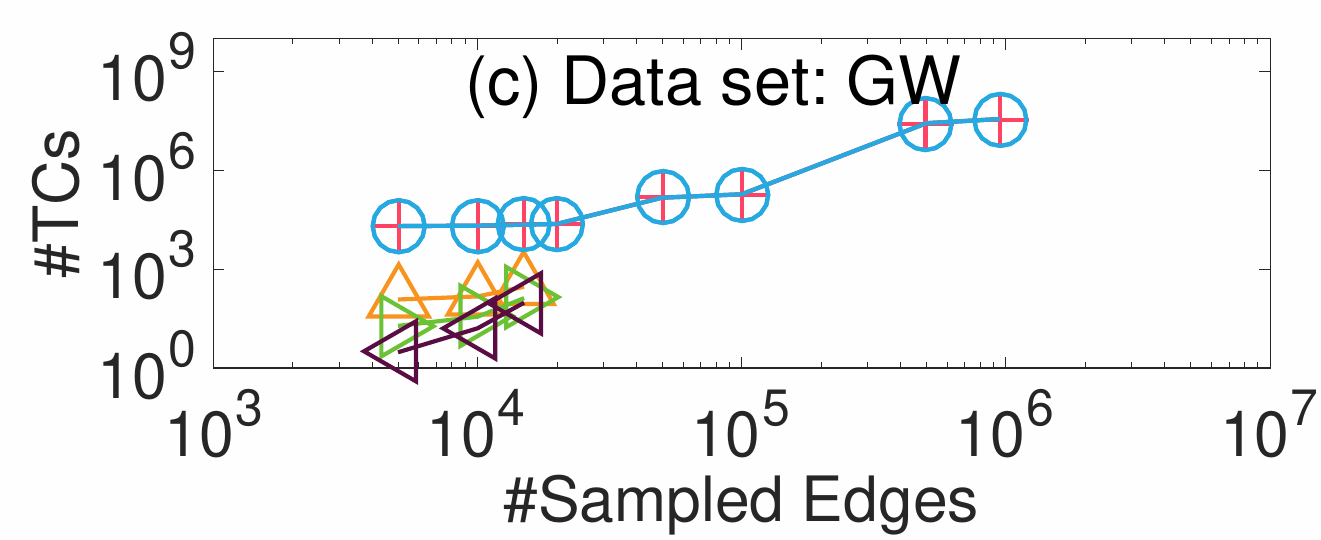}}
\subfigure{\includegraphics[width=\parawidthExptwo]{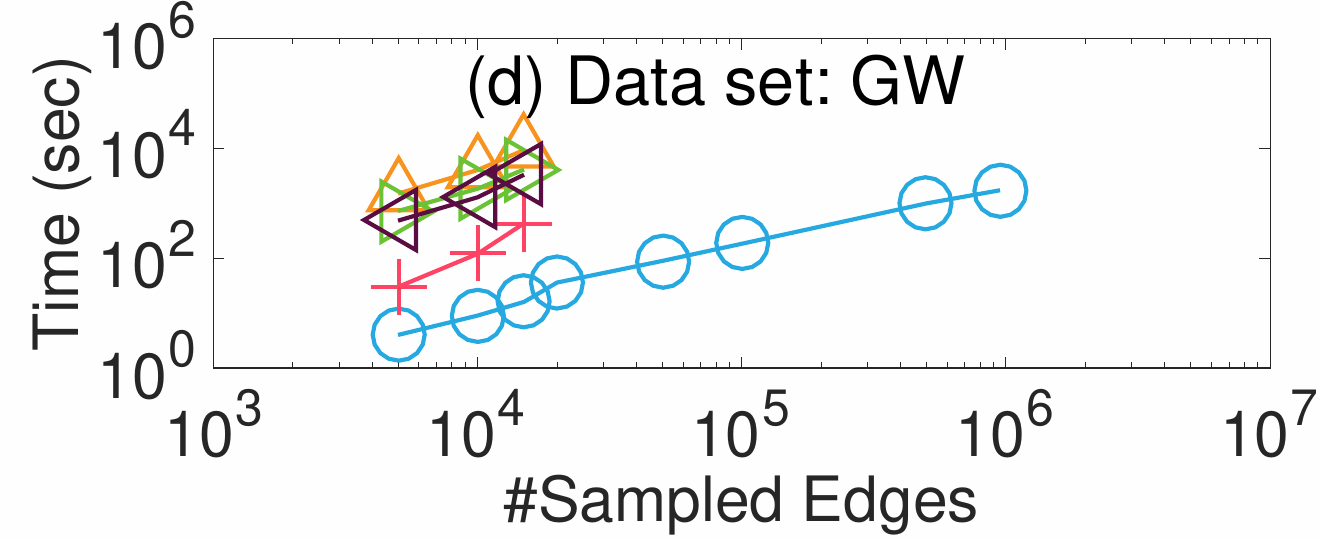}}

\vspace{-3mm}
\subfigure{\includegraphics[width=\parawidthExptwo]{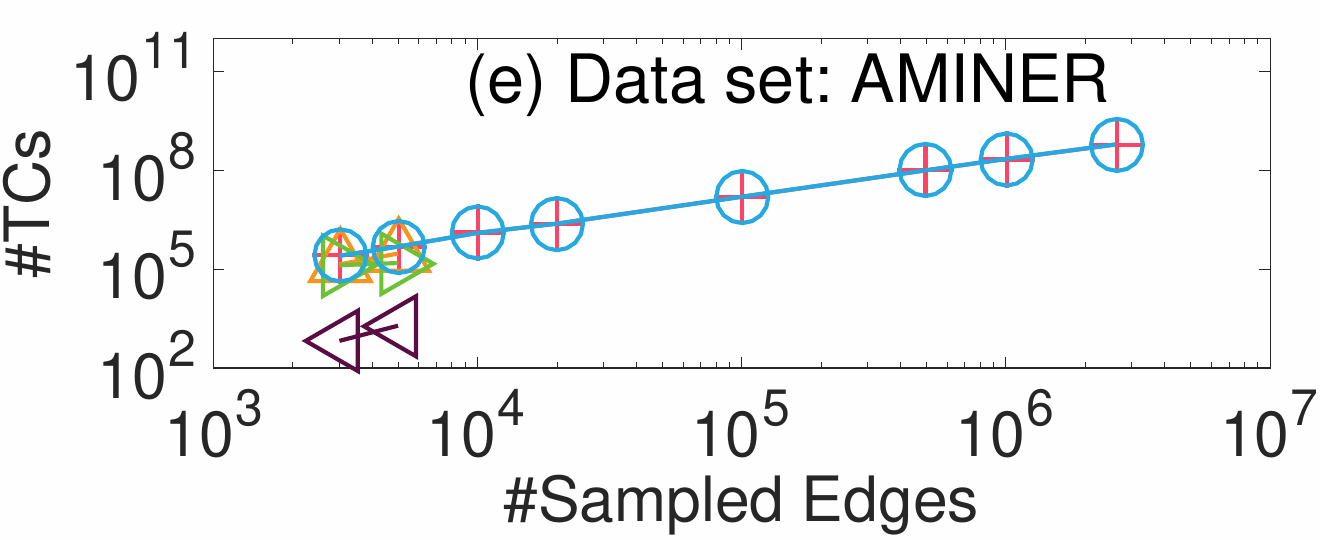}}
\subfigure{\includegraphics[width=\parawidthExptwo]{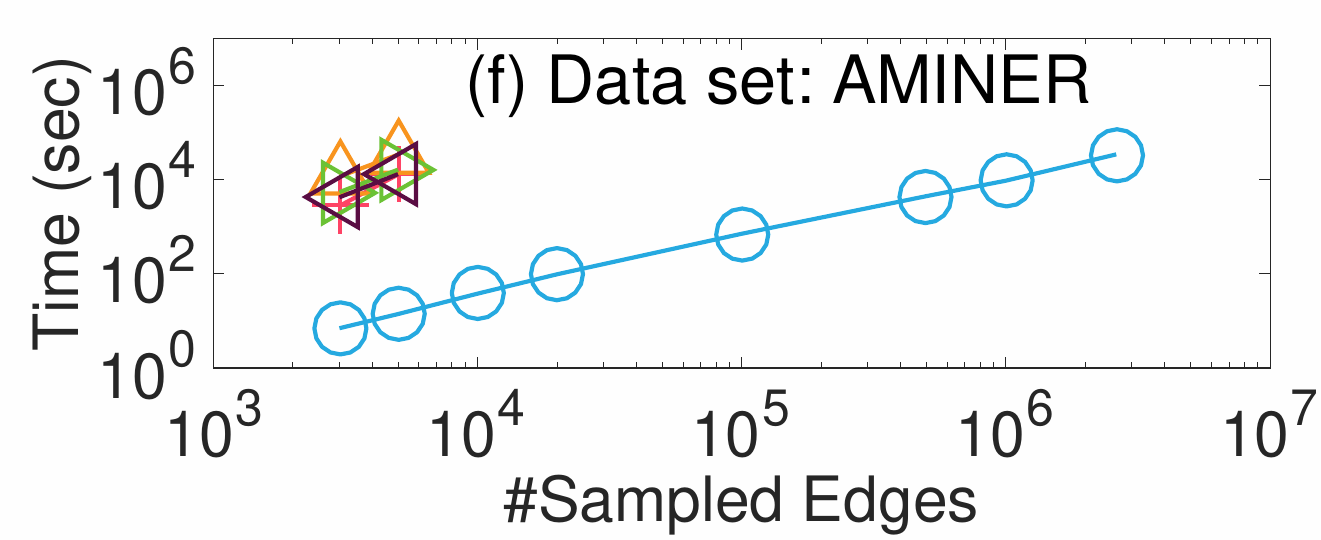}}

\vspace{-3mm}

\subfigure{\includegraphics[height=2.8mm]{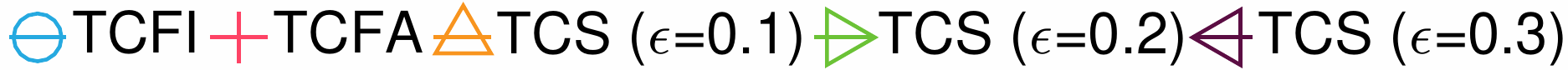}}

\vspace{-3mm}
\caption{The \#TCs and the time cost of TCS, TCFA and TCFI on different sizes of networks.}
\label{Fig:scalability_results}
\end{figure}

Figures~\ref{Fig:scalability_results}(a),~\ref{Fig:scalability_results}(c) and~\ref{Fig:scalability_results}(e) show the number of theme communities found by TCFA, TCFI and TCS from BK, GW and AMINER, respectively. 
When the number of sampled edges increases, the numbers of theme communities reported by all methods increase. Increasing the size of the DBN increases the number of maximal $(\mathbf{p},\alpha)$-trusses, which further increases the number of theme communities.
Both TCFI and TCFA consistently produce the exact results. Due to the accuracy loss caused by pre-filtering the patterns with low frequencies, TCS cannot produce the same results as TCFI and TCFA.

Figures~\ref{Fig:scalability_results}(b),~\ref{Fig:scalability_results}(d) and \ref{Fig:scalability_results}(f) show the performance of time cost.
When the number of sampled edges increases, the time cost of all methods increases, because increasing the size of the DBN increases the number of theme communities. 
The time cost of TCS and TCFA grows much faster than that of TCFI.
This is because TCS generates a large number of unqualified candidate patterns by enumerating the patterns of all vertex databases, and TCFA also generates many unqualified candidate patterns by taking the pairwise unions of the patterns of the detected maximal $(\mathbf{p},\alpha)$-trusses.
By applying the graph intersection property, TCFI efficiently generates a substantially smaller number of candidate patterns by the pairwise unions of the patterns of two intersecting maximal $(\mathbf{p},\alpha)$-trusses, and only runs MTD on the small intersection of two maximal $(\mathbf{p},\alpha)$-trusses.
This significantly reduces the time cost.
As a result, TCFI achieves the best scalability and is more than two orders of magnitude faster than TCS and TCFA on large DBNs.

\nop{
The reason is that TCS generates candidate patterns by enumerating the patterns of all vertex databases, TCFA generates candidate patterns by pairwise unions of the patterns of the detected maximal $(\mathbf{p},\alpha)$-trusses; they both generate a large number of unqualified candidate patterns.
}

\nop{
The reason is that TCS and TCFA generate their candidate patterns by pairwise unions of the patterns of detected maximal $(\mathbf{p},\alpha)$-trusses.
This generates a large number of unqualified candidate patterns, since most maximal $(\mathbf{p},\alpha)$-trusses are small local subgraphs that do not intersect with each other, and the union of the patterns of non-intersecting maximal $(\mathbf{p},\alpha)$-trusses is guaranteed to be unqualified according to Proposition~\ref{Lem:gip}.
TCFI generates candidate patterns only by pairwise unions of the patterns of intersecting maximal $(\mathbf{p},\alpha)$-trusses, therefore, the set of candidate patterns of TCFI grows much slower than that of TCS and TCFA.
As a result, TCFI achieves the best scalability and is more than two orders of magnitude faster than TCS and TCFA in large database networks.
}

\nop{
effectively prunes such unqualified candidates by only generate candidate patterns from the union of the patterns of non-intersection 

Such unqualified candidate patterns are efficiently pruned by TCFI using the graph intersection property in Proposition~\ref{Lem:gip}, therefore, the set of candidate patterns of TCFI grows much slower than that of TCS and TCFA.

The reason is that TCS and TCFA generate candidate patterns by randomly combining the patterns of detected maximal $(\mathbf{p},\alpha)$-trusses.
This generates a lot of unqualified candidate patterns for TCS and TCFA, since most maximal $(\mathbf{p},\alpha)$-trusses are small local subgraphs that do not intersect with each other, and combining the patterns of non-intersecting maximal $(\mathbf{p},\alpha)$-trusses

much more unqualified candidate patterns than TCFI, by randomly combining the patterns of non-intersection maximal $(\mathbf{p},\alpha)$-trusses.
Since most maximal $(\mathbf{p},\alpha)$-trusses are small local subgraphs that do not intersect with each other, TCFI efficiently prunes a large number of unqualified patterns by 

 know from Proposition~\ref{Lem:gip} that 

we can know from Proposition~\ref{Lem:gip} that combining the qualified patterns of such non-intersecting maximal $(\mathbf{p},\alpha)$-trusses generates a large number of unqualified candidate patterns.
Such unqualified candidate patterns are efficiently pruned by TCFI using the graph intersection property in Proposition~\ref{Lem:gip}, therefore, the set of candidate patterns of TCFI grows much slower than that of TCS and TCFA when the size of the database network increases.
As a result, TCFI achieves the best scalability and is more than two magnitudes faster than TCS and TCFA in large database networks.
This is because, TCFI efficiently prunes a large number of unqualified patterns 
}

\nop{
the set of candidate patterns of TCFI grows much slower than that of TCS and TCFA when the size of the database network increases.
The sets of candidate patterns of TCS and TCFA are generated by random combining the qualified patterns of detected maximal $(\mathbf{p},\alpha)$-trusses. Since most maximal $(\mathbf{p},\alpha)$-trusses are small local subgraphs that do not intersect with each other, we can know from Proposition~\ref{Lem:gip} that combining the qualified patterns of such non-intersecting maximal $(\mathbf{p},\alpha)$-trusses generates a large number of unqualified candidate patterns for TCS and TCFA.
TCFI applies the graph intersection property in Proposition~\ref{Lem:gip} to effectively prune a large number of candidate patterns generated by combining the patterns of non-intersecting maximal $(\mathbf{p},\alpha)$-trusses, 

However, since most maximal $(\mathbf{p},\alpha)$-trusses are small local subgraphs that do not intersect with each other. combining the qualified patterns of such non-intersecting maximal $(\mathbf{p},\alpha)$-trusses generates a large number of unqualified candidate patterns for TCS and TCFA. 

This generates a lot of unqualified candidate patterns for TCS and TCFA, since most maximal $(\mathbf{p},\alpha)$-trusses are small local subgraphs that do not intersect with each other.

Since most maximal $(\mathbf{p},\alpha)$-trusses are small local subgraphs that do not intersect with each other, combining the qualified patterns of such non-intersecting maximal $(\mathbf{p},\alpha)$-trusses generates a large number of unqualified candidate patterns for TCS and TCFA. 
However, such unqualified candidate patterns are efficiently pruned by TCFI using the graph intersection property in Proposition~\ref{Lem:gip}, thus the set of candidate patterns of TCFI grows much slower than that of TCS and TCFA when the size of the database network increases.
As a result, TCFI achieves the best scalability and is more than two magnitudes faster than TCS and TCFA in large database networks.
}

\nop{
In Figures~\ref{Fig:scalability_results}(c)-(d),~\ref{Fig:scalability_results}(g)-(h) and~\ref{Fig:scalability_results}(k)-(l), we show the average number of vertices and edges in detected maximal $(\mathbf{p},\alpha)$-trusses by NV/NP and NE/NP, respectively. 
The trends of the curves of NV/NP and NE/NP are different in different database networks. 
This is because each database network is sampled by conducting breath first search from a randomly selected seed vertex, 
and the distributions of maximal $(\mathbf{p},\alpha)$-trusses are different in different database networks.
If more smaller maximal $(\mathbf{p},\alpha)$-trusses are sampled earlier than larger maximal $(\mathbf{p},\alpha)$-trusses, NV/NP and NE/NP increase when the number of sampled edges increases.
In contrast, if more smaller maximal $(\mathbf{p},\alpha)$-trusses are sampled later than larger maximal $(\mathbf{p},\alpha)$-trusses, NV/NP and NE/NP decrease.
We can also see that the average numbers of vertices and edges in detected maximal $(\mathbf{p},\alpha)$-trusses are always small. 
This demonstrates that most maximal $(\mathbf{p},\alpha)$-trusses are small local subgraphs in a database network. 
Such small subgraphs in different local regions of a large sparse database network generally do not intersect with each other. Therefore, using the graph intersection property, TCFI can efficiently prune a large number of unqualified patterns and achieve much better scalability.
}

\nop{
Besides, since each database network is sampled by conducting breath first search from a randomly selected seed vertex. Therefore, in Figures~\ref{Fig:scalability_results}(c)-(d),~\ref{Fig:scalability_results}(g)-(h) and~\ref{Fig:scalability_results}(k)-(l), the trends of the curves depends on whether the seed vertex is located near more large maximal $(\mathbf{p},\alpha)$-trusses or more small maximal $(\mathbf{p},\alpha)$-trusses. As a result, we can see that the trends of the NV/NP curves and NE/NP curves are different on different data sets. 
}

\nop{
when the number of sampled edges increase, the average number of vertices and edges in detected maximal $(\mathbf{p},\alpha)$-trusses does not increase much . This demonstrates that most maximal $(\mathbf{p},\alpha)$-trusses are small local subgraphs in database network.

~\ref{Fig:scalability_results}(g)-(h)

As it is shown in Figures~\ref{Fig:scalability_results}(c)-(d), when the number of sampled edges increase, the NV/NP and NE/NP performances on the database network of BK first increase, then become stable. 

Figures~\ref{Fig:scalability_results}(c)-(d),~\ref{Fig:scalability_results}(g)-(h) and~\ref{Fig:scalability_results}(k)-(l) show the NV/NP and NE/NP performances of all compared methods. Here, NV/NP and NE/NP are the average number of vertices and edges, respectively, of all the detected maximal $(\mathbf{p},\alpha)$-trusses.

The NP, NV and NE performances of all compared methods are shown in Figures~\ref{Fig:scalability_results}(b)-(d),~\ref{Fig:scalability_results}(f)-(h) and~\ref{Fig:scalability_results}(j)-(l), respectively.
As it is shown, when the number of sampled edges increases, the NP, NV and NE performances of all compared methods increase. This is because increasing the size of database network increases the number of maximal $(\mathbf{p},\alpha)$-trusses. 
Recall that NP is also the number of detected maximal $(\mathbf{p},\alpha)$-trusses, 

We can also see that NP, NV and NE do not grow exponentially with respect to the number of sampled edges. 
This indicates that most maximal $(\mathbf{p},\alpha)$-trusses are small local subgraphs in the database network. 
}

\nop{
Interestingly, in Figure~\ref{Fig:scalability_results}(a), TCFA costs even more time than TCS. This is because

 when $\alpha=0$, the candidate pattern set of $TCFA$ is larger than the candidate pattern set of $TCS$ in the database network of BK.

the length-2 candidate pattern set $\mathcal{M}^2$ of TCFA is the power set of the length-1 qualified pattern set $\mathcal{P}^1$. Since the candidate pattern set $\mathcal{P}$ is obtained by filtering out all patterns whose maximum pattern frequency is smaller than threshold $\epsilon$, 
}

\nop{
 when the size of the database network increases, the volumes of the candidate pattern set of TCFI grows much slower than that of TCS and TCFA.

the candidate pattern set $\mathcal{P}$ of TCS and the length-2 candidate pattern set $\mathcal{M}^2$ of TCFA both grow much faster than the candidate pattern set of TCFI.

However, for TCFI, since most theme communities are small local subgraphs that do not intersect with each other, the number of candidate patterns generated by TCFI grow much slower than that of TCS and TCFA.
As a result, TCFI achieves the best scalability and is more than two magnitudes faster than TCS and TCFA in large database networks.
}

\nop{
\mc{
\begin{enumerate}
\item The time cost of all compared methods grows then the number of edges increases. TCFI is more than two orders of magnitudes faster the other methods.
\item The time costs of the other methods grows exponentially, however, TCFI grows linearly. For TCFA, the number of patterns in $\mathcal{P}^1$ matters a lot. The size of such pattern sets grows  when more and more vertex databases are introduced, and the size of $\mathcal{M}^2$ grows exponentially.
For TCFI, what really matters is how the patterns distribute in the vertex databases of the database graph, TCFI is not quite affected by the number of length-1 patterns in $\mathcal{P}^1$.
\item The number of patterns grow linearly with respect to the size of the database network, this is because most theme community exist in the local region of the database network, vertices from different local regions generally does not form a good theme community. Therefore, introducing more vertices and edges does not increase the number of theme communities exponentially.
\item The number of edges also grows linearly due the local property of theme community.
\item The number of vertices also grows linearly for the same reason.
\end{enumerate}
}

Figure~\ref{Fig:scalability_results}(a) shows the time costs of all compared methods in the database network of BK.
As it is shown, TCFI is more than two orders of magnitudes faster than TCFA.
The reason is that the length-2 candidate pattern set $\mathcal{M}^2$ of TCFA is the power set of the length-1 pattern set $\mathcal{P}^1$, 

Apriori-like pattern generation employed by TCFA generates exponential number of candidate patterns.

In Figure~\ref{Fig:scalability_results}(a), TCFI is two orders of magnitudes faster than TCFA.
The reason is that the Apriori-like pattern generation employed by TCFA generates exponential number of candidate patterns. Thus, when the number of sampled edges increases, the size of length-1 pattern set $P^1$ increases and the number of candidates generated by TCFA grows exponentially. 
However, for TCFI, since the graph intersection based pruning of TCFI efficiently prune a large proportion of unqualified candidate patterns, TCFI generates much less number of candidate patterns than TCFA. 
Thus, when the number of edges increases, the time cost of TCFI grows much slower than TCFA.
Besides, due to the large candidate pattern set $P$, TCS ($\epsilon=[0.1, 0.2, 0.3]$) cost much more time than TCFI.
Similar results can also be observed in Figure~\ref{Fig:scalability_results}(e) and Figure~\ref{Fig:scalability_results}(i).

Figure~\ref{Fig:scalability_results}(b)-(d) show the NP, NV and NE performances on BK. Both TCFA and TCFI produce the exact results. However, due to the trade-off effect of $\epsilon$, TCS ($\epsilon=[0.1, 0.2, 0.3]$) failed to produce exact results. Figure~\ref{Fig:scalability_results}(f)-(h) and Figure~\ref{Fig:scalability_results}(j)-(l) show similar experimental performances on GW and AMINER, respectively.
}

\begin{table}[t]
\caption{The performance of indexing of TC-Tree on the full data sets of BK, GW and AMINER.}
\label{Table:iptct}
\vspace{0.5mm}
\centering
\begin{tabular}{| c | c | c | c | }
\hline
  	    & Indexing Time           & Memory            &   \#Nodes	  		\\ \hline
BK        &  215 \;\;\;\, seconds   &  0.35 \;\,GB        &   18,581  		         \\ \hline
GW       &  1,812 \;\;seconds    &  2.66 \;\,GB         &  11,750,761 	                 \\ \hline
AMINER    &  41,958 seconds      &  22.84 GB          &   122,337,700  	         \\ \hline
\end{tabular}
\end{table}

\newcommand{\parawidthExpfour}{41.5mm}
\begin{figure}[t]
\centering
\subfigure{\includegraphics[width=\parawidthExpfour]{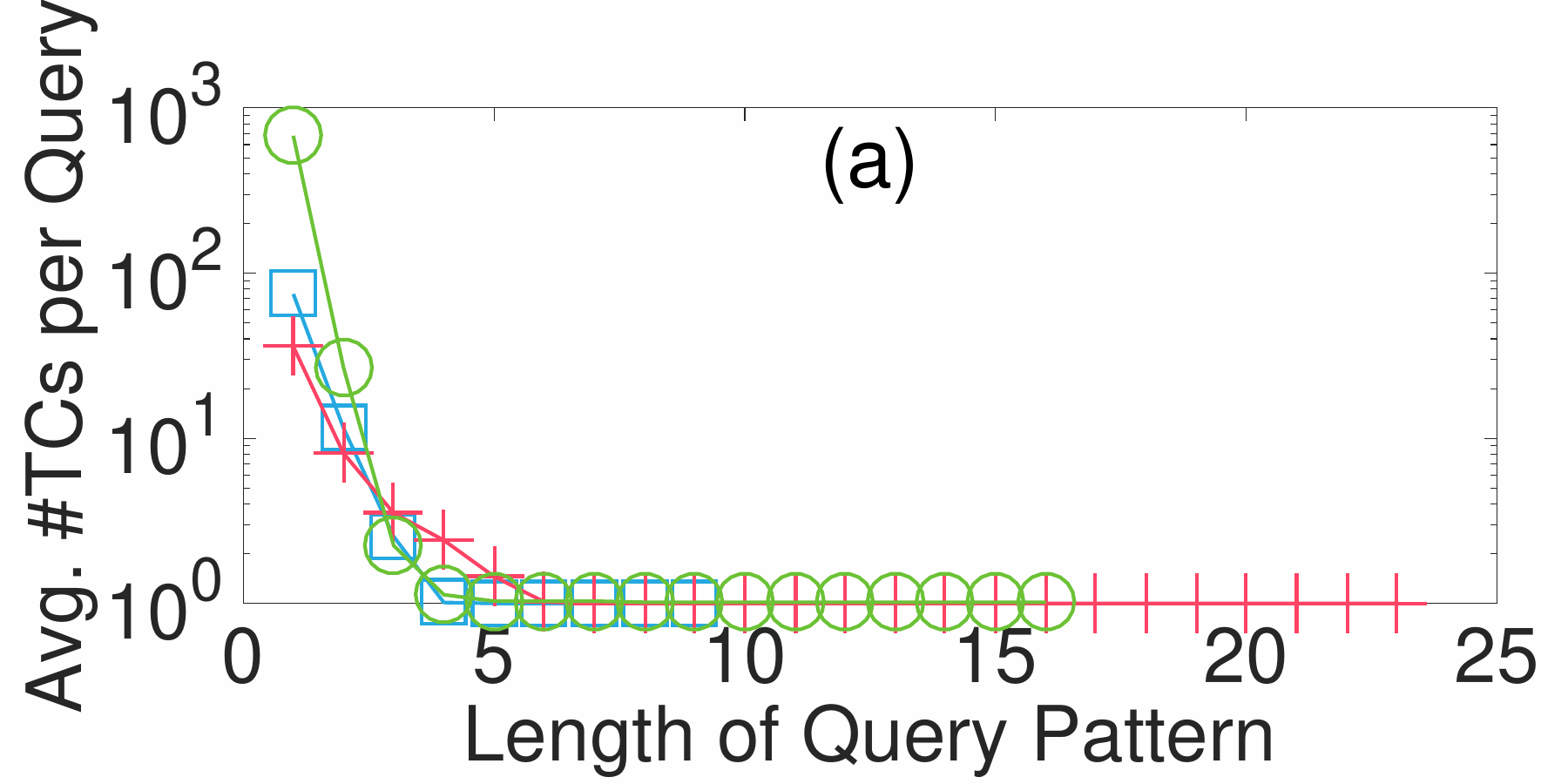}}
\subfigure{\includegraphics[width=\parawidthExpfour]{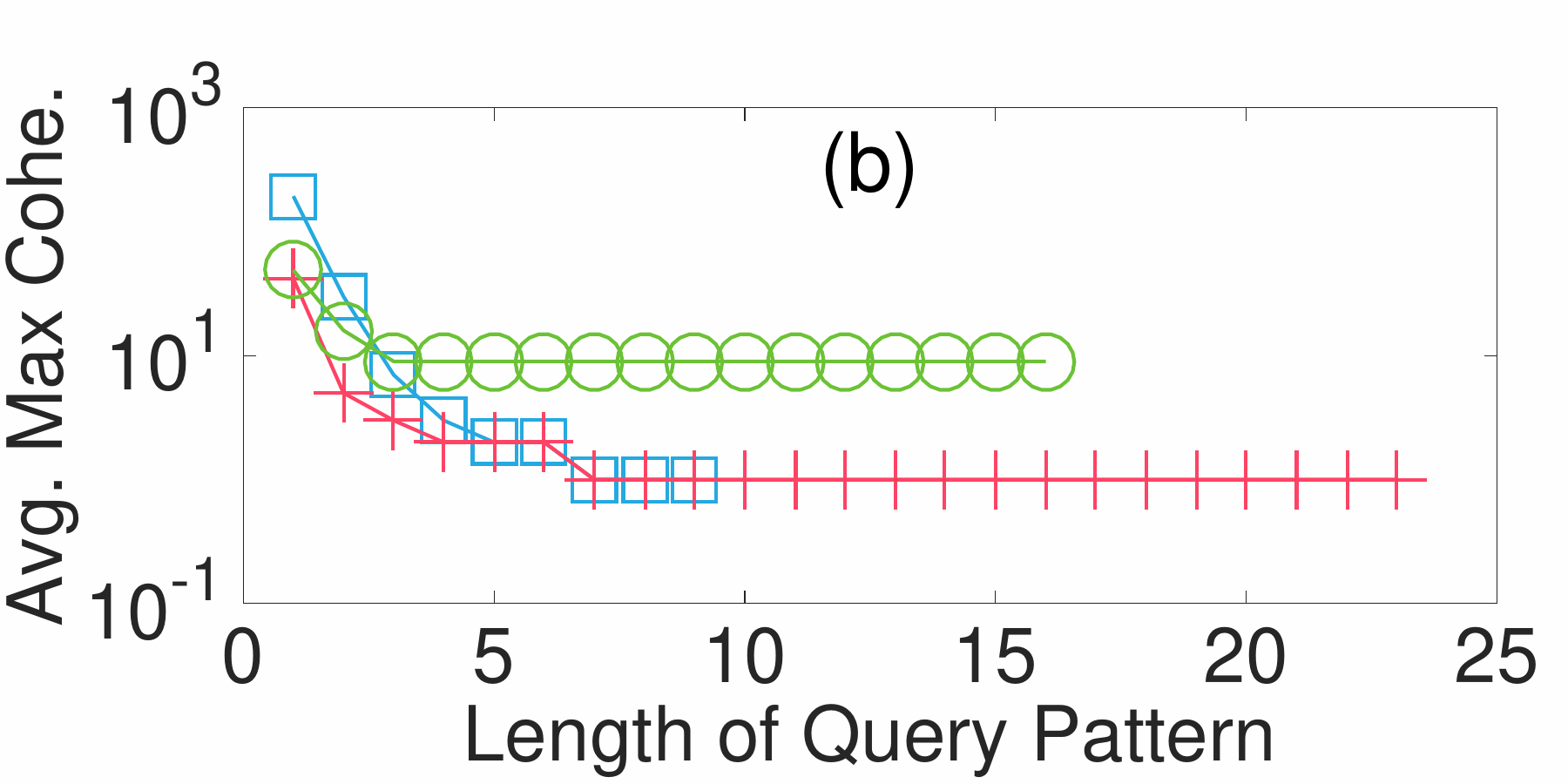}}

\vspace{-3mm}

\subfigure{\includegraphics[width=\parawidthExpfour]{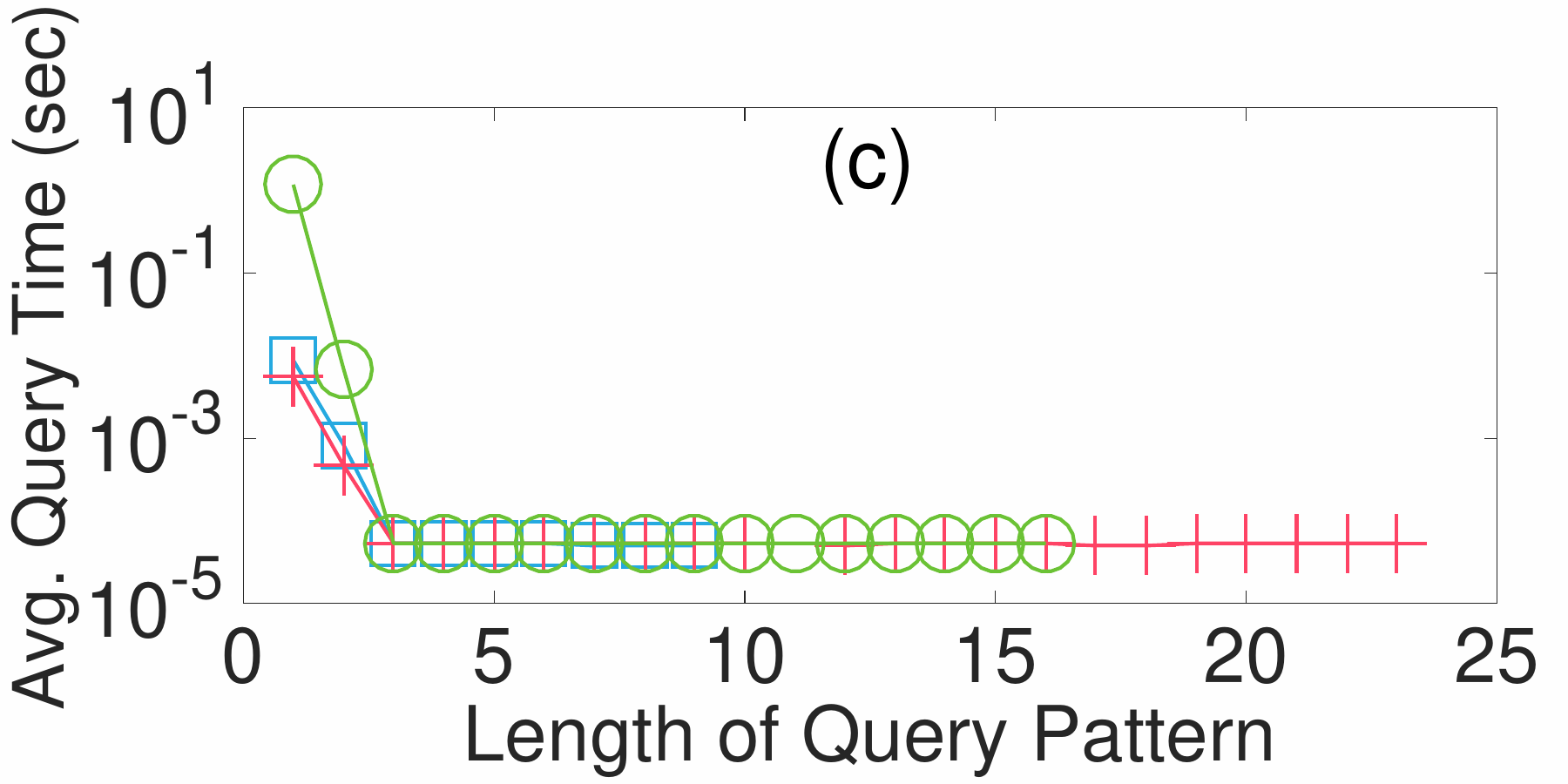}}
\subfigure{\includegraphics[width=\parawidthExpfour]{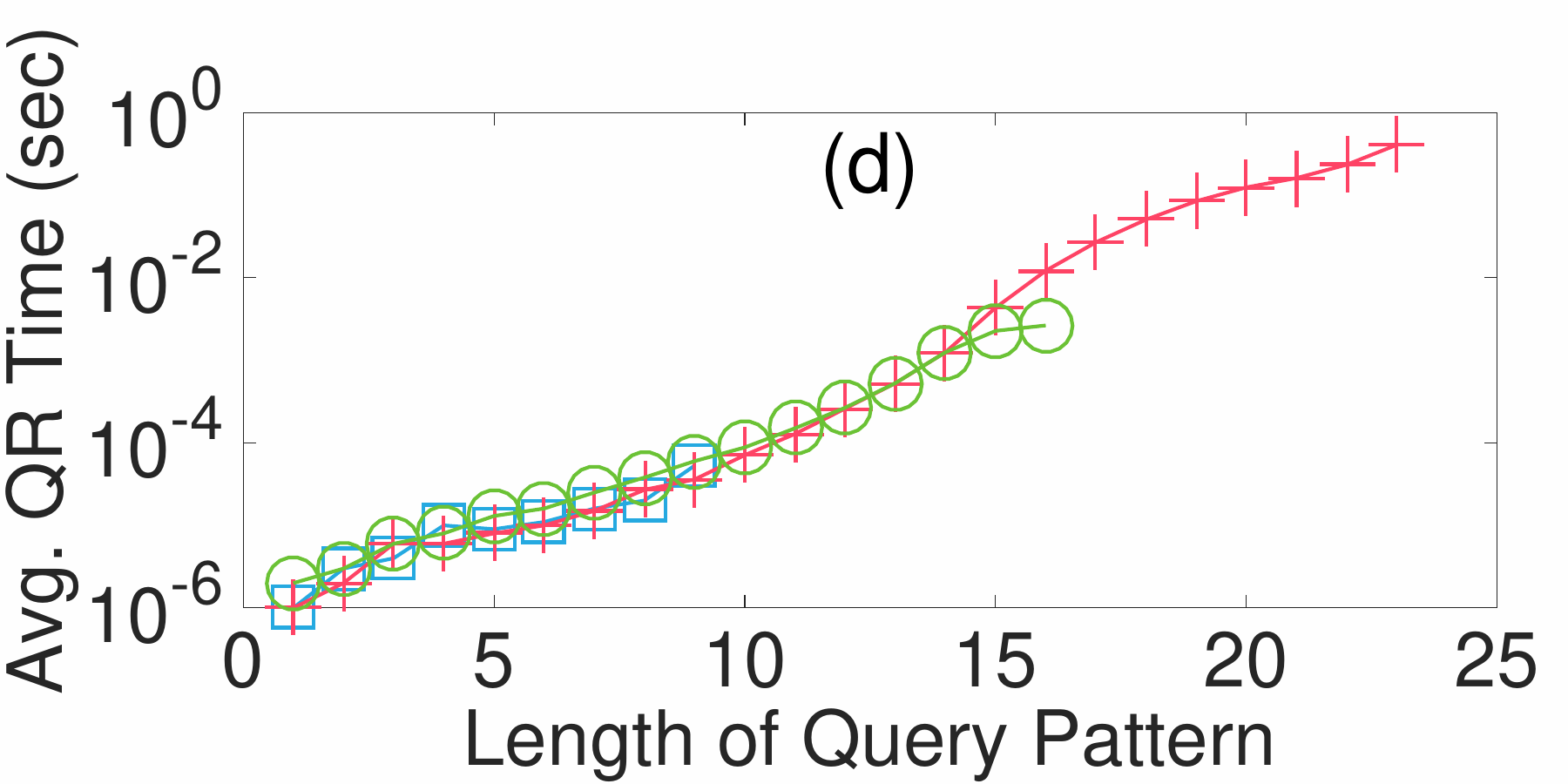}}

\vspace{-3mm}

\subfigure{\includegraphics[height=2.3mm]{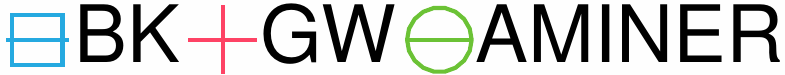}}

\vspace{-3mm}
\caption{The effect of the length of query pattern. ``Avg.'', ``Cohe.'' and ``QR'' are abbreviations for ``Average'', ``Cohesiveness'' and ``Query Recommendation'', respectively. }
\label{Fig:patlen_effect}
\end{figure}

\subsection{Performance of Theme Community Tree}
\label{Sec:eotci}
In this subsection, we analyze the performance of Theme Community Tree (TC-Tree), including the indexing efficiency of TC-Tree, the effect of the length of query pattern and the distribution of retrieved theme communities.

The indexing efficiency of TC-Tree in all DBNs is shown in Table~\ref{Table:iptct}. ``Indexing Time'' is the cost to build a TC-Tree; ``Memory'' is the peak memory usage when building a TC-Tree; ``\#Nodes'' is the number of nodes in a TC-Tree.

Building a TC-Tree is efficient in both Indexing Time and Memory.
For the largest DBN AMINER, TC-Tree scales up well and indexes more than 120 million nodes.

\nop{
\begin{figure}[t]
\centering
\includegraphics[width=72mm]{Figs/EXP3_2_QueryRecommendation_QRE_BK_GW_DBLP.pdf}
\caption{The time cost of query recommendation with respect to the length of query pattern.}
\label{Fig:qre_results}
\end{figure}
}

\nop{
We evaluate the performance of the TC-Tree querying method (Algorithm~\ref{Alg:qtct}) under two settings: 
1) Query by Alpha (QBA), which queries a TC-Tree with a threshold $\alpha_\mathbf{q}$ by setting $\mathbf{q}=S$.
2) Query by Pattern (QBP), which queries a TC-Tree with pattern $\mathbf{q}$ by setting $\alpha_\mathbf{q}=0$. The results are shown in Figure~\ref{Fig:query_performances}, where ``Query Time'' is the cost of querying a TC-Tree, ``Retrieved Nodes (RN)'' is the number of nodes retrieved from a TC-Tree.

To evaluate how QBA performance changes when $\alpha_\mathbf{q}$ increases, we use $\alpha_\mathbf{q}\in\{0.0, 0.1, 0.2, \cdots, \alpha_\mathbf{q}^*\}$, which is a finite sequence that starts from 0.0 and is increased by 0.1 per step until Algorithm~\ref{Alg:qtct} returns $\emptyset$. $\alpha_\mathbf{q}^*$ is the largest $\alpha_\mathbf{q}$ when Algorithm~\ref{Alg:qtct} does not return $\emptyset$. 
For each $\alpha_\mathbf{q}$, the Query Time is the average of 1,000 runs.

In Figures~\ref{Fig:query_performances}(a)-(d), when $\alpha_\mathbf{q}$ increases, both RN and Query Time decrease. 
This is because, a larger $\alpha_\mathbf{q}$ reduces the number of maximal $(\mathbf{p},\alpha)$-trusses, thus decreases RN and Query Time. Interestingly, in Figure~\ref{Fig:query_performances}(c), we have $\alpha_\mathbf{q}^*=106.9$ in the DBN of AMINER. This is because the CNV2 data set~\cite{Aminer_data} contains a paper about the ``IBM Blue Gene/L super computer'' that is co-authored by 115 authors.

Figures~\ref{Fig:query_performances}(c)-(d) show the excellent QBA performance of the proposed querying method (Algorithm~\ref{Alg:qtct}) on the large DBNs of AMINER and SYN. The proposed querying method can retrieve 1 million maximal $(\mathbf{p},\alpha)$-trusses within 1 second.
}

To analyze the effect of the length of query pattern on the querying performance, we generate a set of query patterns $\mathcal{B}$ by randomly sampling 1,000 nodes from the TC-Tree and using the patterns of the sampled nodes as query patterns. 
Denote by $\mathcal{B}^l=\{\mathbf{q} \mid \mathbf{q}\in\mathcal{B}, |\mathbf{q}|=l\}$ the set of length-$l$ query patterns in $\mathcal{B}$, and by $L$ the maximum length of all query patterns in $\mathcal{B}$, we have $\mathcal{B} = \mathcal{B}^1\cup \cdots \cup \mathcal{B}^L$.
We use the query patterns in $\mathcal{B}$ to query the TC-Tree and analyze the following querying performance.

The ``Avg.\ \#TCs per Query'' in Figure~\ref{Fig:patlen_effect}(a) is the average number of theme communities retrieved by the query patterns in each set of $\mathcal{B}^l, l\in\{1,\ldots,L\}$.
In Figure~\ref{Fig:patlen_effect}(a), the Avg.\ \#TCs per Query decreases quickly when the length of query pattern increases, this verifies our analysis in Section~\ref{Sec:tcfp} that a longer pattern is less likely to induce a theme community.
For query patterns with length larger than 1, the average number of retrieved theme communities is less than 100.  It is not a heavy burden for users to analyze such a small number of theme communities.

In Figure~\ref{Fig:patlen_effect}(b), the ``Avg.\ Max Cohe.'' is the average cohesiveness of the top-1 theme communities retrieved by the query patterns in each set of $\mathcal{B}^l, l\in\{1,\ldots,L\}$. The Avg.\ Max Cohe.\ decreases when the length of query pattern increases, because a longer pattern has a lower frequency, which limits the cohesiveness of the corresponding theme communities.

Figure~\ref{Fig:patlen_effect}(c) shows the average query time for the query patterns in each set of $\mathcal{B}^l, l\in\{1,\ldots,L\}$. 
For all DBNs, the average query time is less than 1 second when the length of pattern is 1, and decreases quickly to less than $0.1$ second when the length of query pattern increases.
The query time is dominated by the time to compute the theme communities and their cohesiveness using the decomposed maximal $(\mathbf{p}, \alpha)$-truss stored in a node of the TC-Tree.
When the length of query pattern increases, the size of the maximal $(\mathbf{p}, \alpha)$-truss reduces, thus the query time decreases.

Figure~\ref{Fig:patlen_effect}(d) shows the average query recommendation time for the query patterns in each set of $\mathcal{B}^l, l\in\{1,\ldots,L\}$. 
When the length of query pattern increases, the average query recommendation time increases, because a longer query pattern requires Algorithm~\ref{Alg:qr} to visit more TC-Tree nodes.
The query recommendation is highly efficient, and the time cost is less than 1 second on all DBNs.

\newcommand{\parawidthExpthree}{41.5mm}
\begin{figure}[t]
\vspace{-2mm}
\centering
\subfigure{\includegraphics[width=\parawidthExpthree]{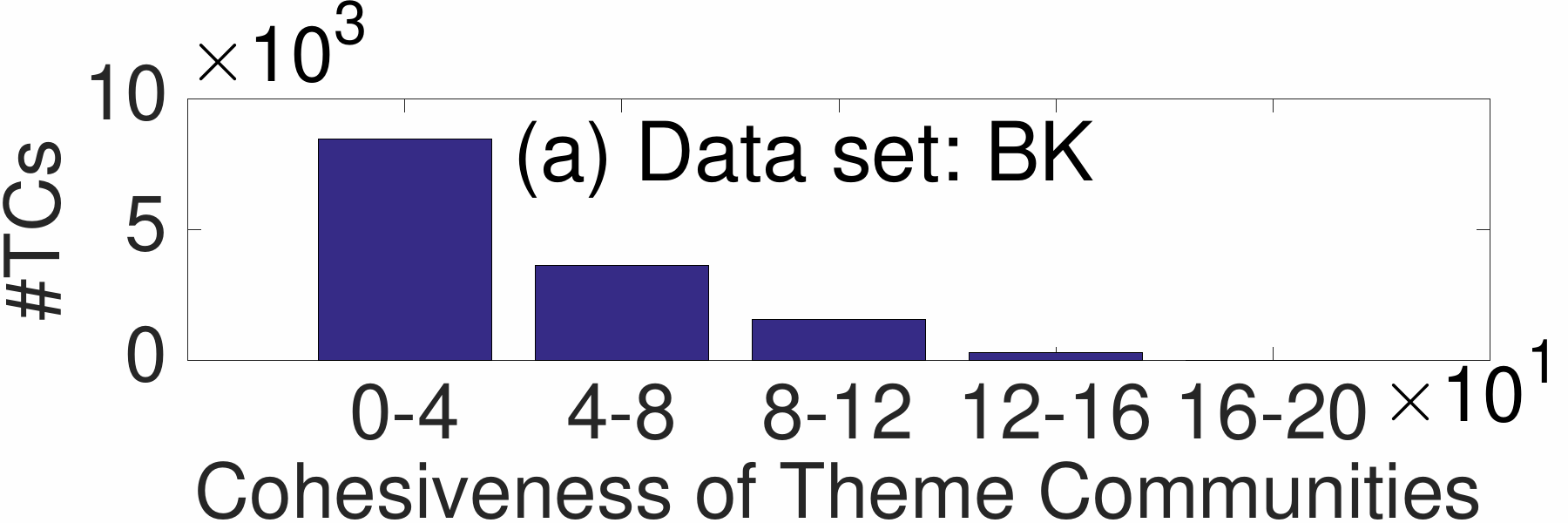}}
\subfigure{\includegraphics[width=\parawidthExpthree]{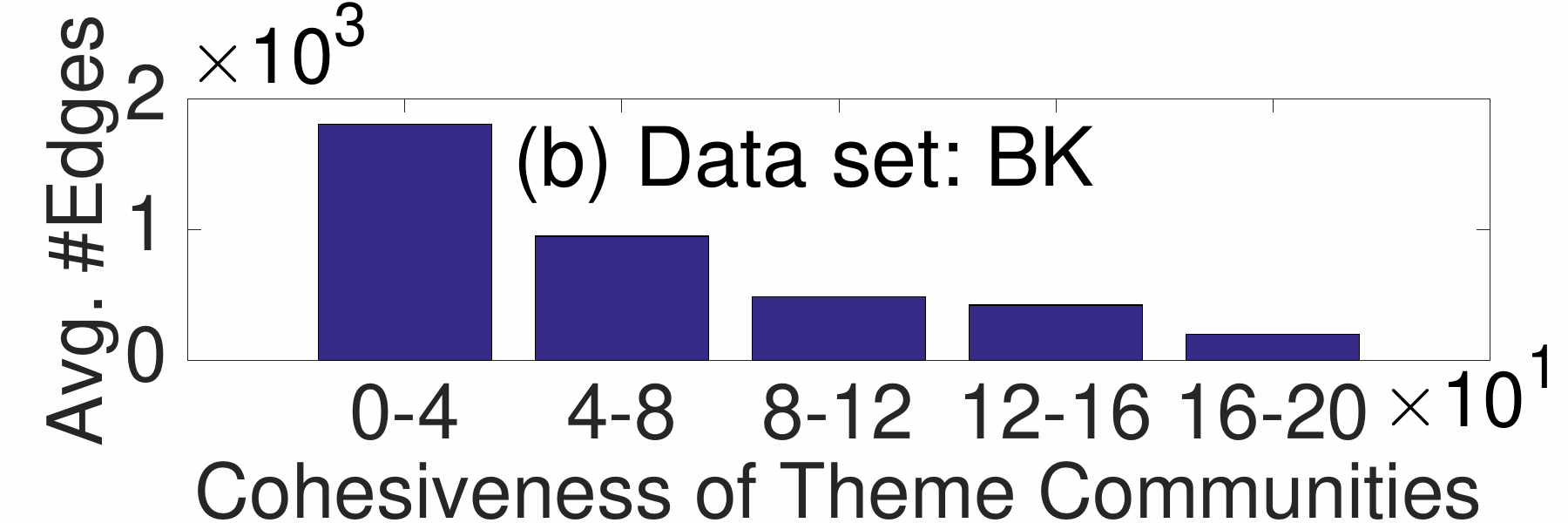}}

\vspace{-3mm}
\subfigure{\includegraphics[width=\parawidthExpthree]{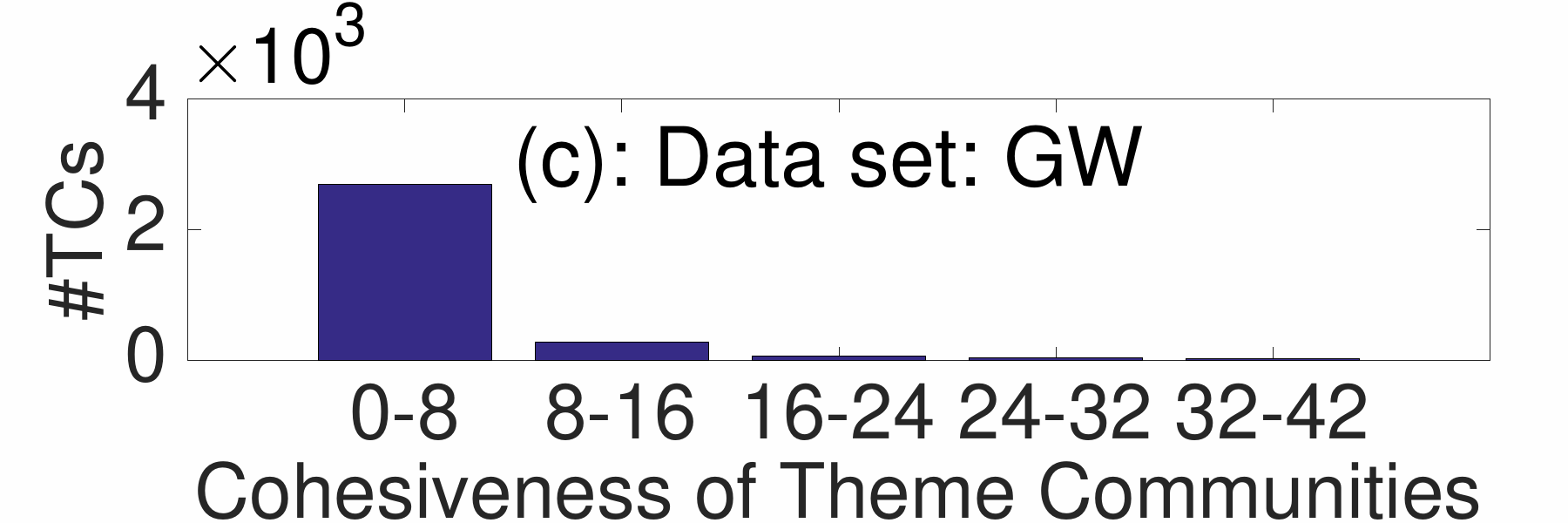}}
\subfigure{\includegraphics[width=\parawidthExpthree]{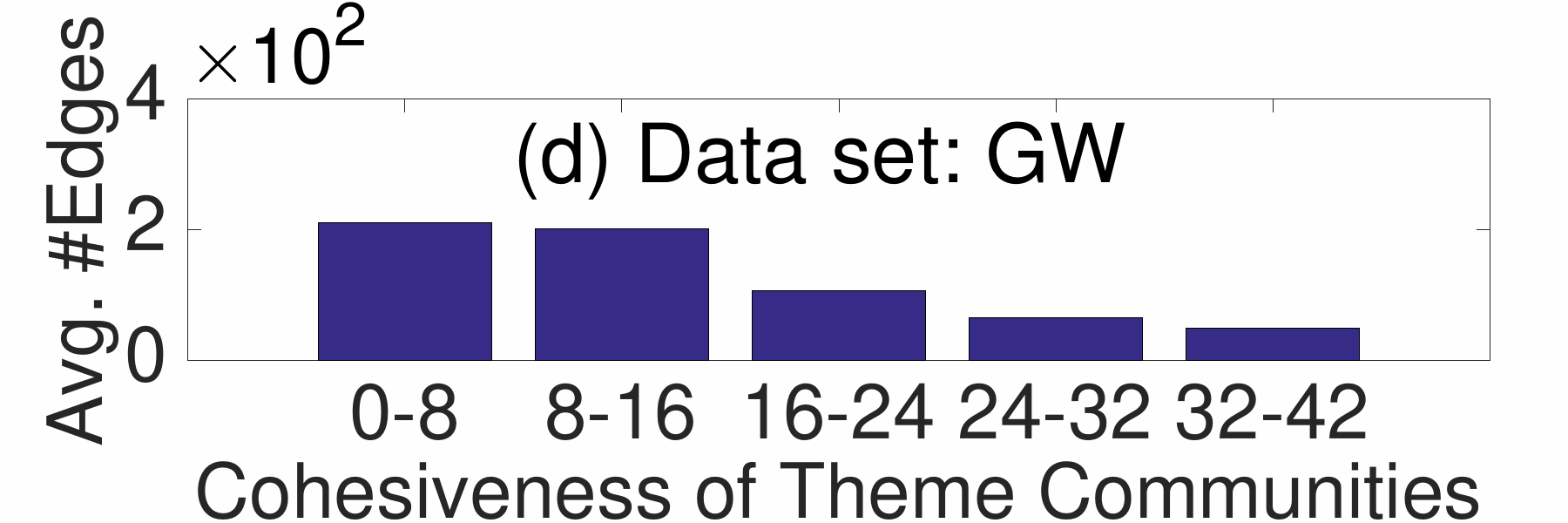}}

\vspace{-3mm}
\subfigure{\includegraphics[width=\parawidthExpthree]{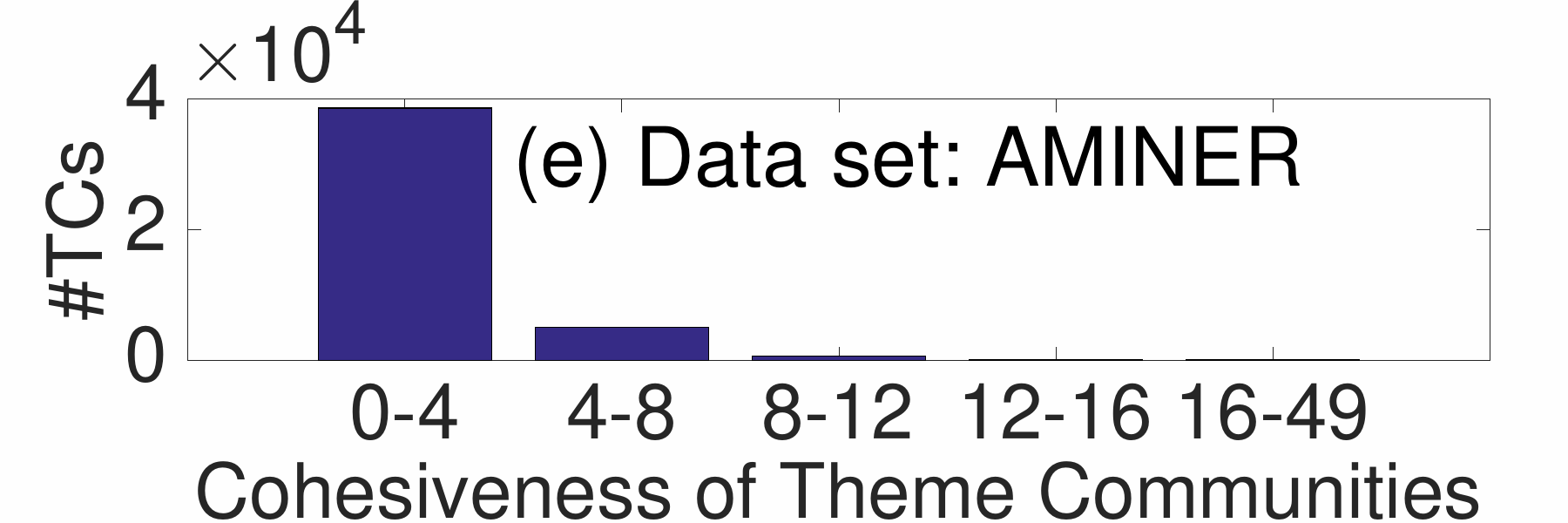}}
\subfigure{\includegraphics[width=\parawidthExpthree]{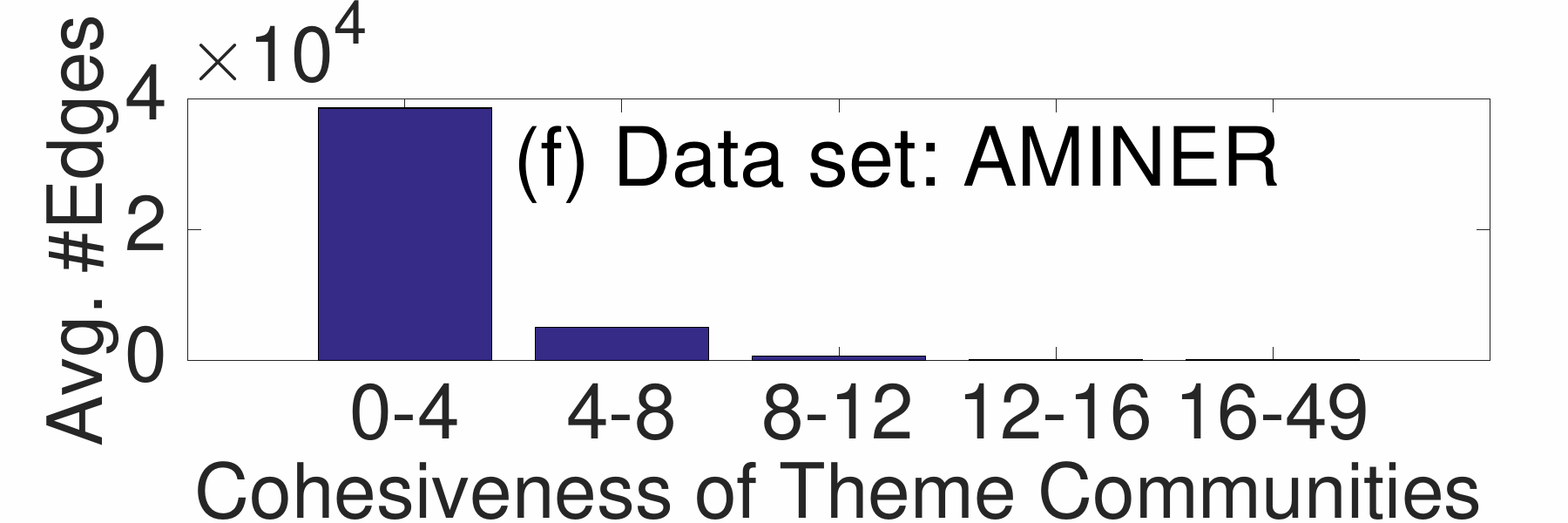}}

\vspace{-2mm}
\caption{The distribution of retrieved theme communities. Each bin on the x-axis represents an interval of the cohesiveness of theme communities. ``\#TCs'' is the number of theme communities in a bin. ``Avg. \#Edges'' is the average number of edges of all theme communities in a bin.}
\label{Fig:hists}
\end{figure}

Next, we query the TC-Tree by the query patterns in $\mathcal{B}$, and show the distribution of all retrieved theme communities in Figure~\ref{Fig:hists}. A large proportion of the retrieved theme communities have very small cohesiveness, and contain a larger number of edges. 
These theme communities are trivially induced by the patterns that are contained in a large number of vertex databases with low pattern frequencies, and they have low ranks in the retrieved ranked list of theme communities due to their small cohesiveness.

\nop{
Therefore, these theme communities have very low ranks in the retrieved rank list of theme communities due to their small cohesiveness, and they are less interesting than the relatively smaller theme communities that have large cohesiveness.
}

There are a lot of theme communities that have large cohesiveness and contain a small number of edges.
Those theme communities are usually induced by the patterns that frequently appear in the vertex databases of a small number of strongly connected vertices.
As demonstrated by the case study in Section~\ref{Sec:cs}, these theme communities are ranked high in the retrieved ranked list and often reveal interesting patterns and communities in DBNs.

\nop{
Figures~\ref{Fig:hists}(b), \ref{Fig:hists}(d) and \ref{Fig:hists}(f) show the average number of edges for the theme communities in each interval of cohesiveness. We can see that the theme communities with small cohesiveness contain large number of edges. 
}

\nop{
In each of the Figures~\ref{Fig:patlen_effect}(a)-(d), the y-value of the point with x-value equal to $l$ is the average querying performance of all query patterns in $\mathcal{B}^l$.
}

\nop{
we divide $\mathcal{B}$ into multiple subsets $\mathcal{B}^1, \ldots, \mathcal{B}^L$ such that $\mathcal{B}^l=\{\mathbf{q} \mid \mathbf{q}\in\mathcal{B}, |\mathbf{q}|=l\}$ contains all the length-$l$ patterns in $\mathcal{B}$. Next, we use each $\mathcal{B}^l, l\in\{1, \ldots, L\}$ to query the TC-Tree and show the corresponding performance in Figure~\ref{Fig:patlen_effect}.

we use the query patterns in $\mathcal{B}$ to query the TC-Tree and show the querying performance in Figure~\ref{Fig:patlen_effect}. 
For Figures~\ref{Fig:patlen_effect}(a)-(d), the y-value of each point is computed using the querying results of all query patterns with the same length.
}

\nop{
As it is shown, for all database networks, the number average number of retrieved theme communities is less than 1,300 when the length of query pattern equals 1, and quickly drops to less than 100 when the length of query patterns increases.
}
\nop{
As it is shown, the number of theme communities drops when the length of query pattern increases, because a longer pattern is less likely to induce a theme community.
For query patterns with length larger than 1, the average number of retrieved theme communities is less than 100, which is easy to be handled by the user.
}

\nop{
Figure~\ref{Fig:patlen_effect}(b) shows how the length of query pattern affects the average maximum cohesiveness of retrieved theme communities. 

The y-value for each point in Figure~\ref{Fig:patlen_effect}(a) is computed by 
}

\nop{
Figures~\ref{Fig:query_performances}(e)-(h) show how the performance of QBP changes when query pattern length increases. 
To generate query patterns with different length, we randomly sample 1,000 nodes from each layer of the TC-Tree and use the patterns of the sampled nodes as query patterns. 
Setting the query pattern length larger than the maximum depth of the TC-Tree does not make sense, since such patterns do not correspond to any maximal $(\mathbf{p},\alpha)$-trusses in the database network. Each Query Time reported is an average of 1,000 runs using different query patterns of the same length.
As shown in Figures~\ref{Fig:query_performances}(e)-(h), both RN and Query Time increase when the Query Pattern Length increases. 
This is because querying the TC-Tree with a longer query pattern visits more TC-Tree nodes and retrieves more maximal $(\mathbf{p},\alpha)$-trusses.

In summary, TC-Tree is scalable in both time and memory when indexing large database networks.
}
\nop{
According to the results in Figure~\ref{Fig:query_performances}, the proposed querying method (Algorithm~\ref{Alg:qtct}) takes less than a second to retrieve 1 million maximal $(\mathbf{p},\alpha)$-trusses from the TC-Tree.
}

\subsection{Comparison with Two Vertex Attributed Network Methods}
\label{sec:vanc}
In this subsection, we evaluate the theme community detection performance of TCFI and two community detection methods for Vertex Attributed Networks (VAN), such as CESNA~\cite{yang2013community} and SCI~\cite{wang2016semantic}. 
We set $\alpha=0$ for TCFI, and use the default settings of CESNA and SCI, respectively. 
Since SCI cannot efficiently process large networks, we sample one DBN with 10,000 edges from each of BK, GW and AMINER using the sampling method in Section~\ref{Sec:eop}.
Both CESNA and SCI cannot directly process a DBN, thus we extend them by first converting a DBN into a VAN, then applying them to detect communities from the VAN.

We adopt the following three types of \textbf{extensions}.

\textbf{Extension A}: we convert each vertex database into a set of items by taking the union of all transactions in it.
The pattern of a detected community is a set of items, which is used as the theme of the community.

\textbf{Extension B}: we regard a transaction as a set-valued item, and treat each vertex database as a set of set-valued items.
The pattern of a detected community is a set of transactions, where each transaction is used as a theme of the community.

\textbf{Extension C}: we convert a DBN in the same way as Extension B. Instead of using each transaction as a theme, we treat the union of the set of transactions as the theme of the detected community.

Denote by $D$ a detected community with a theme $\mathbf{p}$. 
We measure the quality of $D$ by the following metrics.

\textbf{Average Edge Strength (AES)}: 
denote by $\mathcal{E}$ and $\mathcal{V}$ the sets of edges and vertices of $D$, respectively. 
The \textbf{edge strength} of an edge $e_{ij}\in \mathcal{E}$ is measured by
$ES_{ij}=f_i(\mathbf{p})\times f_j(\mathbf{p})$. 
The AES of $D$ is 
\begin{equation}\nonumber
	AES = \frac{2\sum_{e_{ij}\in\mathcal{E}} ES_{ij}}{|\mathcal{V}|(|\mathcal{V}|-1)},
\end{equation}
which is the ratio between the sum of edge strength of all edges in $\mathcal{E}$ and the number of edges of a clique containing a number of $|\mathcal{V}|$ vertices. A large AES means the vertices in $\mathcal{V}$ have a high frequency of $\mathbf{p}$, and are densely connected by the edges in $\mathcal{E}$.
Obviously, a larger AES indicates a higher quality of the theme community.

\textbf{Cohesiveness (COHE)}: as defined in Definition~\ref{Def:cohe_theme_community}, the cohesiveness of $D$ is the minimum cohesion of its edges.
If COHE is 0, then either at least one vertex of $D$ does not contain $\mathbf{p}$ in its vertex database, or at least one edge of $D$ is not contained in any triangle. Thus, we say $D$ is an \textbf{invalid} theme community if its COHE is 0, and say $D$ is \textbf{valid} if its COHE is positive.
Obviously, a larger COHE indicates a higher quality of $D$.

\begin{figure}[t]
\centering
\subfigure{\includegraphics[width=\parawidthExpthree]{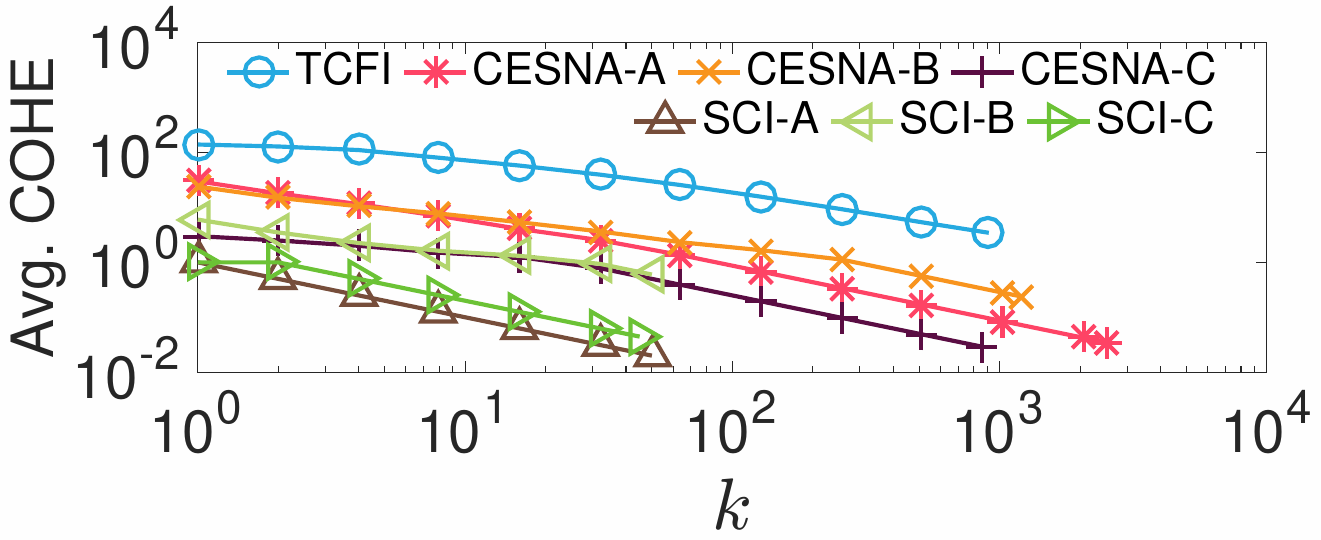}}
\subfigure{\includegraphics[width=\parawidthExpthree]{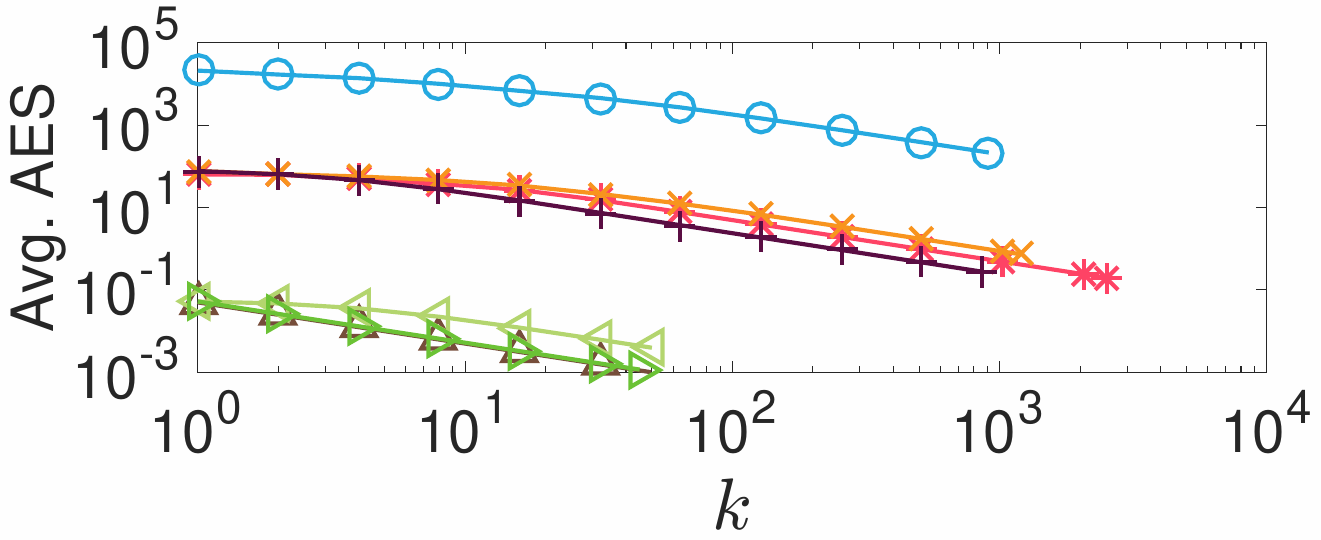}}

\vspace{-3mm}
\subfigure{\includegraphics[width=\parawidthExpthree]{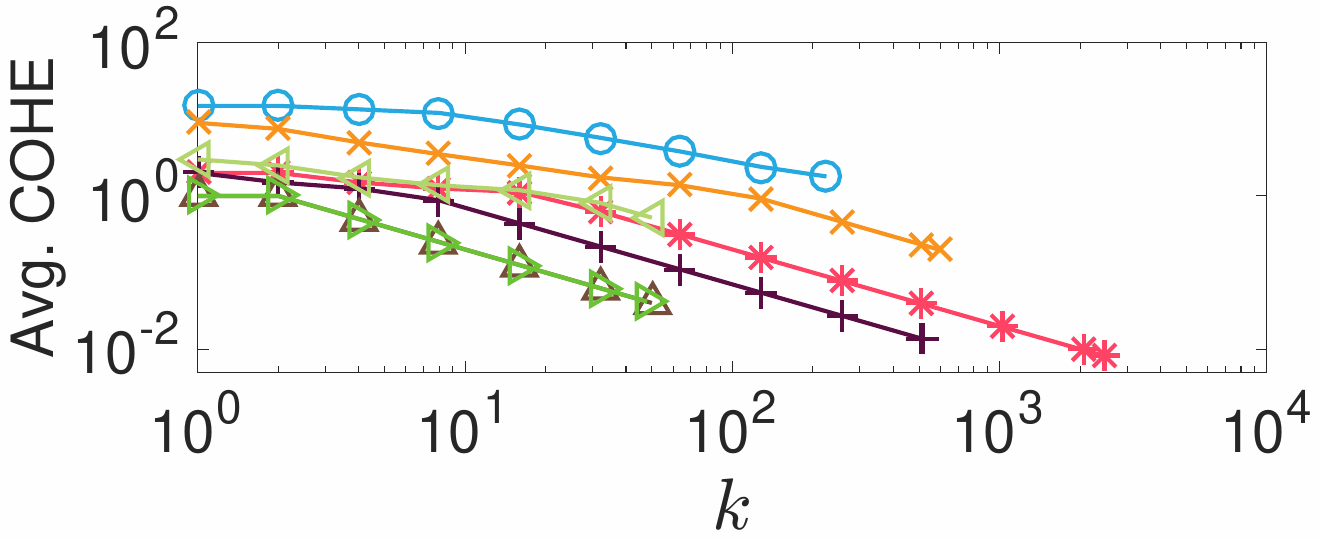}}
\subfigure{\includegraphics[width=\parawidthExpthree]{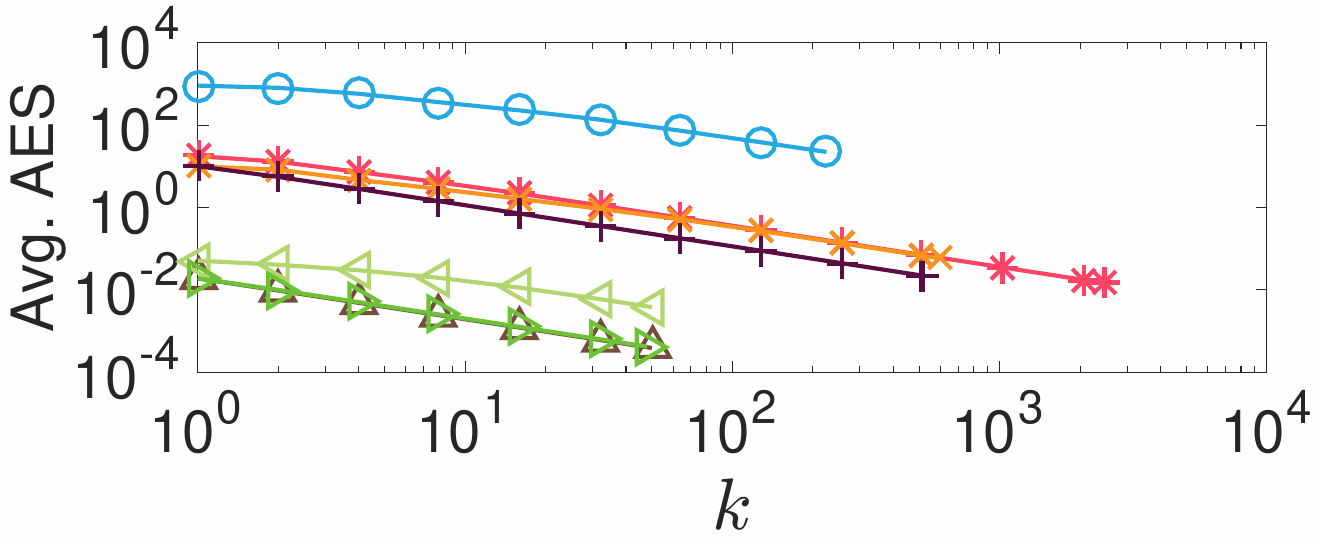}}

\vspace{-3mm}
\subfigure{\includegraphics[width=\parawidthExpthree]{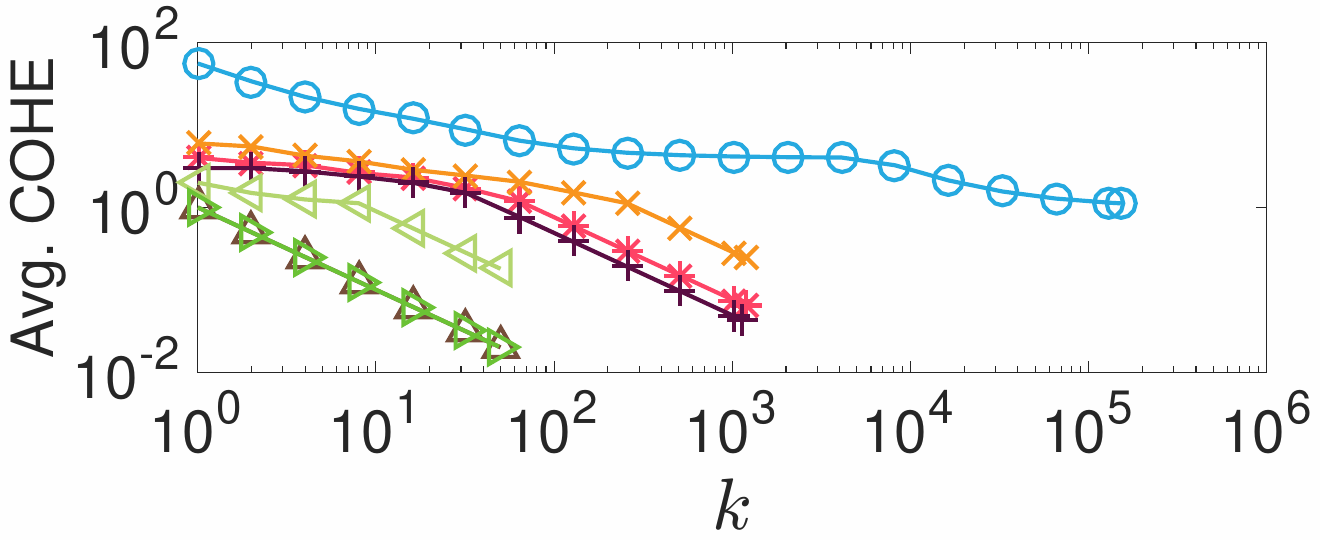}}
\subfigure{\includegraphics[width=\parawidthExpthree]{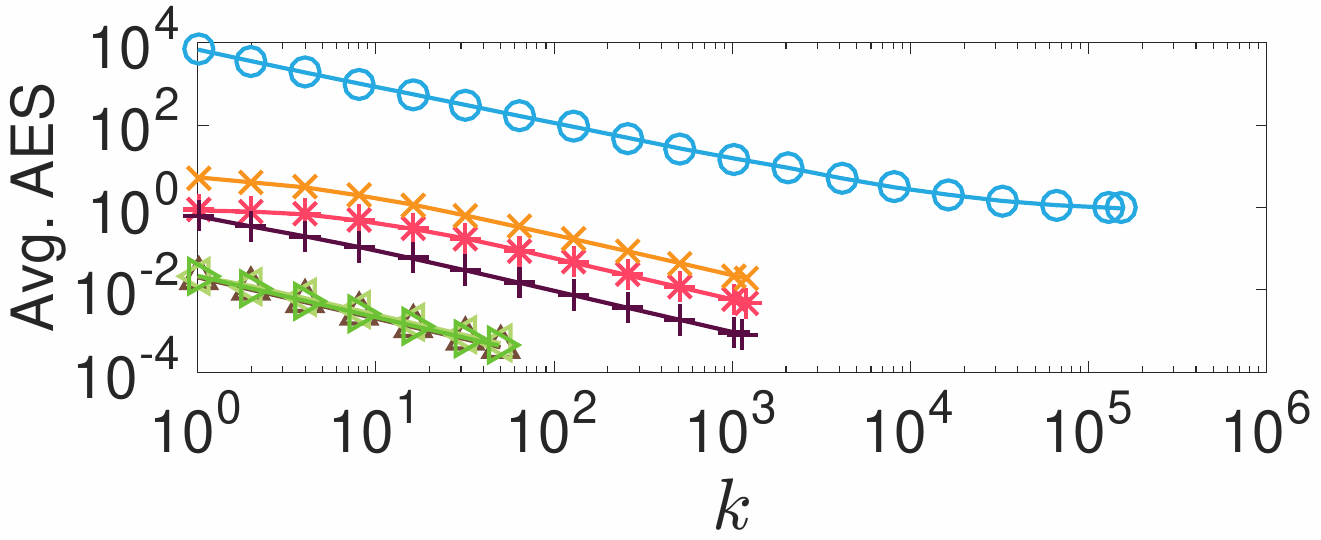}}

\vspace{-3mm}

\subfigure{\includegraphics[height=2.9mm]{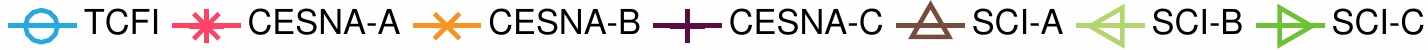}}

\vspace{-3mm}
\caption{The Average (Avg.) COHE and Avg. AES of top-$k$ detected communities.}
\label{Fig:cohe_aes}
\end{figure}

\begin{figure}[t]
\centering
\subfigure{\includegraphics[width=75mm]{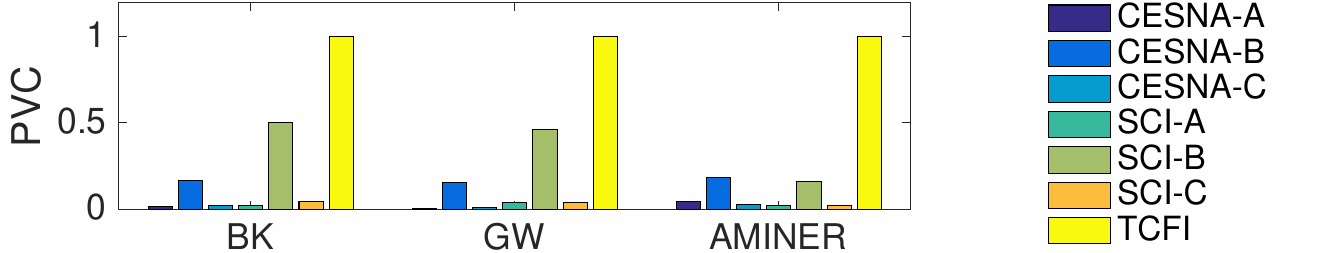}}

\vspace{-3mm}
\caption{The PVC performance.}
\label{Fig:pvc}
\end{figure}

Both AES and COHE are computed based on the absolute frequency of the theme $\mathbf{p}$.

Figure~\ref{Fig:cohe_aes} shows the Average (Avg.) COHE and the Avg. AES of the top-$k$ detected theme communities with the largest COHE and AES, respectively. Both CESNA and SCI do not achieve a good performance of Avg. COHE or Avg. AES for all types of extensions, because all the extensions lose the valuable information about item co-occurrences and pattern frequencies when converting a vertex database into a set of items.
Since TCFI effectively explores the item co-occurrence and pattern frequency of DBNs, it achieves the best performance on all data sets.

Figure~\ref{Fig:pvc} shows the \textbf{Proportion of Valid Community (PVC)} of all compared methods. Here, PVC is the proportion of valid theme communities in the set of all detected communities.
The PVC of CESNA and SCI is low for Extensions A and C, because both of them take the union of transactions, which generates a lot of new patterns that do not exist in the DBN.
These new patterns induce many detected communities, which are invalid in the DBN.
The extension B does not take the union of transactions, thus it achieves a higher PVC than Extensions A and C. However, since CESNA and SCI do not strictly require every vertex in the same community to have exactly the same pattern, nor do they require every edge of a community to be contained in one or more triangles, many communities detected by CESNA-B and SCI-B are invalid, thus their PVC is still much lower than TCFI.
The PVC of TCFI is always 1.0, which demonstrates the superior performance of TCFI in accurately finding valid theme communities.

In sum, a DBN naturally models the rich and valuable vertex information, such as item co-occurrence and pattern frequency, which is way beyond the limited descriptive power of VANs.
As a result, it is difficult to straightforwardly extend VAN-based methods, such as CESNA and SCI, to effectively enumerate the theme communities in DBNs.

\nop{
Denote by $\mathcal{D}=\{D_1, \ldots, D_m\}$ a set of detected theme communities. We measure the quality of $\mathcal{D}$ as follows.
}

\nop{
\textbf{Average COHE and Average AES}: 

\textbf{Proportion of Valid Communities (PVC)}: 
For any $D_i \in \mathcal{D}$, if the COHE of $D_i$ is 0, then $D_i$ does not contain any triangle, or at least one vertex of $D_i$ does not contain theme $\mathbf{p}$ in its vertex database. Therefore, we say $D_i$ is \textbf{invalid} if its COHE is 0, and say $D_i$ is \textbf{valid} if it has a positive COHE.
}
\nop{
\begin{equation}
	AES = \frac{2\sum_{e_{ij}\in\mathcal{E}} ES_{ij}}{|\mathcal{V}|(|\mathcal{V}|-1)}
\end{equation}
}

\nop{
,
that is, the product of the absolute frequencies of theme $\mathbf{p}$ on vertices $v_i$ and $v_j$. 
}

\nop{
Using the above converting methods, we generate the following three types of extensions for CESNA and SCI.

\textbf{Type-A}: Convert a DBN into a VAN 
}

\section{Conclusions and Future Work}
\label{sec:con}

In this paper, we tackle the novel problem of finding theme communities from DBNs. 
We first introduce the novel concept of DBN, which is a natural abstraction of many real world networks.
Then, we propose TCFI and TC-Tree that efficiently discover and index hundreds of millions of theme communities in large DBNs.
As demonstrated by extensive experiments, TCFI and TC-Tree are highly efficient and scalable.
As future work, we will extend TCFI and TC-Tree to find theme communities from edge DBNs, where each edge is associated with a database that describes relationships between vertices.

\nop{
\begin{table}[t]
\caption{The sets of keywords for Figure~\ref{Fig:case_study}.}
\centering
\label{Table:listofpat}
\begin{tabular}{|p{2mm} p{78mm}|}
\hline
$p_1:$ &  data mining, sequential pattern \\ \hline
$p_2:$ & data mining, sequencial pattern, intrusion detection  \\ \hline
$p_3:$ & data mining, search space, complete set, pattern mining \\ \hline
$p_4:$ & data mining, sensitive information, privacy protection \\ \hline
$p_5:$ & principal component analysis, linear discriminant analysis, dimensionality reduction, component analysis \\ \hline
$p_6:$ & Image retrieval, image database, relevance feedback\\ \hline
\end{tabular}
\end{table}
}

\bibliographystyle{abbrv}
\bibliography{Reference} 



\balance

\end{document}